\DeclareMathOperator{\Tr}{Tr}
\DeclareMathOperator{\diag}{diag}
\DeclareMathOperator*{\argmin}{arg\,min}
\DeclareMathOperator*{\argmax}{arg\,max}
\DeclareMathOperator{\E}{\mathbb{E}}
\DeclareMathOperator{\V}{\text{Var}}
\DeclareMathOperator{\C}{\text{Cov}}
\DeclareMathOperator{\vecM}{\text{vec}}
\DeclareMathOperator{\edist}{\stackrel{\text{\scriptsize \textit{d}}}{=}}
\DeclareMathOperator{\im}{\text{Im}}
\DeclarePairedDelimiter\abs{\lvert}{\rvert}%
\DeclarePairedDelimiter\norm{\lVert}{\rVert}%
\let\oldabs\abs
\def\abs{\@ifstar{\oldabs}{\oldabs*}}
\let\oldnorm\norm
\def\norm{\@ifstar{\oldnorm}{\oldnorm*}}
\newtheorem{assumption}{\textit{Assumption}}
\newtheorem{algorithm}{\textit{Algorithm}}
\newtheorem{proposition}{\textit{Proposition}}
\newtheorem{corollary}{\textit{Corollary}}
\newtheorem{theorem}{\textit{Theorem}}
\newtheorem{lemma}{\textit{Lemma}}
\newcommand{\Ltab}{
 \begin{minipage}{\linewidth}
   \centering
   \vspace{0.25cm}
   \captionof{table}{\textsl{The $\pi_k$ and $\eta_k$ values used to simulated $\bm{L}$ and the resulting average $\gamma_k$ ($k=1,\ldots,10$).}}\label{Table:L}
	\begin{tabular}{c | c | c | c | c | c | c | c | c | c | c}
 Factor number ($k$) & $1$ & $2$ & $3$ & $4$ & $5$ & $6$ & $7$ & $8$ & $9$ & $10$\\\hline
  $\pi_k$ & 0 & 0.45 & 0.60 & 0.71 & 0.79 & 0.85 & 0.90 & 0.92 & 0.94 & 0.95\\\hline
  $\eta_k$ & 1 & 0.4 & 0.4 & 0.4 & 0.4 & 0.4 & 0.4 & 0.4 & 0.4 & 0.5\\\hline
  $\gamma_k$ & 143 & 12.7 & 9.5 & 6.6 & 4.8 & 3.3 & 2.5 & 1.8 & 1.4 & 1.4
  \end{tabular}
  \vspace{0.4cm}
  \end{minipage}
}
\newcommand{\TissueCorr}{
	\begin{minipage}{\linewidth}
    \centering
    \vspace{0.25cm}
	\captionof{table}{\textsl{The correlation matrix corresponding to the covariance matrix $p^{-1}\sum\limits_{g=1}^p \bm{M}_g$.}}\label{Table:Corr}
    \begin{tabular}{c | c | c | c}
    & Tissue 1 & Tissue 2 & Tissue 3 \\\hline
    Tissue 1 & 1 & 0.72 & 0.58 \\\hline
    Tissue 2 & 0.72 & 1 & 0.80 \\\hline
    Tissue 3 & 0.58 & 0.80 & 1
    \end{tabular}
    \vspace{0.5cm}
	\end{minipage}
}
\newcommand{\RealData}{
	\begin{minipage}{\linewidth}
    \centering
    \vspace{0.5cm}
	\captionof{table}{\textit{The fraction of sex-associated CpGs identified using data from \cite{Martino} that are also one of the 3,532 sex-associated CpGs confidently identified in \cite{BMCValidation} or \cite{NatureValidation}. }}\label{Table:RealData}
    \small
    \begin{tabular}{c | c | c | c | c}
     \shortstack{CBCV-CorrConf \\ $(K=2)$} & \shortstack{BCconf \\ $(K=4)$} & \shortstack{Cate-RR \\ $(K=4)$} & \shortstack{dSVA \\ $(K=15)$} & \shortstack{IRW-SVA \\ $(K=15)$} \\\hline
     39\% (283/735) & 23\% (424/1836) & 20\% (474/2404) & 19\% (487/2513) & 28\% (341/1236) 
    \end{tabular}
    \vspace{0.5cm}
	\end{minipage}
}
\title[Latent factors in correlated data]{Estimating and accounting for unobserved covariates in high dimensional correlated data}
\author[Chris McKennan {\it et al.}]{Chris McKennan}
\address{Department of Statistics, University of Chicago,
Chicago, IL,
USA.}
\email{cgm29@uchicago.edu}
\author[Chris McKennan]{Dan Nicolae}
\address{Department of Statistics, University of Chicago,
Chicago, IL,
USA.}
\begin{document}

\begingroup
\allowdisplaybreaks  
\begin{abstract}
Many high dimensional and high-throughput biological datasets have complex sample correlation structures, which include longitudinal and multiple tissue data, as well as data with multiple treatment conditions or related individuals. These data, as well as nearly all high-throughput `omic' data, are influenced by technical and biological factors unknown to the researcher, which, if unaccounted for, can severely obfuscate estimation and inference on effects due to the known covariate of interest. We therefore developed CBCV and CorrConf: provably accurate and computationally efficient methods to choose the number of and estimate latent confounding factors present in high dimensional data with correlated or nonexchangeable residuals. We demonstrate each method's superior performance compared to other state of the art methods by analyzing simulated multi-tissue gene expression data and identifying sex-associated DNA methylation sites in a real, longitudinal twin study. As far as we are aware, these are the first methods to estimate the number of and correct for latent confounding factors in data with correlated or nonexchangeable residuals. An R-package is available for download at https://github.com/chrismckennan/CorrConf.
\end{abstract}
\keywords{Latent factors, batch effects, unwanted variation, correlation, multi-tissue, cell-type heterogeneity}

\section{Introduction}
The development of high-throughput biotechnologies has provided biologists with a cornucopia of genetic, proteomic and metabolomic data that can been used to elucidate the genetic components of many phenotypes and the mediation of environmental exposures. Many of these `omic' data have complex sample-correlation structure, which includes longitudinal data \citep{LongRNAseq,LongDNAmeth,URECA}, multi-tissue data \citep{GTEX,MultTissue}, data with multiple treatment conditions \citep{Knowles} and data with related individuals \citep{Martino,KinshipBaboons}. Longitudinal studies have even been cited as a critical area of future DNA methylation research to assess the stability of methylation marks over time \citep{SmallEffectsLong,EnvirLong}, so it is crucial that suitable methods exist to analyze these data. An important feature of these high-throughput experimental data is the presence of unmeasured factors, which include technical factors like batch variables and biological factors like cell composition, that influence the measured response \citep{LeekBatch,Cell}. When unaccounted for, these factors can bias test statistics, reduce power and lead to irreproducible results \citep{BatchReproducible2,BatchReproducible1}. There have been a number of methods developed by the statistical community to estimate and correct for latent factors \citep{BujaFA,SVA2008,RUV,LEAPP,RUV4,Houseman,bcv,Fan1,BiometrikaConfounding,CATE,ChrisANDDan}. However, all of these methods make the critical assumption that conditional on both the observed and unobserved covariates, samples are independent with homogeneous residual variances, and tend to perform poorly when these assumptions are violated. The goal of this paper is therefore to provide a provably accurate method to both choose the number of latent factors and estimate them from the measured correlated data precisely enough so that downstream inference on the covariate of interest is as accurate as when the latent variables are observed. To the best of our knowledge, this is the first method to estimate and correct for latent covariates in correlated data.\par
\indent We use DNA methylation quantified in $p\approx 8 \times 10^5$ methylation sites (i.e. CpGs) in 183 unrelated individuals at birth and age 7 ($n=2\times 183$) from \cite{URECA} as a motivating data example, although we analyze other methylation data with a more complex covariance structure in Section \ref{section:RealData}. The aim is to jointly model methylation at birth and age 7 to determine if the effects due to ancestry on methylation levels changed or remained constant over time. If we ignore observed nuisance variables like the intercept, a reasonable model for the methylation at birth and age 7 at CpG $g$ ($\bm{y}_{g}^{(0)} \in \mathbb{R}^{n/2}$ and $\bm{y}_{g}^{(7)} \in \mathbb{R}^{n/2}$) in the presence of unobserved covariates $\bm{C} \in \mathbb{R}^{n \times K}$ ($K << n$, an unknown constant) is
\begin{subequations}
\label{equation:IntroModel}
\begin{align}
\bm{y}_g &= \left( \bm{y}_{g}^{(0)} \, \bm{y}_{g}^{(7)}\right)^T = \bm{X}\left( \beta_g^{(0)}\, \beta_g^{(7)}\right)^T + \bm{C}\bm{\ell}_g + \bm{e}_g, \quad \bm{e}_g \sim N\left(0,\bm{V}_g\right) \quad (g=1,\ldots,p)\\
\bm{V}_g &= \phi^2_g \bm{B}_1 + \sigma_{g,0}^2 \bm{B}_2 + \sigma_{g,7}^2 \bm{B}_3 \quad (g=1,\ldots,p)
\end{align}
\end{subequations}
where $\bm{X} = \bm{A}\oplus \bm{A} \in \mathbb{R}^{n \times 2}$ and $\bm{A} \in \mathbb{R}^{n/2}$ gives each individual's ancestry, $\bm{B}_1 \in \mathbb{R}^{n \times n}$ is a partition matrix that groups samples by individuals and $\bm{B}_2 \in \mathbb{R}^{n \times n}$ and $\bm{B}_3 \in \mathbb{R}^{n \times n}$ are diagonal matrices with ones in the first $n/2$ and second $n/2$ diagonal entries, respectively, and zeros everywhere else that capture the potential difference in residual variance at birth and age 7. In order to avoid overestimating the residual variance and biasing our estimates for the effects of interest $\beta_g^{(0)}$ and $\beta_g^{(7)}$, we must first estimate $\bm{C}$. Instead of effectively reducing the sample size by 50\% and estimating $\bm{C}$ separately at birth and age 7, which risks underestimating $K$ and subsequently biasing estimates for $\beta_g^{(0)}$ and $\beta_g^{(7)}$, we would ideally estimate $K$ and subsequently $\bm{C}$ using all $n$ samples, since we expect technical and many biological factors (including cell composition) to affect methylation at both ages. However, naively applying standard methods to choose $K$ designed for data with independent and identically distributed residuals, like parallel analysis \citep{BujaFA} and bi-cross validation \citep{bcv}, will typically overestimate the latent factor dimension to be on the order of the sample size, as they will be unable to distinguish the low dimensional factors $\bm{C}$ from the high dimensional random effect. Even if $K$ were known, the correlation among the residuals obfuscates current state of the art method's estimates for $\bm{C}$.\par
\indent Our proposed computationally efficient method uses all of the available data to estimate $K$ and $\bm{C}$ for data whose gene-, methylation-, protein- or metabolite-specific covariance can be written as a linear combination of known positive semi-definite matrices. As we will eventually show, this covariance structure is quite general and includes longitudinal, multi-tissue, multi-treatment and twin studies, as well as studies with individuals related through a kinship matrix. While our ultimate goal is to be able to do inference on the effects of interest that is as accurate as when $\bm{C}$ is observed, our method also provides a way to efficiently perform factor analysis in data with correlated or nonexchangeable residuals $\bm{e}_{g1}, \ldots, \bm{e}_{gn}$ ($g=1,\ldots,p$) and has application in quantitative train loci (QTL) studies \citep{Knowles,URECA}.\par
\indent We demonstrate the efficacy of our method by showing it selects the correct latent factor dimension with high probability, its estimate for the column space of $\bm{C}$ is nearly as accurate as when samples are independent (i.e. $\bm{V}_g$ in \eqref{equation:IntroModel} is a multiple of the identity) and that downstream inference on the effect of interest is asymptotically equivalent to inference with the generalized least squares estimator when $\bm{C}$ is known. We also simulate multi-tissue gene expression data with a complex, gene-dependent correlation structure to illustrate our method's superior performance in both choosing $K$ and estimating $\bm{C}$.\par
\indent The remainder of the paper is organized as follows. Section \ref{section:Setup} provides a description of and generative model for the data and an overview of our estimation procedure. Section \ref{section:Estimation} gives a detailed description of our estimation procedure, as well as all relevant theory. We then analyze simulated data in Section \ref{section:SimulatedData} and apply our method to a longitudinal DNA methylation dataset with measurements made on pairs of twins to identify CpGs with sex-dependent methylation levels in Section \ref{section:RealData}. An R package called CorrConf that implements our method to estimate $K$ and $\bm{C}$ is freely available with download instructions provided in Section \ref{section:Software}. The proofs of all propositions, lemmas and theorems can be found in the Supplement.

\section{Notation and problem set-up}
\label{section:Setup}
\subsection{Basic notation}
\label{subsection:Notation}
Define $\bm{1}_n \in \mathbb{R}^n$ be the vector of all ones and $I_n \in \mathbb{R}^{n \times n}$ to be the identity matrix. For any matrix $\bm{M} \in \mathbb{R}^{n \times m}$, we define $P_{M}$, $P_{M}^{\perp}$ to be the orthogonal projections matrices onto the image and orthogonal complement of $\bm{M}$. If $d > 0$ is the dimension of the null space of $\bm{M}^T$, we define $\bm{Q}_{M} \in \mathbb{R}^{n \times d}$ to be a matrix whose columns form an orthonormal basis for the null space of $\bm{M}^T$, i.e. $\bm{Q}_M^T \bm{M} = \bm{0}$, $\bm{Q}_M^T \bm{Q}_M = I_d$ and $\bm{Q}_M \bm{Q}_M^T = P_{M}^{\perp}$. Lastly, $\bm{X} \edist \bm{Y}$ if the two random variables, vectors or matrices $\bm{X}$ and $\bm{Y}$ have the same distribution.

\subsection{A description of and generative probability model for the data}
\label{subsection:Model}
Suppose we measure the expression or methylation $\bm{y}_g \in \mathbb{R}^n$ ($g=1,\ldots,p$) of $p$ units across $n$ potentially correlated samples and, when applicable, observe the covariates of interest $\bm{X} \in \mathbb{R}^{n \times d}$. Throughout this work, we will assume $\bm{X}$ is ``well-behaved'', i.e. the smallest and largest eigenvalues of $n^{-1}\bm{X}^T \bm{X}$ remains bounded above 0 and below a fixed constant. When unobserved factors $\bm{C} \in \mathbb{R}^{n \times K}$ influence the response, we assume the data $\bm{y}_g$ are generated as
\begin{subequations}
\label{equation:Model_yANDvar}
\begin{align}
\label{equation:Model_y}
\bm{y}_g &= \bm{X}\bm{\beta}_g + \bm{C}\bm{\ell}_g + \bm{e}_g, \quad \bm{e}_g \sim N\left( \bm{0}, \bm{V}_g\right)\,\, (g=1,\ldots,p)\\
\label{equation:Model_Y}
\bm{Y} &= \left[
\bm{y}_1 \, \cdots \, \bm{y}_p
\right]^T = \bm{\beta}\bm{X}^T + \bm{L}\bm{C}^T + \bm{E}\\
\label{equation:Model_var}
\bm{V}_g &= v_{g,1} \bm{B}_1 + \cdots + v_{g,b}\bm{B}_b \,\, (g=1,\ldots,p)
\end{align}
\end{subequations}
where the $g^{\text{th}}$ row of $\bm{\beta}$, $\bm{L}$ and $\bm{E}$ is $\bm{\beta}_g$, $\bm{\ell}_g$ and $\bm{e}_g$, respectively, $\bm{e}_g$ and $\bm{e}_{g'}$ are independent for $g \neq g'$, the variance multipliers $v_{g,j}$ ($g=1,\ldots,p$; $j=1,\ldots,b$) are unknown and $\bm{B}_j \in \mathbb{R}^{n \times n}$ ($j=1,\ldots,b$) are known and describe how the $n$ samples are related. Lastly, we assume that the variance multipliers $\bm{v}_g = \left(
v_{g,1} \, \cdots \, v_{g,b}
\right)^T$ lie in the convex polytope $\Theta$, defined as
\begin{align}
\label{equation:Theta}
\Theta = \left\lbrace \bm{x} \in \mathbb{R}^b : \bm{A}_{\mathcal{E}}\bm{x} = \bm{0}, \bm{A}_{\mathcal{I}}\bm{x} \geq \bm{0} \right\rbrace
\end{align}
where $\bm{A}_{\mathcal{E}} \in \mathbb{R}^{N_{\mathcal{E}} \times b}$ and $\bm{A}_{\mathcal{I}} \in \mathbb{R}^{N_{\mathcal{I}} \times b}$ are the equality and inequality constraints on $\bm{v}_g$. For example, we may know that certain multipliers must be larger than others, or that the sum of two sets of multipliers must be equal to ensure the marginal variances are the same. Depending on how we parametrize the covariance matrices $\bm{V}_g$, the multipliers can take negative values as long as $\bm{V}_g$ is interpretable as a covariance matrix. We lastly note that the form for parameter space $\Theta$ is sufficient for most problems because it assumes one only has prior information about the relative sizes of the variance multipliers (as well as their sign). It is unlikely for practitioner to know the absolute sizes of the multipliers.\par 
\indent In many applications there are other observed covariates $\bm{Z} \in \mathbb{R}^{n \times r}$ that may influence the response $\bm{y}_g$ but whose effects are not of interest. In that case, the model for $\bm{y}_g$ would be
\begin{align*}
\bm{y}_g = \bm{X}\bm{\beta}_g + \bm{Z}\bm{w}_g + \bm{C}\bm{\ell}_g + \bm{e}_g,
\end{align*}
and we can get back to model \eqref{equation:Model_yANDvar} by multiplying $\bm{y}_g$ by $\bm{Q}_{Z}^T$, provided $r$ does not grow with $n$:
\begin{align*}
\bm{Q}_{Z}^T\bm{y}_g &= \left(\bm{Q}_{Z}^T \bm{X}\right)\bm{\beta}_g + \left(\bm{Q}_{Z}^T \bm{C}\right)\bm{\ell}_g + \bar{\bm{e}}_g, \quad \bar{\bm{e}}_g \sim N\left( \bm{0}, \bar{\bm{V}}_g\right)\\
\bar{\bm{V}}_g &= v_{g,1} \left(\bm{Q}_{Z}^T\bm{B}_1 \bm{Q}_{Z}\right) + \cdots + v_{g,b} \left(\bm{Q}_{Z}^T \bm{B}_b \bm{Q}_{Z}\right).
\end{align*}
Therefore, we assume that the only observed covariates are contained in $\bm{X}$.\par 
\indent Our primary goal is to estimate the column space of $\bm{C}$ accurately enough so that inference on $\bm{\beta}_g$ is just as accurate as the generalized least squares estimate for $\bm{\beta}_g$ when $\bm{C}$ is known, and secondary goal is to simply estimate the column space of $\bm{C}$ when there are no covariates of interest, i.e. $d=0$. Since accomplishing the former implies we can achieve the latter, we only consider the case when $d \geq 1$.\par
\indent To estimate $\bm{C}$, we partition the variability of $\bm{y}_1,\ldots,\bm{y}_p$ into two pieces: one part explainable by $\bm{X}$ and another orthogonal to $\bm{X}$. Specifically, for some positive definite matrix $\bm{G} \in \mathbb{R}^{n \times n}$ we define
\begin{subequations}
\label{equation:DataSplitting}
\begin{align}
\label{equation:yg1}
&\bm{y}_{g_1} = \left( \bm{X}^T \bm{G}^{-1} \bm{X}\right)^{-1}\bm{X}^T \bm{G}^{-1}\bm{y}_{g} = \bm{\beta}_g + \bm{\Omega}\bm{\ell}_g + \bm{e}_{g_1}\,\, (g=1,\ldots,p)\\
\label{equation:OmegaWLS_0}
&\bm{\Omega} = \left( \bm{X}^T \bm{G}^{-1} \bm{X}\right)^{-1}\bm{X}^T \bm{G}^{-1}\bm{C}\\
\label{equation:yg2}
&\bm{y}_{g_2} = \bm{Q}_X^T \bm{y}_g = \bm{C}_{\perp}\bm{\ell}_g + \bm{e}_{g_2}, \quad \bm{e}_{g_2} \sim N\left( \bm{0}, \bm{W}_g\right)\,\, (g=1,\ldots,p)\\
\label{equation:Cperp}
&\bm{C}_{\perp} = \bm{Q}_X^T \bm{C}\\
\label{equation:Vgtilde}
&\bm{W}_g = \sum\limits_{j=1}^b v_{g,j} \bm{Q}_X^T \bm{B}_j \bm{Q}_X \,\, (g=1,\ldots,p)\\
\label{equation:Yi}
&\bm{Y}_1 = \left[
\bm{y}_{1_1} \, \cdots \, \bm{y}_{p_1}
\right]^T = \bm{\beta} + \bm{L}\bm{\Omega}^{T} + \bm{E}_1, \quad \bm{Y}_2 = \left[
\bm{y}_{1_2} \, \cdots \, \bm{y}_{p_2}
\right]^T = \bm{L}\bm{C}_{\perp}^T + \bm{E}_2
\end{align}
\end{subequations}
where the $g^{\text{th}}$ rows of $\bm{E}_1$ and $\bm{E}_2$ are $\bm{e}_{g_1}$ and $\bm{e}_{g_2}$, $\bm{\Omega}$ is the weighted least squares regression coefficient for the regression $\bm{C}$ onto $\bm{X}$ and $\bm{y}_{g_1}$ is the naive weighted least squares estimator for $\bm{\beta}_g$ ($g=1,\ldots,p$) that ignores $\bm{C}$. We show how to choose the appropriate $\bm{G}$ when we provide our estimate for $\bm{\beta}_g$ in Section \ref{subsection:EstOmega}. $\bm{C}_{\perp}$ and $\bm{y}_{g_2}$ lie in the space orthogonal to $\bm{X}$, where the latter no longer depends on $\bm{\beta}_g$. Algorithm \ref{algorithm:EsimationProcedure} below provides a cursory description of how we use the objects defined in \eqref{equation:DataSplitting} to estimate $\bm{C}$.
\begin{algorithm}
\label{algorithm:EsimationProcedure}
\begin{enumerate}[label=\textit{(\alph*)}]
\item Use $\bm{Y}_2$ and a cross-validation procedure to estimate $K$, the dimension of the column space of $\bm{C}$. This procedure is called CBCV (\textbf{C}orrelated \textbf{B}i-\textbf{C}ross \textbf{V}alidation) and is enumerated in Section \ref{subsection:ChooseK}.\label{item:EstimationProcedure:CBCV}
\item For $K$ known, use $\bm{Y}_2$ to estimate $\bm{L}$, $\bm{C}_{\perp}$ and the appropriate $\bm{G}$ using an iterative factor analysis procedure we call ICaSE (\textbf{I}terative \textbf{C}orrelation \textbf{a}nd \textbf{S}ubspace \textbf{E}stimation), which we discuss in Section \ref{subsection:ICaSE}.\label{item:EstimationProcedure:ICaSE}
\item Estimate $\bm{\Omega}$ using $\bm{Y}_1$, $\hat{\bm{L}}$, $\hat{\bm{C}}_{\perp}$ and $\bm{G}$. Since $\bm{C} = \bm{X}\bm{\Omega} + \bm{G}\bm{Q}_X \left(\bm{Q}_X^T \bm{G}\bm{Q}_X\right)^{-1} \bm{C}_{\perp}$, our estimate for $\bm{C}$ is $\hat{\bm{C}} = \bm{X}\hat{\bm{\Omega}} + \bm{G}\bm{Q}_X \left(\bm{Q}_X^T \bm{G}\bm{Q}_X\right)^{-1} \hat{\bm{C}}_{\perp}$. For $K$ known, we call this method of estimating $\bm{C}$ CorrConf and is discussed in Section \ref{subsection:EstOmega}.\label{item:EstimationProcedure:Omega}
\end{enumerate}
\end{algorithm}
If the samples were independent and $K$ were known, steps \ref{item:EstimationProcedure:ICaSE} and \ref{item:EstimationProcedure:Omega} with $\bm{G} = I_n$ are similar to methods used by \citet{LEAPP,Houseman,CATE,BiometrikaConfounding,ChrisANDDan} to estimate $\bm{C}$. However, it is not hard to show that their estimates are inaccurate in the presence of even moderate correlation. In what follows, we first show that for $K$ known, ICaSE is able to estimate $\bm{C}_{\perp}$ as accurately as when samples are independent. We then use the theory developed in step \ref{item:EstimationProcedure:ICaSE} to show that CBCV consistently chooses the correct factor dimension $K$ in step \ref{item:EstimationProcedure:CBCV}, and lastly show that our estimate for $\bm{C}$ in part \ref{item:EstimationProcedure:Omega} ensures that the generalized least squares estimate for $\bm{\beta}_g$ using $\hat{\bm{C}}$ as a plug-in estimate for $\bm{C}$ is asymptotically equivalent to the generalized least squares estimator when $\bm{C}$ is known. 

\section{Estimating the factor dimension $K$ and latent covariates $\bm{C}$}
\label{section:Estimation}
\subsection{A computationally tractable model for the data}
\label{subsection:EstimationModel}
The generative model assumed in \eqref{equation:Model_yANDvar} is not conducive to estimating $\bm{C}$, since this would require jointly estimating $\bm{C}$ and all $p$ covariance matrices $\bm{V}_1,\ldots,\bm{V}_p$. Instead, we use the following simpler, but incorrect, model to estimate $K$, $\bm{L}$ and $\bm{C}$:
\begin{subequations}
\label{equation:WrongModel}
\begin{align}
\label{equation:WrongModel_Y}
\bm{Y} &= \bm{\beta}\bm{X}^T + \bm{L}\bm{C}^T + \bm{E}, \quad \bm{E} \sim MN_{p \times n}\left( \bm{0}, \delta^2 I_p, \bm{V}\right)\\
\label{equation:WrongModel_V}
\bm{V} &= \tau_1 \bm{B}_1 + \cdots + \tau_b \bm{B}_b, \quad \log\abs{\bm{Q}_X^T \bm{V} \bm{Q}_X} = 0\\
\label{equation:WrongModel_tau}
\bm{\tau} &= \left(
\tau_1 \, \cdots \, \tau_b
\right)^T
\end{align}
\end{subequations}
where we introduce $\delta^2$ so that $\bm{V}$ is scale-invariant (and equal to $I_n$ when samples are independent), and is defined in terms of the determinant for reasons expounded upon in Section \ref{subsection:ChooseK}. Since this is not the correct model, we use the KL-divergence to express the ``closest'' model parameters $\delta^2$, $\bm{V}$ in terms of the data-generating model parameters from \eqref{equation:Model_yANDvar} in Proposition \ref{proposition:KLVdelta} below.
\begin{proposition}
\label{proposition:KLVdelta}
Let $\bar{\bm{V}} = p^{-1}\sum\limits_{g=1}^p \bm{V}_g$ and $c_1$ be a constant not dependent on $n$ or $p$ and suppose the matrix $\left[\bm{H}\right]_{rs} = n^{-1}\Tr\left( \bm{B}_r \bar{\bm{V}}^{-1}\bm{B}_s \bar{\bm{V}}^{-1} \right) - c_1^{-1} I(r=s)$ ($r=1,\ldots,b; s=1,\ldots,b$) is positive definite for all $n$ large enough. Let $F_{\eqref{equation:Model_yANDvar}}$ and $F_{\eqref{equation:WrongModel},\left(\bm{\tau}, \delta^2 \right)}$ be the distribution functions for $\bm{Y}$ under the data-generating and incorrect, but tractable model assumed in \eqref{equation:Model_yANDvar} and \eqref{equation:WrongModel} with parameters $\bm{\tau}, \delta^2$, respectively, and define
\begin{align*}
\left(\bm{\tau}_*, \delta_*^2 \right) &= \argmin_{\substack{ \bm{V}\left( \bm{\tau} \right) \succ \bm{0} \\ \log \abs{\bm{Q}_X^T \bm{V}\left( \bm{\tau} \right) \bm{Q}_X} = 0 }} KL\left\lbrace F_{\eqref{equation:Model_yANDvar}} \mid \mid F_{\eqref{equation:WrongModel},\left(\bm{\tau}, \delta^2 \right)} \right\rbrace
\end{align*}
where $\bm{V} = \bm{V}\left( \bm{\tau} \right)$. Then for $\bm{V}_* = \bm{V}\left( \bm{\tau}_*\right)$, $\bm{W}_* = \bm{Q}_X^T \bm{V}_* \bm{Q}_X$ and $\bm{W}_g$ defined in \eqref{equation:Vgtilde},
\begin{align}
\label{equation:KLVResult}
\delta_*^2\bm{\tau}_* = p^{-1}\sum\limits_{g=1}^p \bm{v}_g, \qquad \delta_*^2\bm{V}_* = p^{-1}\sum\limits_{g=1}^p \bm{V}_g, \qquad \delta_*^2\bm{W}_* = p^{-1}\sum\limits_{g=1}^p \bm{W}_g
\end{align}
and $\bm{\tau}_* \in \Theta$ for all $n$ large enough..
\end{proposition}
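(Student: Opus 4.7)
My plan is to reduce the constrained KL minimization to a classical matrix optimization. Since both models in \eqref{equation:Model_yANDvar} and \eqref{equation:WrongModel} treat the $p$ rows of $\bm{Y}$ as independent Gaussians with the same means $\bm{X}\bm{\beta}_g + \bm{C}\bm{\ell}_g$, the KL divergence decomposes into a sum of $p$ Gaussian KLs with matching means. Applying the standard formula and dropping terms that do not depend on $(\bm{\tau},\delta^2)$, the objective becomes proportional to
\[
\Tr\!\left(\bm{\Sigma}^{-1}\bar{\bm{V}}\right) + \log\abs{\bm{\Sigma}},
\]
where $\bm{\Sigma} := \delta^2 \bm{V}(\bm{\tau})$ and $\bar{\bm{V}} = p^{-1}\sum_{g=1}^p \bm{V}_g$. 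In this reparameterization the feasible set is the intersection of $\mathcal{L} := \mathrm{span}(\bm{B}_1,\ldots,\bm{B}_b)$ with the positive definite cone, and the normalization $\log\abs{\bm{Q}_X^T \bm{V}\bm{Q}_X}=0$ is only needed at the end, to split $\bm{\Sigma}_*$ back into the pair $(\bm{\tau}_*,\delta_*^2)$.

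The second step is to identify the global minimizer of $f(\bm{\Sigma}) := \Tr(\bm{\Sigma}^{-1}\bar{\bm{V}}) + \log\abs{\bm{\Sigma}}$ over the full positive definite cone. This is the classical matrix log-likelihood calculation: diagonalizing $\bm{\Sigma}^{-1/2}\bar{\bm{V}}\bm{\Sigma}^{-1/2}$ rewrites $f(\bm{\Sigma})$ as $\sum_i (\lambda_i - \log\lambda_i) + \log\abs{\bar{\bm{V}}}$, and strict convexity of $\lambda \mapsto \lambda - \log\lambda$ on $(0,\infty)$ with unique minimum at $\lambda=1$ forces $\bm{\Sigma} = \bar{\bm{V}}$. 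Because $\bar{\bm{V}} = \sum_j \bar{v}_j \bm{B}_j$ with $\bar{v}_j := p^{-1}\sum_g v_{g,j}$ already lies in $\mathcal{L}$, the unconstrained minimizer is automatically feasible, so $\bm{\Sigma}_* = \bar{\bm{V}}$. Invertibility of $\bar{\bm{V}}$, which is implicit throughout, is guaranteed by the PD hypothesis on $\bm{H}$, whose definition already involves $\bar{\bm{V}}^{-1}$.

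The last step is to invert the reparameterization and check $\bm{\tau}_*\in\Theta$. The normalization $\log\abs{\bm{Q}_X^T \bm{V}_* \bm{Q}_X}=0$ combined with $\delta_*^2 \bm{V}_* = \bar{\bm{V}}$ pins down $\delta_*^2 = \abs{\bm{Q}_X^T \bar{\bm{V}}\bm{Q}_X}^{1/(n-d)}$, and matching coefficients in the basis $\{\bm{B}_j\}$ yields $\delta_*^2 \bm{\tau}_* = \bar{\bm{v}} := p^{-1}\sum_g \bm{v}_g$; the three identities in \eqref{equation:KLVResult} then follow by an elementary manipulation. For $\bm{\tau}_*\in\Theta$, I would use that $\Theta$ is a convex cone because its defining constraints $\bm{A}_{\mathcal{E}}\bm{x}=\bm{0}$ and $\bm{A}_{\mathcal{I}}\bm{x}\ge \bm{0}$ are both positively homogeneous: convexity gives $\bar{\bm{v}}\in\Theta$, and the cone property gives $\bm{\tau}_* = \delta_*^{-2}\bar{\bm{v}}\in\Theta$. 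The main subtle point, and the only real obstacle, is that this coefficient matching presupposes linear independence of $\bm{B}_1,\ldots,\bm{B}_b$; this is precisely what the PD assumption on $\bm{H}$ supplies, since a direct calculation shows the Hessian of $f$ at $\bm{\Sigma} = \bar{\bm{V}}$ has $(r,s)$-entry $\Tr(\bm{B}_r \bar{\bm{V}}^{-1}\bm{B}_s\bar{\bm{V}}^{-1}) = n\bigl([\bm{H}]_{rs} + c_1^{-1}I(r=s)\bigr)$, which is strictly positive definite for all $n$ large enough.
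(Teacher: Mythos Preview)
Your proposal is correct and follows essentially the same route as the paper: reduce the KL to the function $\Tr(\bm{\Sigma}^{-1}\bar{\bm{V}})+\log\abs{\bm{\Sigma}}$ in the single matrix variable $\bm{\Sigma}=\delta^2\bm{V}(\bm{\tau})$, identify its unique PD minimizer as $\bar{\bm{V}}$, observe $\bar{\bm{V}}\in\mathrm{span}(\bm{B}_1,\ldots,\bm{B}_b)$, and then match coefficients using the linear independence of the $\bm{B}_j$ implied by the positive-definiteness of $\bm{H}$. The only cosmetic difference is that the paper argues uniqueness via strict convexity in $\bm{Z}=\bm{\Sigma}^{-1}$ rather than your eigenvalue calculation, and verifies $\bm{\tau}_*\in\Theta$ by directly checking $\bm{A}_{\mathcal{I}}\bm{\tau}_*\ge\bm{0}$ and $\bm{A}_{\mathcal{E}}\bm{\tau}_*=\bm{0}$ rather than invoking the convex-cone structure; these are equivalent.
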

This is a simple, but auspicious result that helps to guarantee that using the incorrect model \eqref{equation:WrongModel} to perform computationally tractable factor analysis to estimate $\bm{C}$ does not sacrifice statistical accuracy. For example, consider estimating $\bm{C}_{\perp}$ in step \ref{item:EstimationProcedure:ICaSE} of Algorithm \ref{algorithm:EsimationProcedure}. The empirical between-sample covariance matrix can be written as
\begin{align*}
\bm{S}_2 = p^{-1}\bm{Y}_2^T \bm{Y}_2 \approx \bm{C}_{\perp}\left( p^{-1} \bm{L}^T \bm{L}\right)\bm{C}_{\perp}^T + p^{-1}\bm{E}_2^T \bm{E}_2 \approx \bm{C}_{\perp}\left( p^{-1} \bm{L}^T \bm{L}\right)\bm{C}_{\perp}^T + \delta_*^2 \bm{W}_*.
\end{align*}
Since $\bm{W}_*$ is not a multiple of the identity, the first $K$ eigenvectors of $\bm{S}_2$ will not be an accurate estimate for $\bm{C}_{\perp}$. If $\bm{W}_*$ were known, we could instead first estimate $\bm{W}_*^{-1/2}\bm{C}_{\perp}$ as the first $K$ eigenvectors of $\bm{W}_*^{-1/2}\bm{S}_2 \bm{W}_*^{-1/2}$, and then rescale the estimate by $\bm{W}_*^{1/2}$ to estimate $\bm{C}_{\perp}$. It turns out that this estimation paradigm performs just as well as when samples are known to be independent.

\subsection{ICaSE: an iterative factor analysis to estimate $\bm{C}_{\perp}$}
\label{subsection:ICaSE}
Here we present the algorithm ICaSE, a method to estimate $\delta_*^2$, $\bm{V}_*$ and $\bm{C}_{\perp}$ when $K$ is known from the portion of the data in the orthogonal complement of $\bm{X}$, as well as theory to illustrate its accuracy. We will also use ICaSE and its theoretical guarantees to estimate $K$ in Section \ref{subsection:ChooseK}. Recall from \eqref{equation:DataSplitting} that the mean of $\bm{Y}_2$ is $\bm{L}\bm{C}_{\perp}^T$ and is not dependent on $\bm{X}$ and the covariance of the $g^{\text{th}}$ row is $\bm{W}_g$, were $\delta^2_*\bm{W}_* = p^{-1}\sum\limits_{g=1}^p \bm{W}_g$. The discussion at the end of Section \ref{subsection:EstimationModel} suggests that if $\bm{W}_*$ were known, one could estimate $\bm{C}_{\perp}$ by first computing the first $K$ right singular vectors of $\bm{Y}_2 \bm{W}_*^{-1/2}$, and then re-scaling them by $\bm{W}_*^{1/2}$. On the other hand, if $\bm{C}_{\perp}$ were known, one could easily estimate $\delta_*^2$ and $\bm{\tau}_*$ using restricted maximum likelihood. ICaSE iterates between these two steps to jointly estimate $\delta_*^2$, $\bm{\tau}_*$ and $\bm{C}_{\perp}$.\par 
\indent It remains to show how to determine an appropriate starting point for $\delta_*^2$ and $\bm{\tau}_*$ that avoids incorporating signal from the random effect into the estimate for $\bm{C}_{\perp}$, as imprudent starting points often beget biased estimates for $\bm{\tau}_*$, and therefore $\bm{C}_{\perp}$. For example, consider a simple scenario where $\bm{W}_g = \sigma_1^2 \bm{R}\bm{R}^T + \sigma_0^2 I_{n-d}$ for all $g=1,\ldots,p$, where $\bm{R} \in \mathbb{R}^{(n-d) \times s}$ and $\bm{R}\bm{R}^T \neq I_{n-d}$. Then we can re-write $\bm{Y}_2$ as
\begin{align*}
\bm{Y}_2 = \bm{L}\bm{C}_{\perp}^T + \sigma_1\bm{\Gamma}\bm{R}^T + \sigma_0\bm{\mathcal{E}}, \quad \bm{\Gamma} \sim MN_{p \times s}\left( \bm{0}, I_p, I_s\right), \,\, \bm{\mathcal{E}}\sim MN_{p \times (n-d)}\left( \bm{0}, I_p, I_{n-d}\right).
\end{align*}
If we use $\bm{W}_* = I_{n-d}$ as a starting point, the initial estimate for $\bm{C}_{\perp}$ will be a mixture of $\bm{C}_{\perp}$ and $\bm{R}$. If $K$ is moderate to large or $n$ is small, we will have a difficult time re-estimating $\bm{W}_*$ because too much of the random effect would be treated as a fixed effect. Further, if we attribute signal from the random effect as arising from $\bm{L}\bm{C}_{\perp}^T$, we will tend to overestimate $K$ and have fewer degrees of freedom to estimate $\bm{\beta}$.\par 
\indent In order to better separate the variability in $\bm{Y}_2$ due to the unobserved covariates $\bm{C}_{\perp}$ and the random effect, we employ a ``warm start" technique often used when solving penalized regression problems \citep{WarmStart}. There, the solution from an optimization with a larger regularization constant, which typically yields a model with few parameters, is used as a starting point for the same problem with a smaller regularization constant. In our case, we initialize our estimates for $\delta_*^2$ and $\bm{\tau}_*$ assuming $K=0$, since there is no ambiguity where the variability in $\bm{Y}_2$ is coming from if $\bm{L}\bm{C}_{\perp}^T = 0$. We then use the estimate for $\bm{W}_*$ when the dimension of the column space of $\bm{C}_{\perp}$ is $k-1$ as the starting point when the dimension is $k$. This ensures that we attribute as much variability as possible to the random effect and only attribute signal to the latent covariates if the observed signal is not amenable to the model for the variance. Algorithm \ref{algorithm:ICaSE} below enumerates the steps in ICaSE.

\begin{algorithm}[ICaSE]
\label{algorithm:ICaSE}
\begin{enumerate}[label=\textit{(\arabic*)}]
\setcounter{enumi}{-1}
\item For $K = 0$, estimate $\delta_*^2$ and $\bm{W}_*$ by restricted maximum likelihood using the incorrect model $\bm{Y}_2 \sim MN_{p \times (n-d)}\left( \bm{0}, \delta^2 I_p, \bm{W}\right)$ under the restriction that $\log\abs{\bm{W}} = 0$ and $\delta^2 \bm{\tau} \in \Theta$.\label{ICaSE:startV}
\item For $K=k$ and given $\hat{\delta}^2$ and $\hat{\bm{W}}$, estimates for $\bm{W}_*$ and $\delta_*^2$, estimate $\hat{\bm{W}}^{-1/2}\bm{C}_{\perp}$ using the truncated singular value decomposition of $\bm{Y}_2 \hat{\bm{W}}^{-1/2}$ and rescale the estimate by $(n-d)^{1/2}\hat{\bm{W}}^{1/2}$ to get an estimate for $\bm{C}_{\perp}$, $\hat{\bm{C}}_{\perp}  \in \mathbb{R}^{(n-d) \times k}$.\label{ICaSE:Subspace}
\item Given $\hat{\bm{C}}_{\perp}$, obtain $\hat{\delta}^2$ and $\hat{\bm{\tau}}$ (estimates for $\delta_*^2$ and $\bm{\tau}_*$) by restricted maximum likelihood, assuming $\bm{Y}_2 \sim MN_{p \times (n-d)}\left\lbrace \bm{L}\hat{\bm{C}}_{\perp}^T, \delta^2 I_p, \bm{W}\left( \bm{\tau}\right)\right\rbrace$ under the restriction $\log\abs{\bm{W}\left( \bm{\tau}\right)} = 0$ and $\delta^2 \bm{\tau} \in \Theta$.\label{ICaSE:Variance}
\item Iterate between steps \ref{ICaSE:Subspace} and \ref{ICaSE:Variance}, stop on step \ref{ICaSE:Subspace} of the second iteration and use $\hat{\bm{W}}$ as the starting point for $K = k+1$ in step \ref{ICaSE:Subspace}.
\end{enumerate}
\end{algorithm}

\indent Evidently, iterating between steps \ref{ICaSE:Subspace} and \ref{ICaSE:Variance} is not explicitly maximizing an objective function. However, we prove that one needs only to iterate two times to obtain estimates for $\bm{C}_{\perp}$ that are nearly as accurate as those obtained from methods that assume samples are independent \citep{ChrisANDDan}. Before we explicitly state our theory, we place technical assumptions on the components of the data-generating model \eqref{equation:Model_yANDvar} in Assumption \ref{assumption:Basics}, as well as assumptions on the parameter estimates from step \ref{ICaSE:Variance} of Algorithm \ref{algorithm:ICaSE} in Assumption \ref{assumption:ICaSE}.


\begin{assumption}
\label{assumption:Basics}
Let $c_2 > 0$ be a constant not dependent on $n$ or $p$. Assume:
\begin{enumerate}[label=\textit{(\alph*)}]
\item $\bm{V}_g - c_2^{-1}I_n \succ \bm{0}$, $\abs{v_{g,j}} \leq c_2$ and $\norm{\bm{B}_j}_2 \leq c_2$, where $g \in [p]$ and $j \in [b]$.\label{item:assumptionBasics:V}
\item $\bm{C} \in \mathbb{R}^{n \times K}$ is a non-random, full-rank matrix such that
\begin{align*}
(n-d)^{-1}\bm{C}^T \bm{V}_*^{-1/2} P_{V_*^{-1/2}X}^{\perp} \bm{V}_*^{-1/2}\bm{C} = I_K
\end{align*}
and $K \geq 0$ is an unknown constant not dependent on $n$ or $p$.\label{item:assumptionBasics:CtC}
\item $(n-d)p^{-1} \bm{L}^T\bm{L} = \diag\left( \gamma_1, \ldots, \gamma_K\right)$ where $c_2^{-1} \leq \gamma_K < \cdots < \gamma_1 \leq c_2n$, $\gamma_1/\gamma_K \leq c_2$ and $\left( \gamma_k - \gamma_{k+1}\right)/\gamma_{k+1} \geq c_2^{-1}$ ($k = 1,\ldots,K$ and $\gamma_{K+1} = 0$). Further, the magnitude of the entries of $\bm{L}$ are bounded above by $c_2$.\label{item:assumptionBasics:LtL}
\item $p$ is a non-decreasing function of $n$ such that $n/p, \, n^{3/2}/\left( p\gamma_K\right) \to 0$.\label{item:assumptionBasics:np}
\end{enumerate}
\end{assumption}
Item \ref{item:assumptionBasics:V} ensures that $\bm{V}_g$ is interpretable as a covariance matrix and that no particular direction explains an overwhelming majority of the variability in the residuals $\bm{e}_g$. The conditions that $P_{V_*^{-1/2}X}^{\perp} \bm{V}_*^{-1/2}\bm{C}$ have orthogonal columns and $\bm{L}^T \bm{L}$ be diagonal with decreasing elements in items \ref{item:assumptionBasics:CtC} and \ref{item:assumptionBasics:LtL} are without loss of generality by the identifiability of $\bm{C}\bm{\ell}_g$ in model \eqref{equation:Model_yANDvar} and because our eventual estimator for $\bm{\beta}_g$ only depends on the column space of $\bm{C}$. These two conditions imply that $\gamma_1,\ldots,\gamma_K$ are exactly the non-zero eigenvalues of the identifiable matrix
\begin{align*}
P_{V_*^{-1/2}X}^{\perp} \bm{V}_*^{-1/2}\bm{C}\left( p^{-1}\bm{L}^T\bm{L}\right) \bm{C}^T \bm{V}_*^{-1/2} P_{V_*^{-1/2}X}^{\perp}
\end{align*}
and quantify the average variability in $\bm{y}_1,\ldots,\bm{y}_p$ due to the latent factors that can be unequivocally distinguished from the variation due to $\bm{X}$. This intuition will be important when we describe our estimation procedure in Section \ref{subsection:EstOmega} and when we analyze simulated data in Section \ref{section:SimulatedData}. We choose to treat the unobserved factors $\bm{C}$ as non-random to illustrate that naive estimates for $\bm{\beta}_g$ that do not account for $\bm{C}$ are biased, although our main results can easily be extended to the case when $\bm{C}$ is assumed random. Lastly, item \ref{item:assumptionBasics:np} assumes we are in the high dimensional setting ($p > n$) and there are enough units to suitably estimate $\bm{C}$ ($n^{3/2}/\left(p\gamma_K\right) \to 0$). The latter assumption is standard in the latent factor correction literature when $\bm{V}_g$ is assumed to be a multiple of the identity, where \cite{CATE} considers the case when the data are acutely informative for $\bm{C}$ ($\gamma_K \asymp n$) and \cite{ChrisANDDan} allow the data to be only moderately informative for some of the factors ($\gamma_K \geq c_2^{-1}$).\par


\begin{assumption}
\label{assumption:ICaSE}
\begin{enumerate}[label=\textit{(\alph*)}]
\item The estimates $\hat{\delta}^2$ and $\hat{\bm{\tau}}$ from step \ref{ICaSE:Variance} in Algorithm \ref{algorithm:ICaSE} are such that $\hat{\delta}^2 \hat{\bm{\tau}}$ lies in the convex set $\Theta_*$, where
\begin{align*}
\Theta_* = \Theta \cap \left\lbrace \bm{x} \in \bm{R}^b : \norm{\bm{x}}_2 \leq 2 b c_2, \bm{V}\left( \bm{x} \right) - \left( 2c_2\right)^{-1} I_n \succ \bm{0} \right\rbrace
\end{align*}
where $c_2$ was defined in Assumption \ref{assumption:Basics}.\label{item:ICaSE:ThetaStar}
\item Define the function
\begin{align*}
h_n\left( \delta^2 \bm{\tau}\right) = -n^{-1}\log\abs{\delta^2 \bm{V}\left( \bm{\tau}\right)} - n^{-1}\Tr\left[\delta_*^2 \bm{V}_* \left\lbrace \delta^2 \bm{V}\left( \bm{\tau}\right) \right\rbrace^{-1} \right].
\end{align*}
Then $h_n \to h$ uniformly in $\Theta_*$ and $-\nabla^2 h_n \left( \delta_*^2 \bm{\tau}_*\right) - c_1^{-1}I_b \succ \bm{0}$, where $c_1$ was defined in the statement of Proposition \ref{proposition:KLVdelta}.\label{item:ICaSE:Technical}

\item Define the function 
\begin{align*}
f_{n,g}\left( \bm{\theta}\right) = -n^{-1}\log\abs{\bm{V}\left(\bm{\theta} \right)} - n^{-1}\Tr\left\lbrace \bm{V}_g \bm{V}\left(\bm{\theta} \right)^{-1} \right\rbrace \,\,\, (g = 1,\ldots,p).
\end{align*}
Then $f_{n,g} \to f_g$ uniformly in $\Theta_*$ and $-\nabla^2 f_{n,g}\left( \bm{v}_g\right) - c_3^{-1} I_b \succ \bm{0}$ for all $g = 1,\ldots,p$, where $c_3$ is a constant not dependent on $n$ or $p$.\label{item:Vg:Technical}
\end{enumerate}
\end{assumption}
Item \ref{item:ICaSE:ThetaStar} simply makes the residual variance parameter space compact and is analogous to Assumption D in \cite{BaiLi} and Assumption 2 in \cite{CATE}. The functions $h_n$ and $f_{n,g}$ are the minus KL-divergence defined in Proposition \ref{proposition:KLVdelta} (up to terms not dependent on $\delta^2 \bm{\tau}$) and twice the expected log-likelihood under the model $\bm{e}_g \sim N\left( \bm{0}, \bm{V}_g\right)$. Their convergence properties and conditions on their second derivatives in items \ref{item:ICaSE:Technical} and \ref{item:Vg:Technical} are standard likelihood theory assumptions and ensure consistency and identifiability. We can now state Theorem \ref{theorem:ICaSE}:
\begin{theorem}
\label{theorem:ICaSE}
Suppose the data $\bm{Y}$ are generated according to model \eqref{equation:Model_yANDvar} and Assumptions \ref{assumption:Basics} and \ref{assumption:ICaSE} hold. Suppose further that we stop on step \ref{ICaSE:Subspace} of the second iteration of Algorithm \ref{algorithm:ICaSE} for each $k = 1,\ldots,K_{\max}$, where $K_{\max} \geq K$. Then the estimates for $\delta_*^2$, $\bm{\tau}_*$ and $\bm{C}_{\perp}$ from Algorithm \ref{algorithm:ICaSE} are such that
\begin{subequations}
\label{equation:ConfSub}
\begin{align}
\label{equation:ConfSub:V}
\abs{\hat{\delta}^2 - \delta_*^2}, \norm{\hat{\bm{\tau}} - \bm{\tau}_*}_2 &= O_P\left( n^{-1}\right),\,\,\, k \geq K\\
\label{equation:ConfSub:C}
\norm{P_{\bm{C}_{\perp}} - P_{\hat{\bm{C}}_{\perp}}}_F^2 &= O_P\left\lbrace n/\left( \gamma_K p\right) + \left( p \gamma_K\right)^{-1/2} + \left( n\gamma_K\right)^{-1}\right\rbrace,\,\,\, k = K\\
\label{equation:ConfSub:CKlarge}
\norm{P_{\bm{C}_{\perp}} - P_{\hat{\bm{C}}_{\perp}}P_{\bm{C}_{\perp}} }_F^2 &= O_P\left\lbrace n/\left( \gamma_K p\right) + \left( p \gamma_K\right)^{-1/2} + \left( n\gamma_K\right)^{-1}\right\rbrace,\,\,\, k \geq K
\end{align}
\end{subequations}
where $\hat{\delta}^2$ and $\hat{\bm{\tau}}$ depend on $k$.
\end{theorem}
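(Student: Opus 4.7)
My strategy is to analyze the two substeps \ref{ICaSE:Subspace} and \ref{ICaSE:Variance} of ICaSE separately, then stitch them together by induction on the outer factor count $k$, using Proposition~\ref{proposition:KLVdelta} to identify the target of the REML objective as $\delta_*^2 \bm{W}_* = p^{-1}\sum_{g=1}^p \bm{W}_g$. The central algebraic identity is
\begin{align*}
p^{-1}\bm{Y}_2^T \bm{Y}_2 = \bm{C}_{\perp}\bigl(p^{-1}\bm{L}^T\bm{L}\bigr)\bm{C}_{\perp}^T + \delta_*^2 \bm{W}_* + \bm{Z}_p,
\end{align*}
where $\bm{Z}_p$ collects $p^{-1}(\bm{E}_2^T\bm{E}_2 - \sum_g \bm{W}_g)$ and the $\bm{L}^T\bm{E}_2$ cross terms. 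Gaussian concentration bounds $\norm{\bm{Z}_p}_2$ in terms of the random rates $(n/p)^{1/2}$ and $(\gamma_1/p)^{1/2}$ that reappear in \eqref{equation:ConfSub:C}.

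For step \ref{ICaSE:Subspace}, I would apply a sin-$\Theta$ (Davis--Kahan) bound to the rescaled matrix $\hat{\bm{W}}^{-1/2}(p^{-1}\bm{Y}_2^T\bm{Y}_2)\hat{\bm{W}}^{-1/2}$. Its rank-$K$ signal component $\hat{\bm{W}}^{-1/2}\bm{C}_{\perp}(p^{-1}\bm{L}^T\bm{L})\bm{C}_{\perp}^T\hat{\bm{W}}^{-1/2}$ has, by Assumption~\ref{assumption:Basics}\ref{item:assumptionBasics:CtC}--\ref{item:assumptionBasics:LtL}, smallest non-zero eigenvalue of order $\gamma_K/(n-d)$ whenever $\hat{\bm{W}}$ is close to the oracle; its perturbation is $\hat{\bm{W}}^{-1/2}(\delta_*^2 \bm{W}_* - \hat{\bm{W}})\hat{\bm{W}}^{-1/2} + \hat{\bm{W}}^{-1/2}\bm{Z}_p\hat{\bm{W}}^{-1/2}$. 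Dividing by the gap yields a subspace error governed by $\norm{\hat{\bm{W}} - \delta_*^2 \bm{W}_*}$ plus the concentration rates above. For step \ref{ICaSE:Variance}, after profiling out $\bm{L}$, the REML objective reduces to $h_n(\delta^2\bm{\tau})$ up to a stochastic fluctuation and a bias term proportional to $\norm{P_{\hat{\bm{C}}_{\perp}} - P_{\bm{C}_{\perp}}}_F^2$. The local strong concavity of $h_n$ at $\delta_*^2 \bm{\tau}_*$ from Assumption~\ref{assumption:ICaSE}\ref{item:ICaSE:Technical}, combined with a standard M-estimator argmin argument and the fast concentration implied by Assumption~\ref{assumption:Basics}\ref{item:assumptionBasics:np}, yields the $O_P(n^{-1})$ rate in \eqref{equation:ConfSub:V}.

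These pieces are glued together by induction on $k$ via the warm start. At $k=0$, step \ref{ICaSE:startV} fits $\bm{W}$ without modeling $\bm{L}\bm{C}_{\perp}^T$; because the omitted rank-$K$ signal perturbs the score by a constant-rank term, a Taylor expansion of $h_n$ on the compact set $\Theta_*$ (Assumption~\ref{assumption:ICaSE}\ref{item:ICaSE:ThetaStar}) shows $\hat{\bm{W}}^{(0)}$ lies in $\Theta_*$ with bias of order $\gamma_1/n$ plus random error $O_P(n^{-1/2})$. Feeding this into the $k=1$ SVD captures the top factor, after which the REML at $k=1$ removes the corresponding contribution to the $\bm{W}$-bias. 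Iterating from $k=1$ to $k=K$ reduces the bias one factor at a time (thanks to the well-separated eigenvalue condition \ref{item:assumptionBasics:LtL}), so that the $\hat{\bm{W}}$ entering the second SVD at $k = K$ is already accurate to $O_P(n^{-1})$ and produces \eqref{equation:ConfSub:C}. For $K < k \leq K_{\max}$, the extra fitted components correspond to noise directions; they enlarge $\hat{\bm{C}}_{\perp}$ but do not displace the true subspace already captured at $k = K$, which yields the one-sided projection bound \eqref{equation:ConfSub:CKlarge} and preserves the REML rate \eqref{equation:ConfSub:V}.

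The main obstacle will be the warm-start induction. One must show quantitatively that each outer pass in $k$ strictly reduces the $\bm{W}$-bias rather than merely keeping $\hat{\delta}^2\hat{\bm{\tau}}$ in $\Theta_*$; this relies crucially on the well-separated eigenvalue Assumption~\ref{assumption:Basics}\ref{item:assumptionBasics:LtL} so that the SVD extracts distinct factors one at a time instead of a blend. A secondary technical difficulty is cleanly tracking the three terms $n/(\gamma_K p)$, $(p\gamma_K)^{-1/2}$ and $(n\gamma_K)^{-1}$ in \eqref{equation:ConfSub:C} through both substeps: the first arises from the $\bm{E}_2^T\bm{E}_2$ self-noise, the second from the $\bm{L}^T\bm{E}_2$ cross noise, and the third from propagating the $O_P(n^{-1})$ REML error back into the subspace estimate via the second SVD.
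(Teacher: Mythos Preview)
Your two local ingredients --- a perturbation/Davis--Kahan bound for step~\ref{ICaSE:Subspace} and an M-estimator argument for step~\ref{ICaSE:Variance} --- are exactly what the paper formalizes as its auxiliary Theorem~\ref{theorem:CorrPCA} and Lemma~\ref{lemma:Vhat}, and your attribution of the three rate terms in \eqref{equation:ConfSub:C} is essentially correct. The divergence is in the glue.

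Your warm-start induction claims that the REML bias in $\hat{\bm W}$ shrinks ``one factor at a time'' as $k$ increases from $0$ to $K$. But Assumption~\ref{assumption:Basics}\ref{item:assumptionBasics:LtL} forces $\gamma_1/\gamma_K\le c_2$, so the signal-induced bias at each intermediate $k<K$ is of order $\gamma_{k+1}/n\asymp\gamma_1/n$ and does \emph{not} decrease in magnitude; in the regime $\gamma_K\asymp n$ it stays $O(1)$ all the way to $k=K-1$. The well-separated eigenvalue condition lets the SVD pick off distinct directions, but it does not shrink the residual signal's contribution to the REML score. So the mechanism you describe --- gradual bias reduction delivering an already-accurate $\hat{\bm W}$ at $k=K$ --- does not hold as stated.

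The paper sidesteps this entirely with a dichotomy on the signal size (legitimate because $\gamma_1\asymp\gamma_K$). If $\gamma_1=o(n)$, the rank-$K$ signal contributes only $O(\gamma_1/n)=o(1)$ to the REML trace at \emph{every} $k$ (including $k=0$), so any $\hat{\bm W}$ produced along the way already has $\epsilon=o_P(1)$; Lemma~\ref{lemma:Vhat} then upgrades this to $\epsilon=O_P(n^{-1})$ after one full iteration at $k=K$. If instead $\gamma_K\asymp n$, the paper observes that even with a merely bounded warm start $\hat{\bm W}\in\Theta_*$ (so $\epsilon=O(1)$), the first SVD at $k=K$ already satisfies $\norm{n^{-1/2}\hat{\bm Q}^T\bm C_\perp}_2=O_P\{n^{1/2}(\gamma_Kp)^{-1/2}+\epsilon\gamma_K^{-1}\}=O_P(n^{-1})$, which kills the signal's contribution to the subsequent REML and again yields $\epsilon=O_P(n^{-1})$. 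In both branches the intermediate $k<K$ steps play no quantitative role beyond keeping $\hat\delta^2\hat{\bm\tau}\in\Theta_*$; the warm start is only used to propagate the good $\hat{\bm W}$ \emph{forward} from $k=K$ to $k>K$. Your induction could be repaired, but once you track what actually happens at $k=K$ you will find you have rederived this dichotomy.
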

\noindent Theorem \ref{theorem:ICaSE} implies that the column space of $\bm{C}_{\perp}$ is estimated well when $k=K$ and its column space is approximately a subspace of the column space of $\hat{\bm{C}}_{\perp}$ when we overestimate $K$. This result is quite remarkable because besides the additional factor $(n\gamma_K)^{-1}+\left( p \gamma_K\right)^{-1/2}$ (which is $<< n^{-1/2}$), this is the same rate obtained when the samples are uncorrelated (i.e. when $\bm{V}_g$ is a multiple of the identity for every unit $g = 1,\ldots,p$) \citep{ChrisANDDan}.

\subsection{Correlated Bi-Cross Validation to estimate $K$}
\label{subsection:ChooseK}
We propose estimating $K$, the dimension of the column space of $\bm{C}$, with a cross-validation paradigm developed in \citet{bcv} that partitions $\bm{Y}_2$ into a training and test set. Besides having substantially shorter computation time and providing a less variable estimate for $K$, what differentiates our method from the one developed in \citet{bcv} is we provide an approach amenable to dependent data, while their method is only valid for independent data.\par
\indent Our primary point of concern is ensuring our estimates for $K$ are not biased by the correlations in our data \citep{EltsStatLearning,GeneralizedCV}. An alternative, but equivalent, concern is we want to avoid including factors arising from the random effect in our estimate for $K$. To describe our procedure, we assume for notational convenience that $\bm{Y}_{p \times n} = \bm{L}_{p \times K}\bm{C}_{n \times K}^T + \bm{E}_{p \times n}$ where the rows $\bm{e}_g$, $\bm{e}_{g'}$ of $\bm{E}$ are independent for $g \neq g'$ and $\bm{e}_g \sim \left( \bm{0}, \bm{V}_g \right)$ (i.e. $\bm{Y}$ is distributed according to \eqref{equation:Model_yANDvar} with $d = 0$). Algorithm \ref{algorithm:CBCV} provides an outline for the algorithm we use to estimate $K$, which we call CBCV (Correlated Bi-Cross Validation).

\begin{algorithm}[CBCV]
\label{algorithm:CBCV}
Randomly partition the rows of $\bm{Y}$ (i.e. genes) into $F = O(1)$ folds (our software default is $F=5$). Fix some fold $f\in [F]$ as the test set (with $p_{f}$ elements) and let the training set be the rest of the data ($p_{(-f)} = p - p_{f}$).
\begin{enumerate}[label=\textit{(\alph*)}]
\item Fix some $k \in \left\lbrace 0,1,\ldots,K_{\max} \right\rbrace$ and without loss of generality assume the partition of the rows is such that
\begin{align*}
\bm{Y} = \begin{bmatrix}
\bm{Y}_{(-f)}\\
\bm{Y}_{f}
\end{bmatrix} = \begin{bmatrix}
\bm{L}_{(-f)}\bm{C}^T\\
\bm{L}_{f}\bm{C}^T
\end{bmatrix} + \begin{bmatrix}
\bm{E}_{(-f)}\\
\bm{E}_{f}
\end{bmatrix}
\end{align*}
where $\bm{Y}_{(-f)} \in \mathbb{R}^{p_{(-f)} \times n}$, $\bm{Y}_{f} \in \mathbb{R}^{p_{f} \times n}$ are the training set and test sets, respectively, and are independent.\label{item:CBCV:partition}
\item Obtain $\hat{\bm{C}} \in \mathbb{R}^{n \times k}$ and $\hat{\bm{V}}_{(-f)}$ from $\bm{Y}_{(-f)}$ using Algorithm \ref{algorithm:ICaSE}.\label{item:CBCV:LogDet}
\item Define $\bar{\bm{Y}}_{f} = \bm{Y}_{f} \hat{\bm{V}}_{(-f)}^{-1/2}$ and $\hat{\bar{\bm{C}}} = \hat{\bm{V}}_{(-f)}^{-1/2}\hat{\bm{C}}$. Determine the loss for this fold, dimension combination using leave-one-out cross validation:
\begin{align}
\label{equation:LOOLoss}
LOO_{f}\left( k \right) = \sum\limits_{i=1}^n \norm{\bar{\bm{y}}_{f,i} - \hat{\bm{L}}_{f,(-i)} \hat{\bar{\bm{c}}}_i }_{2}^2,
\end{align}
where $\bar{\bm{y}}_{f,i} \in \mathbb{R}^{p_f}$ and $\hat{\bar{\bm{c}}}_i \in \mathbb{R}^k$ are the $i^{\text{th}}$ columns of $\bar{\bm{Y}}_{f}$ and $\hat{\bar{\bm{C}}}^T$, respectively, and $\hat{\bm{L}}_{f,(-i)}$ is the ordinary least squares regression coefficient from the regression $\bar{\bm{Y}}_{f}$ onto $\hat{\bar{\bm{C}}}$ whose $i^{\text{th}}$ column and row have been removed, respectively.\label{item:CBCV:Loss}
\item Repeat this for folds $f = 1,\ldots, F$ and $k=0, 1, \ldots, K_{\max}$ and set $\hat{K}$ to be
\begin{align*}
\hat{K} = \argmin_{k \in \left\lbrace 0,1,\ldots,K_{\max} \right\rbrace} \left\lbrace \sum\limits_{f=1}^F LOO_{f}\left( k \right)\right\rbrace.
\end{align*}
\end{enumerate}\label{item:CBCV:ChooseK}
\end{algorithm}
\indent The form that the loss function takes in equation \eqref{equation:LOOLoss} is not the standard squared loss, but is instead scaled by the estimate $\hat{\bm{V}}_{(-f)}^{-1/2}$. This is sensible because it places more importance on correctly estimating the portion of $\bm{C}$ not captured by the model for the residual variance. However, unless proper care is taken, this scaled loss function would underestimate $K$ simply because the estimated residual variance is larger for underspecified mean models. The restriction that $\log \abs{\hat{\bm{V}}_{(-f)}} = 0$ alleviates this issue by making the loss function scale-invariant, a feature that is critical to the performance of CBCV and something that the minus log-likelihood cannot guarantee.\par
\indent To understand why CBCV gives accurate estimates for $K$, we study the expected leave-one-out squared error for a particular fold, latent dimension pair, conditioned on the training data $\bm{Y}_{(-f)}$:
\begin{align*}
\E \left\lbrace LOO_{f}(k) \mid \bm{Y}_{(-f)}\right\rbrace =& \E\left\lbrace \sum\limits_{i=1}^n\norm{\bm{L}_{f} \bar{\bm{c}}_i - \hat{\bm{L}}_{f,(-i)} \hat{\bar{\bm{c}}}_i}_2^2 \mid \bm{Y}_{(-f)}\right\rbrace + \E\left[\Tr\left\lbrace \hat{\bm{V}}_{(-f)}^{-1}\bm{E}_{f}^T \bm{E}_{f}\right\rbrace   \mid \bm{Y}_{(-f)}\right]\\
&- 2\sum\limits_{i=1}^n\E \left\lbrace \left( \bm{L}_{f} \bar{\bm{c}}_i - \hat{\bm{L}}_{f,(-i)} \hat{\bar{\bm{c}}}_i \right)^T \bm{E}_{f} \hat{\bm{V}}_{(-f)}^{-1/2} \bm{a}_i \mid \bm{Y}_{(-f)} \right\rbrace = I + II + III
\end{align*}
where $\bm{a}_i \in \mathbb{R}^n$ is the $i^{\text{th}}$ standard basis vector. In standard cross validation, $\hat{\bm{V}}_{(-f)} = I_n$, meaning the residual variance term ($II$) would be constant for all $k=0,1,\ldots,K_{\max}$ and the correlation term ($III$) would be 0, since $\hat{\bm{L}}_{f,(-i)}$ would not be correlated with the $i^{\text{th}}$ column of $\bm{E}_{f}$, $\bm{E}_{f} \bm{a}_i$. Therefore, the minimizer of the squared bias term ($I$) would also minimize the cross validated error, which is exactly what one would hope because $I$ would be minimized when $k=K$ \citep{bcv}. However, since we must account for the correlation between samples, we now need to ensure that $I + II$ is minimized when $k=K$ and that the correlation term does not contribute to the expected loss. The restriction that $\log \abs{\hat{\bm{V}}_{(-f)}} = 0$ helps ensure this is the case, since
\begin{align*}
(np_f)^{-1} II = \delta_{f*}^2 n^{-1}\Tr\left\lbrace \hat{\bm{V}}_{(-f)}^{-1} \bm{V}_{f*} \right\rbrace \geq \delta_{f*}^2,
\end{align*}
where the inequality holds with equality if and only if $\hat{\bm{V}}_{(-f)} = \bm{V}_{f*}$ by Jensen's Inequality (the average log-eigenvalue of $\hat{\bm{V}}_{(-f)}^{-1} \bm{V}_{f*}$ is 0, which is no greater than the log of the average eigenvalue of $\hat{\bm{V}}_{(-f)}^{-1} \bm{V}_{f*}$). Since an accurate estimate of $\bm{C}$ begets an accurate estimate of $\bm{V}_{(-f)*} \approx \bm{V}_{f*}$, the minimizer of $I+II$, and also $I+II+III$ if $III\approx 0$, should be very close to $K$. We make this rigorous with the following theorem:

\begin{theorem}
\label{theorem:CBCV}
Suppose Assumptions \ref{assumption:Basics} and \ref{assumption:ICaSE} hold for $d = 0$ with $\gamma_K \to \infty$ and define $\hat{\bar{h}}_i$ to be the $i^{\text{th}}$ leverage score of $\hat{\bar{\bm{C}}}$ (i.e. the $i^{\text{th}}$ diagonal element of $P_{\hat{\bar{C}}}$) and for each fold $f \in [F]$, suppose we modify the loss function in \eqref{equation:LOOLoss} to be
\begin{align}
\label{equation:NewLOO}
LOO_f(k) = \begin{cases}
\sum\limits_{i=n}^n \norm{\bar{\bm{y}}_{f,i} - \hat{\bm{L}}_{f,(-i)} \hat{\bar{\bm{c}}}_i}_2^2, & \mathop{\max}\limits_{i \in [n]} \hat{\bar{h}}_i \leq 1 - \eta \frac{\log m_n}{m_n}\\
\infty, & \text{otherwise}
\end{cases}
\end{align}
where $\eta > 0$ is a constant and $m_n = \min\left( n, p/n \right)$. Then if $K_{\max} \geq K$ and the maximum leverage score of $\bm{V}^{-1/2}_*\bm{C}$ is $o(1)$ as $n \to \infty$,
\begin{align*}
\argmin_{k \in \left\lbrace 0,1,\ldots,K_{\max}\right\rbrace}\E\left\lbrace LOO_f(k) \mid \bm{Y}_{(-f)}, \bm{\pi}\right\rbrace = K + o_P(1)
\end{align*}
where $\bm{\pi}$ is sampled uniformly from the set of all permutations on $[p]$ and partitions the $p$ units into $F$ folds.
\end{theorem}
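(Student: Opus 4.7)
The plan is to sharpen the decomposition $\E\{LOO_f(k) \mid \bm{Y}_{(-f)}\} = I + II + III$ already sketched before the theorem. Using the standard leave-one-out identity,
\begin{align*}
\bar{\bm{y}}_{f,i} - \hat{\bm{L}}_{f,(-i)}\hat{\bar{\bm{c}}}_i = \bar{\bm{Y}}_f \bigl(I - \hat{\bar{\bm{H}}}\bigr)\bm{a}_i \big/ \bigl(1 - \hat{\bar{h}}_i\bigr),\qquad \hat{\bar{\bm{H}}} = P_{\hat{\bar{\bm{C}}}},
\end{align*}
I would first rewrite $LOO_f(k)$ in closed form and Taylor-expand $(1-\hat{\bar{h}}_i)^{-2}$ about $1$. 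Applying Theorem \ref{theorem:ICaSE} to the training fold, together with the $o(1)$ assumption on the maximum leverage of $\bm{V}_*^{-1/2}\bm{C}$ and the truncation \eqref{equation:NewLOO}, shows that the inflation factor is uniformly bounded on an event of probability tending to $1$, and the expansion is valid simultaneously for every $k \in \{0, 1, \ldots, K_{\max}\}$.

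For $k \geq K$, the subspace consistency \eqref{equation:ConfSub:V}--\eqref{equation:ConfSub:CKlarge} implies $\hat{\bm{V}}_{(-f)}$ is close to $\bm{V}_{(-f)*}$ and $P_{\hat{\bar{\bm{C}}}} \bm{V}_{(-f)*}^{-1/2}\bm{C} \approx \bm{V}_{(-f)*}^{-1/2}\bm{C}$, so the bias $I$ is $o_P(np_f)$. The variance $II$ reduces to $p_f\,\Tr\{\hat{\bm{V}}_{(-f)}^{-1}\bm{V}_{f*}(I-\hat{\bar{\bm{H}}})\}\{1 + O(\max_i \hat{\bar{h}}_i)\}$, and the AM-GM argument from the main text, applied to the eigenvalues of $\hat{\bm{V}}_{(-f)}^{-1}\bm{V}_{f*}$ under the scale constraint $\log|\hat{\bm{V}}_{(-f)}|=0$, shows this is minimized when $\hat{\bm{V}}_{(-f)} \propto \bm{V}_{f*}$. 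At $k = K$ this near-equality holds; at $k > K$ the algorithm projects out $k - K$ additional noise directions while $\hat{\bm{V}}_{(-f)}$ hardly changes, so the LOO inflation $(1-\hat{\bar{h}}_i)^{-2}$ produces a positive Akaike-style penalty of order $(k-K)p_f$, strictly raising $II$ above its value at $K$. The cross term $III$ is $o_P(1)$ by independence of $\bm{E}_f$ from $\hat{\bm{L}}_{f,(-i)}$ after projection, together with the leverage bound.

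For $k < K$, Theorem \ref{theorem:ICaSE} offers no direct guarantee, so a deterministic lower bound on $I$ is required. Writing the residual as $\bm{L}_f\bm{C}^T\hat{\bm{V}}_{(-f)}^{-1/2}(I - \hat{\bar{\bm{H}}})\bm{a}_i$ plus a mean-zero noise term, Assumption \ref{assumption:Basics}\ref{item:assumptionBasics:CtC}--\ref{item:assumptionBasics:LtL} force every nonzero singular value of $\bm{L}_f\bm{C}^T\hat{\bm{V}}_{(-f)}^{-1/2}$ to have magnitude $\Theta(\sqrt{p_f \gamma_K})$, and the uniform spectral bounds on $\hat{\bm{V}}_{(-f)}$ from Assumption \ref{assumption:ICaSE}\ref{item:ICaSE:ThetaStar} keep $\hat{\bm{V}}_{(-f)}^{-1/2}\bm{C}$ well-conditioned; projecting onto the complement of the rank-$k$ matrix $\hat{\bar{\bm{C}}}$ preserves at least $K - k$ of these singular values, yielding $I \gtrsim p_f\gamma_K \to \infty$. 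This dwarfs the $O(p_f n)$ scale of $I + II$ at $k = K$ and, crucially, any compensating reduction in $II$: the constraint $\log|\hat{\bm{V}}_{(-f)}| = 0$ combined with AM-GM keeps $II \geq p_f n\,|\bm{V}_{f*}|^{1/n}$, so the algorithm cannot hide the missed signal by inflating a single direction of the variance estimate. The main obstacle is precisely this $k < K$ regime; the decisive ingredient is the scale constraint imposed in \eqref{equation:WrongModel_V}, which is exactly what prevents the algorithm from cheating. Once the three regimes are separated, promoting the pointwise comparisons to the uniform $\argmin$ statement is routine because $K_{\max} = O(1)$.
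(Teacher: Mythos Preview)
Your decomposition and treatment of terms $I$ and $II$ are essentially the paper's own approach, and the $k<K$ lower bound via the retained singular values together with the $\log|\hat{\bm{V}}_{(-f)}|=0$ constraint is the right idea. The genuine gap is in your handling of the cross term $III$.

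You claim $III=o_P(1)$ ``by independence of $\bm{E}_f$ from $\hat{\bm{L}}_{f,(-i)}$ after projection''. This independence is false in exactly the setting the theorem addresses. Conditional on $\bm{Y}_{(-f)}$, the estimator $\hat{\bm{L}}_{f,(-i)}$ is a function of the columns of $\bar{\bm{E}}_f=\bm{E}_f\hat{\bm{V}}_{(-f)}^{-1/2}$ other than $i$, and the $i$th column of $\bar{\bm{E}}_f$ is correlated with those columns whenever $\hat{\bm{V}}_{(-f)}^{-1/2}\bm{V}_g\hat{\bm{V}}_{(-f)}^{-1/2}\neq c_g I_n$; that is the whole point of correlated residuals. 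The paper does not use independence at all. It computes the conditional expectation explicitly,
\[
III = -\frac{2}{\delta_{f*}^2 n}\Tr\Bigl\{\hat{\bm{V}}_{(-f)}^{-1/2}\bm{V}_{f*}\hat{\bm{V}}_{(-f)}^{-1/2}\sum_{i=1}^n \tfrac{1}{1-\hat{\bar{h}}_i}\bm{A}_{(-i)}\hat{\bar{\bm{H}}}_i\bm{a}_i^T\Bigr\},
\]
notes the trace of the inner sum is exactly zero, and then bounds the full expression by Cauchy--Schwarz as
\[
|III|\;\le\;\Bigl(n^{-1}\Tr\bigl\{(I-\hat{\bm{V}}_{(-f)}^{-1/2}\bm{V}_{f*}\hat{\bm{V}}_{(-f)}^{-1/2})^2\bigr\}\Bigr)^{1/2}\Bigl(n^{-1}\sum_i \tfrac{\hat{\bar{h}}_i}{1-\hat{\bar{h}}_i}\Bigr)^{1/2}.
\]
The first factor is $O_P(n^{-1}+p^{-1/2})$ \emph{only because} $\hat{\bm{V}}_{(-f)}$ is consistent for $\bm{V}_{f*}$ when $k\ge K$; the second is controlled by the leverage truncation in \eqref{equation:NewLOO}. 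Together these give $III=o_P(n^{-1})$, not merely $o_P(1)$.

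The scale matters: after normalising by $\delta_{f*}^2 n p_f$, the separation between $k=K$ and $k>K$ is exactly $(k-K)/n\{1+o_P(1)\}$, so a bound of $o_P(1)$ on $III$ would swamp that gap and leave the $\argmin$ unidentified on $\{K,K+1,\ldots,K_{\max}\}$. You need $III=o_P(n^{-1})$, and the route to it runs through the closeness of $\hat{\bm{V}}_{(-f)}$ to $\bm{V}_{f*}$, not through any independence.
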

\indent First, the condition on the leverage scores of $\bm{V}^{-1/2}_*\bm{C}$ help ensure there are no influential points that drive estimates of $\bm{L}_f$ substantially more than others and is a weak assumption, given that the average leverage score of $\bm{V}^{-1/2}_*\bm{C}$ is $K/n \to 0$. Second, the re-definition of the loss function in \eqref{equation:NewLOO} is purely for theoretical reasons to ensure we do not divide by zero and only increases the loss when $k \neq K$. We have not encountered any simulations or real data examples where the maximum leverage score was large enough to justify setting the loss to infinity.\par
\indent An important conclusion of Theorem \ref{theorem:CBCV} is that we can only guarantee our current version of CBCV to correctly estimate $K$ when $p >> n$, since plugging in $\hat{\bm{V}}_{(-f)}$ as an estimate for $\bm{V}_f^*$ in step \ref{item:CBCV:Loss} of Algorithm \ref{algorithm:CBCV} requires $\bm{V}_{(-f)*} \approx \bm{V}_{f*}$, which is only true when the number of units in the training and test sets is large. This is typically the case is gene expression and methylation data when $n$ is on the order of hundreds and $p$ ranges from $10^4 - 10^6$. However, when $n$ is on the same order as $p$ (e.g. in some metabolomic data), it may be wise to estimate $\bm{V}_{f*}$ from $\bm{Y}_f$, although caution will need to be taken to ensure estimates for $K$ are not biased by training and testing with the same data.\par
\indent Theorem \ref{theorem:CBCV} applies when $b=1$, $\bm{B}_1 = I_n$ and $\tau_{*} = 1$, i.e. when samples are assumed independent. When this assumptions holds, we can use results from \cite{ChrisANDDan} to extend Theorem \ref{theorem:CBCV} by removing the assumption that $\gamma_1/\gamma_K \leq c_2$ in item \ref{item:assumptionBasics:LtL} of Assumption \ref{assumption:Basics} (i.e. the confounding effects are all on the same order of magnitude). This is contrary to parallel analysis, a permutation method generally attributed to \cite{BujaFA} and used as the default method to choose $K$ in many software packages \citep{SVAinR,dSVAinR,vicarinR}, which generally fails to capture factors with moderate to small eigenvalues in the presence of factors with larger eigenvalues \citep{PermChooseK}. We show this empirically and discuss it in greater detail in Section \ref{section:SimulatedData}.

\subsection{De-biasing estimates for the main effect in correlated data}
\label{subsection:EstOmega}
Once we have estimated the dimension of the subspace generated by the hidden covariates and the portion of the the subspace orthogonal to the design matrix, $\bm{C}_{\perp}$, we need only estimate the portion of $\bm{C}$ explainable by $\bm{X}$. While $\bm{C}_{\perp}$ allows one to estimate the confounding effects $\bm{\ell}_g$ and the variance $\bm{V}_g$, estimating $P_{X}\bm{C}$ allows one to distinguish the direct effect of $\bm{X}$ on $\bm{y}_g$ from the indirect effect mediated through the latent factors $\bm{C}$, and helps make results reproducible.\par
\indent In order to complete Algorithm \ref{algorithm:EsimationProcedure} and estimate $\bm{C}$, we must first specify a suitable $\bm{G}$ from \eqref{equation:DataSplitting}. Proposition \ref{proposition:KLVdelta} suggests that a reasonable choice would be $\bm{G} = \hat{\bm{V}}$, where $\hat{\bm{V}}$ is computed in Algorithm \ref{algorithm:ICaSE}. From here on out, we assume $K$ is known and define $\bm{y}_{g_1}, \bm{\Omega}$ and $\bm{Y}_1$ from \eqref{equation:yg1}, \eqref{equation:OmegaWLS_0} and \eqref{equation:Yi} by setting $\bm{G} = \hat{\bm{V}}$.\par 
\indent Our strategy for estimating $\bm{\Omega}$ is to regress $\bm{Y}_1$ onto the estimate for $\bm{L}$ obtainable after completing Algorithm \ref{algorithm:ICaSE}, where
\begin{align}
\label{equation:Lhat}
&\hat{\bm{\ell}}_g = \left(\hat{\bm{C}}_{\perp}^T \hat{\bm{W}}^{-1} \hat{\bm{C}}_{\perp}\right)^{-1}  \hat{\bm{C}}_{\perp}^T \hat{\bm{W}}^{-1} \bm{y}_{g_2}.
\end{align}
In order to eventually perform inference on $\bm{\beta}_g$ ($g=1,\ldots,p$), we will require the estimator $\hat{\bm{\Omega}}$ to be such that $n^{1/2}\norm{\hat{\bm{\Omega}} - \bm{\Omega}}_2 = o_P(1)$. If the main effect $\bm{\beta} = \bm{0}$, one possible estimator for $\bm{\Omega}$ is the ordinary least squares estimate
\begin{align}
\label{equation:OmegaEstNaive}
\hat{\bm{\Omega}} = \bm{Y}_1^T \hat{\bm{L}} \left( \hat{\bm{L}}^T \hat{\bm{L}} \right)^{-1}.
\end{align}
When samples are independent, this strategy is similar to those used in \cite{LEAPP,Houseman,BiometrikaConfounding,CATE,Fan1}, although the exact estimators differ slightly. However, \cite{ChrisANDDan} show that depending on the size of the eigenvalues $\gamma_1,\ldots,\gamma_K$, \eqref{equation:OmegaEstNaive} underestimates $\bm{\Omega}$ when samples are independent because $\hat{\bm{L}}$ is a noisy estimate for the design matrix $\bm{L}$, and suggest a bias correction that amounts to replacing $\hat{\bm{L}}^T \hat{\bm{L}}$ with a better estimate for $\bm{L}^T \bm{L}$. Therefore, we use a de-biased estimate of $\bm{\Omega}$, which is analogous to the estimator used in \cite{ChrisANDDan} when samples are assumed independent. Before we provide the estimator, we make the following assumption about the sparsity of the $d$ columns of $\bm{\beta}$:
\begin{assumption}
\label{assumption:B}
Let $s_r$ be the fraction of non-zero terms in the $r^{\text{th}}$ main effect, i.e.
\begin{align*}
s_r = p^{-1}\sum\limits_{g=1}^p I\left( \bm{\beta}_{gr} \neq 0\right) \quad (r=1,\ldots,d).
\end{align*}
Then $s_r = o\left( n^{-3/2}\gamma_K \right)$ for all $r = 1,\ldots, d$ and $\mathop{\max}\limits_{g \in [p]} \norm{\bm{\beta}_g}_{2} \leq c_4$ for some $c_4 > 0$ not dependent on $n$ or $p$.
\end{assumption}
This provides an explicit relationship between the maximum allowable sparsity on the main effect and the informativeness of the data $\bm{Y}$ for estimating $\bm{C}$: the more signal in $\bm{L}\bm{C}_{\perp}^T$ (i.e. the larger $\gamma_K$), the part of $\bm{L}\bm{C}^T$ unequivocally distinguishable from $\bm{B}\bm{X}^T$, the less stringent Assumption \ref{assumption:B} becomes. This is the same maximum allowable sparsity assumed in \citet{CATE} and \citet{ChrisANDDan}, both of which assumed $\bm{V}_g = I_n$ for all $g = 1,\ldots, p$. While this may seem restrictive, we show through simulation in Section \ref{section:SimulatedData} that we can accurately estimate $\bm{C}$ even when this assumption is violated. We can now provide our estimator for $\bm{\Omega}$, as well as a lemma about its asymptotic behavior:
\begin{lemma}
\label{lemma:OmegaWLS}
Suppose Assumptions \ref{assumption:Basics}, \ref{assumption:ICaSE} and \ref{assumption:B} hold, $K$ is known and we estimate $\bm{C}_{\perp}$, $\bm{\tau}_*$ and $\delta_*^2$ according to Algorithm \ref{algorithm:ICaSE}. If we define the estimate for $\bm{\Omega}$ to be
\begin{align}
\label{equation:OmegaWLS_bc}
\hat{\bm{\Omega}}_{bc} = \bm{Y}_1^T \hat{\bm{L}}\left\lbrace \hat{\bm{L}}^T \hat{\bm{L}} - p \hat{\delta}^2\left( \hat{\bm{C}}_{\perp}^T \hat{\bm{W}}^{-1}\hat{\bm{C}}_{\perp} \right)^{-1} \right\rbrace^{-1},
\end{align}
then
\begin{align*}
n^{1/2}\norm{\bm{\Omega} - \hat{\bm{\Omega}}_{bc}}_2 = o_P(1).
\end{align*}
\end{lemma}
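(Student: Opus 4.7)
The plan is to substitute the explicit form of $\hat{\bm{L}}$ into the definition of $\hat{\bm{\Omega}}_{bc}$, decompose the numerator $\bm{Y}_1^T\hat{\bm{L}}$ and the bias-corrected Gram matrix $\bar{\bm{M}} := \hat{\bm{L}}^T\hat{\bm{L}} - p\hat{\delta}^2(\hat{\bm{C}}_\perp^T\hat{\bm{W}}^{-1}\hat{\bm{C}}_\perp)^{-1}$ into a principal term plus perturbations, and then verify that every perturbation, scaled by the inverse of the principal term, is $o_P(n^{-1/2})$. Using the ICaSE normalization $(n-d)^{-1}\hat{\bm{C}}_\perp^T\hat{\bm{W}}^{-1}\hat{\bm{C}}_\perp = I_K$, I would write $\hat{\bm{L}} = (n-d)^{-1}\bm{Y}_2\hat{\bm{W}}^{-1}\hat{\bm{C}}_\perp = \bm{L}\bm{R} + \bm{N}$, where $\bm{R} = (n-d)^{-1}\bm{C}_\perp^T\hat{\bm{W}}^{-1}\hat{\bm{C}}_\perp$ and $\bm{N} = (n-d)^{-1}\bm{E}_2\hat{\bm{W}}^{-1}\hat{\bm{C}}_\perp$. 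Assumption~\ref{assumption:Basics}(b) yields the identifiability identity $(n-d)^{-1}\bm{C}_\perp^T\bm{W}_*^{-1}\bm{C}_\perp = I_K$, which together with Theorem~\ref{theorem:ICaSE} implies $\bm{R}^T\bm{R} = I_K + o_P(1)$; hence $\bm{R}$ is close to a $K\times K$ orthogonal matrix with $\|\bm{R}\|_2,\|\bm{R}^{-1}\|_2 = O_P(1)$.

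Next I would analyse $\bar{\bm{M}}$ by expanding $\hat{\bm{L}}^T\hat{\bm{L}} = \bm{R}^T\bm{L}^T\bm{L}\bm{R} + \bm{R}^T\bm{L}^T\bm{N} + \bm{N}^T\bm{L}\bm{R} + \bm{N}^T\bm{N}$. The leading term $\bm{R}^T\bm{L}^T\bm{L}\bm{R}$ has eigenvalues of order $p\gamma_K/n$ by Assumption~\ref{assumption:Basics}(c); the cross terms have entrywise size $O_P(\sqrt{p\gamma_K}/n)$ by a direct second-moment bound. For the noise Gram,
\begin{align*}
\E\!\left[\bm{N}^T\bm{N} \mid \hat{\bm{C}}_\perp,\hat{\bm{W}}\right] = (n-d)^{-2}\hat{\bm{C}}_\perp^T\hat{\bm{W}}^{-1}\Bigl(\sum_{g=1}^{p}\bm{W}_g\Bigr)\hat{\bm{W}}^{-1}\hat{\bm{C}}_\perp,
\end{align*}
and by Proposition~\ref{proposition:KLVdelta} ($\sum_g\bm{W}_g = p\delta_*^2\bm{W}_*$) together with the $O_P(n^{-1})$ rates for $\hat{\bm{W}}$ and $\hat{\delta}^2$ from Theorem~\ref{theorem:ICaSE}, this conditional expectation matches the bias correction $p\hat{\delta}^2(\hat{\bm{C}}_\perp^T\hat{\bm{W}}^{-1}\hat{\bm{C}}_\perp)^{-1}$ up to an operator-norm error that is $o_P(p\gamma_K/n)$. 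A Hanson--Wright-style concentration inequality gives $\|\bm{N}^T\bm{N} - \E[\bm{N}^T\bm{N}\mid\cdot]\|_2 = O_P(\sqrt{p}/n)$, so $\bar{\bm{M}}^{-1} = (\bm{R}^T\bm{L}^T\bm{L}\bm{R})^{-1}\{I_K + o_P(1)\}$. Expanding the numerator $\bm{Y}_1^T\hat{\bm{L}} = \bm{\Omega}\bm{L}^T\bm{L}\bm{R} + \bm{\Omega}\bm{L}^T\bm{N} + \bm{\beta}^T(\bm{L}\bm{R}+\bm{N}) + \bm{E}_1^T(\bm{L}\bm{R}+\bm{N})$, the principal piece returns $\bm{\Omega}$ upon right-multiplication by $\bar{\bm{M}}^{-1}$, and I would bound each remainder as follows: $\bm{\beta}^T\bm{L}\bm{R}$ via $|L_{gk}|\leq c_2$ and the sparsity $s_r = o(n^{-3/2}\gamma_K)$ of Assumption~\ref{assumption:B}; $\bm{\beta}^T\bm{N}$ and $\bm{\Omega}\bm{L}^T\bm{N}$ via entrywise-variance bounds of size $O(s_r p/n)$ and $O(p\gamma_K/n^2)$ that, after dividing by the $p\gamma_K/n$ signal, become $o_P(n^{-1/2})$ under Assumptions~\ref{assumption:Basics}(d) and~\ref{assumption:B}; and $\bm{E}_1^T(\bm{L}\bm{R}+\bm{N})$ by conditioning on $\hat{\bm{W}},\hat{\bm{C}}_\perp$ and using that $\bm{e}_{g_1}$ and $\bm{e}_{g_2}$ are deterministic linear images of the mutually independent $\bm{e}_g$'s, reducing the task to second-moment control of sums of independent bilinear forms in Gaussian vectors.

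The main obstacle is the bias-correction step: the contribution $\bm{N}^T\bm{N}$ to $\hat{\bm{L}}^T\hat{\bm{L}}$ has expectation depending on the gene-specific $\bm{W}_1,\ldots,\bm{W}_p$, whereas the correction uses only the pooled estimators $\hat{\delta}^2$ and $\hat{\bm{W}}$, so the cancellation at leading order is not automatic. Proposition~\ref{proposition:KLVdelta}'s identity $p^{-1}\sum_g\bm{W}_g = \delta_*^2\bm{W}_*$ is precisely the algebraic fact that makes the correction correct in expectation, and the very fast $O_P(n^{-1})$ rates for $\hat{\bm{W}}$ and $\hat{\delta}^2$ from Theorem~\ref{theorem:ICaSE} ensure the resulting plug-in error is negligible compared to the $p\gamma_K/n$ signal. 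A secondary subtlety is that Assumption~\ref{assumption:Basics}(b),(c) determines $\bm{C}$ only up to a $K\times K$ sign/orthogonal transform, so the statement $n^{1/2}\|\hat{\bm{\Omega}}_{bc}-\bm{\Omega}\|_2 = o_P(1)$ must be read after aligning the true and estimated bases; this alignment is exactly the rotation $\bm{R}$ appearing in the decomposition above.
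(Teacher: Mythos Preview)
Your decomposition $\hat{\bm{L}}=\bm{L}\bm{R}+\bm{N}$ and the term-by-term analysis of $\bm{Y}_1^T\hat{\bm{L}}$ mirror the paper's proof closely; the paper writes $\hat{\bm{L}}=\bm{L}\hat{\bm{v}}+m^{-1}\bm{E}\hat{\bm{V}}^{-1/2}\bm{Q}_{\tilde X}\hat{\tilde{\bm{C}}}_\perp$ and splits $m(\gamma_Kp)^{-1}\hat{\bm{L}}^T\bm{Y}_1$ into the same three pieces (in $\bm{\beta}$, in $\bm{L}\bm{\Omega}^T$, in $\bm{E}$). Two steps in your outline, however, need more than what you state.

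First, your treatment of the Gram matrix is not rigorous as written. You compute $\E[\bm{N}^T\bm{N}\mid\hat{\bm{C}}_\perp,\hat{\bm{W}}]$ and then apply a Hanson--Wright bound, but $\hat{\bm{C}}_\perp$ is a functional of $\bm{E}_2$, so you cannot condition on it and still treat $\bm{E}_2$ as mean-zero Gaussian. The paper avoids this entirely: because $\hat{\bm{C}}_\perp$ comes from the SVD of $\bm{Y}_2\hat{\bm{W}}^{-1/2}$, one has the \emph{exact} identity $m(\gamma_Kp)^{-1}\hat{\bm{L}}^T\hat{\bm{L}}=\diag(\hat\mu_1,\ldots,\hat\mu_K)$, and the eigenvalue expansion $\hat\mu_k=\mu_k+\delta_*^2\gamma_K^{-1}+o_P(n^{-1/2})$ already established in the proof of Theorem~\ref{theorem:ICaSE} immediately yields $m(\gamma_Kp)^{-1}\bar{\bm{M}}=m(\gamma_Kp)^{-1}\bm{L}^T\bm{L}+o_P(n^{-1/2})$. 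Your route can be repaired, but only by first replacing $\hat{\bm{C}}_\perp$ with its deterministic target via the eigenvector representation $\hat{\tilde{\bm{C}}}_\perp=\tilde{\bm{C}}_\perp\hat{\bm{v}}+\sqrt{m}\,\bm{Q}_{\tilde C_\perp}\hat{\bm{z}}$ and the rates on $\hat{\bm v},\hat{\bm z}$ from \eqref{equation:corrPCA:EigsResults}; at that point you are essentially reproducing the paper's argument.

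Second, the claim ``the principal piece returns $\bm{\Omega}$ upon right-multiplication by $\bar{\bm{M}}^{-1}$'' does not follow from $\bm{R}^T\bm{R}=I_K+o_P(1)$ alone: you get $\bm{\Omega}\bm{L}^T\bm{L}\bm{R}\,(\bm{R}^T\bm{L}^T\bm{L}\bm{R})^{-1}=\bm{\Omega}\bm{R}^{-T}$, so you need $\bm{R}=I_K+o_P(n^{-1/2})$, not merely near-orthogonality. In the paper's basis (the one fixed after Theorem~\ref{theorem:CorrPCA}, with $(n-d)^{-1}\bm{C}_\perp^T\hat{\bm{W}}^{-1}\bm{C}_\perp=I_K$) one has $\bm{R}=\hat{\bm{v}}$, and \eqref{equation:corrPCA:EigsResults} gives $\norm{\hat{\bm{v}}-I_K}_2=o_P(n^{-1/2})$; this is exactly what the paper uses in its step (2). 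Your ``alignment'' remark is therefore not an extra degree of freedom to be resolved post hoc---it is a rate that must be \emph{proved}, and the tool is again the eigenvector perturbation bound from the proof of Theorem~\ref{theorem:CorrPCA}.
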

\noindent The $p \hat{\delta}^2\left( \hat{\bm{C}}_{\perp}^T \hat{\bm{W}}^{-1}\hat{\bm{C}}_{\perp} \right)^{-1}$ term in \eqref{equation:OmegaWLS_bc} removes the bias in $\hat{\bm{L}}^T \hat{\bm{L}}$ and reduces to the bias correction used in \cite{ChrisANDDan} when $\bm{V}_g = I_n$ for all $g = 1,\ldots, p$. Amazingly, the rate of convergence of $\hat{\bm{\Omega}}_{bc}$ is the same as that achieved in \cite{ChrisANDDan}.\par
\indent Upon having estimated $\bm{\Omega}$, our estimate for $\bm{C}$ is
\begin{align}
\label{equation:Cestimate}
\hat{\bm{C}} = \bm{X}\hat{\bm{\Omega}}_{bc} + \hat{\bm{V}}\bm{Q}_X\hat{\bm{W}}^{-1} \hat{\bm{C}}_{\perp}.
\end{align}
Theorems \ref{theorem:ICaSE} and \ref{theorem:CBCV} and Lemma \ref{lemma:OmegaWLS} show that our estimate for $\bm{C}$ is accurate, but do not guarantee that our downstream generalized least squares estimates for $\bm{\beta}_g$ will be accurate. Therefore, we prove an additional theorem that states that the generalized least squares estimate for $\bm{\beta}_g$ obtained by plugging in $\hat{\bm{C}}$ for $\bm{C}$ has the same asymptotic distribution as the generalized least squares estimate when $\bm{C}$ is known.

\begin{theorem}
\label{theorem:CGLS}
Suppose the assumptions of Lemma \ref{lemma:OmegaWLS} hold, we estimate $\bm{C}$ according to \eqref{equation:Cestimate} and $\bm{V}_g$ and $\bm{\beta}_g$ via restricted maximum likelihood (REML) and generalized least squares (GLS) using the design matrix 
$\left[
\bm{X} \, \hat{\bm{C}}
\right]$. If the REML estimate $\hat{\bm{v}}_g$ is estimated on the parameter space $\Theta_*$ defined in Assumption \ref{assumption:ICaSE}, the following asymptotic relations hold for the GLS estimate $\hat{\bm{\beta}}_g$ and $\hat{\bm{V}}_g = \bm{V}\left( \hat{\bm{v}}_g\right)$:
\begin{align}
\label{equation:VgAsym}
\norm{\hat{\bm{V}}_g - \bm{V}_g}_2 &= o_P(1)\\
\label{equation:BetaAsym}
\sqrt{n}\left( \hat{\bm{\beta}}_g - \bm{\beta}_g\right) &\edist \bm{Z} + o_P(1)
\end{align}
where
\begin{align*}
&\bm{Z} \sim N\left( 0, \hat{\bm{M}}_n\right)\\
&\hat{\bm{M}}_n = \left( n^{-1}\bm{X}^T \hat{\bm{V}}_g^{-1} \bm{X}\right)^{-1} + \hat{\bm{\Omega}}_g \left( n^{-1}\hat{\bm{C}}_{\perp}^T \hat{\bm{W}}_g^{-1}\hat{\bm{C}}_{\perp}\right)^{-1}\hat{\bm{\Omega}}_g^{T}\\
&\hat{\bm{\Omega}}_g = \left( \bm{X}^T \hat{\bm{V}}_g^{-1}\bm{X}\right)^{-1}\bm{X}^T \hat{\bm{V}}_g^{-1} \hat{\bm{C}}.
\end{align*}
Further, $\norm{\hat{\bm{M}}_n - \bm{M}_n}_2 = o_P(1)$, where $\bm{M}_n$ is the finite sample variance for the generalized least squares estimate for $\bm{\beta}_g$ when $\bm{C}$ and $\bm{V}_g$ are known.
\end{theorem}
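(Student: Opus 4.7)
The plan is to separate $\hat{\bm{\beta}}_g - \bm{\beta}_g$ into a bias term (induced by plugging $\hat{\bm{C}}$ into the design matrix) and a Gaussian noise term, show the bias is $o_P(n^{-1/2})$, and then show the noise term has the claimed asymptotic covariance. I would first treat the consistency \eqref{equation:VgAsym}, which is needed for the variance analysis. The REML estimator $\hat{\bm{v}}_g$ is an M-estimator over the compact set $\Theta_*$; the REML log-likelihood based on the design $\hat{\bm{D}} = \left[ \bm{X} \, \hat{\bm{C}}\right]$ differs from the oracle version (using $\bm{D} = \left[ \bm{X} \, \bm{C}\right]$) by terms that vanish in probability, because Lemma \ref{lemma:OmegaWLS} and Theorem \ref{theorem:ICaSE}, combined with Assumption \ref{assumption:Basics}\textit{(d)}, place the column spaces of $\hat{\bm{D}}$ and $\bm{D}$ asymptotically close. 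The oracle REML criterion's expectation agrees with $f_{n,g}(\bm{v})$ from Assumption \ref{assumption:ICaSE}\textit{(c)} up to an $O\!\left((d+K)/n\right)$ term, so uniform convergence on $\Theta_*$ together with strict concavity at $\bm{v}_g$ yield $\hat{\bm{v}}_g \to \bm{v}_g$ in probability and hence $\|\hat{\bm{V}}_g - \bm{V}_g\|_2 = o_P(1)$.

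For \eqref{equation:BetaAsym}, set $\hat{P}^{\perp}_g = \hat{\bm{V}}_g^{-1/2} P^{\perp}_{\hat{V}_g^{-1/2} \hat{\bm{C}}} \hat{\bm{V}}_g^{-1/2}$ and write
\begin{align*}
\hat{\bm{\beta}}_g - \bm{\beta}_g = \left( \bm{X}^T \hat{P}^{\perp}_g \bm{X}\right)^{-1} \bm{X}^T \hat{P}^{\perp}_g \bm{C} \bm{\ell}_g + \left( \bm{X}^T \hat{P}^{\perp}_g \bm{X}\right)^{-1} \bm{X}^T \hat{P}^{\perp}_g \bm{e}_g.
\end{align*}
The structural identity that makes the bias tractable comes from subtracting the two decompositions $\bm{C} = \bm{X} \bm{\Omega} + \hat{\bm{V}} \bm{Q}_X \hat{\bm{W}}^{-1} \bm{C}_{\perp}$ and $\hat{\bm{C}} = \bm{X} \hat{\bm{\Omega}}_{bc} + \hat{\bm{V}} \bm{Q}_X \hat{\bm{W}}^{-1} \hat{\bm{C}}_{\perp}$, which are valid because $\bm{G} = \hat{\bm{V}}$ in \eqref{equation:DataSplitting}. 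Using $\hat{P}^{\perp}_g \bm{X} = 0$ and $\hat{P}^{\perp}_g \hat{\bm{C}} = 0$, the bias collapses to $(\bm{X}^T \hat{P}^{\perp}_g \bm{X})^{-1} \bm{X}^T \hat{P}^{\perp}_g \hat{\bm{V}} \bm{Q}_X \hat{\bm{W}}^{-1}(\bm{C}_{\perp} - \hat{\bm{C}}_{\perp} \bm{R}) \bm{\ell}_g$ for a suitable orthogonal alignment $\bm{R}$ of $\hat{\bm{C}}_{\perp}$. Combined with operator-norm bounds on $\hat{\bm{V}}$, $\hat{\bm{W}}^{-1}$, $\hat{\bm{V}}_g^{-1}$ coming from Assumptions \ref{assumption:Basics}\textit{(a)} and \ref{assumption:ICaSE}, and with $(\bm{X}^T \hat{P}^{\perp}_g \bm{X})^{-1} = O_P(n^{-1})$, the bias control then reduces to a subspace-distance rate on $\bm{C}_{\perp} - \hat{\bm{C}}_{\perp} \bm{R}$ inherited from Theorem \ref{theorem:ICaSE}\textit{(b)} under Assumption \ref{assumption:Basics}\textit{(d)}.

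For the noise term and variance recovery, conditional on $\hat{\bm{C}}$ and $\hat{\bm{V}}_g$ the noise is mean-zero Gaussian with covariance $(\bm{X}^T \hat{P}^{\perp}_g \bm{X})^{-1} \bm{X}^T \hat{P}^{\perp}_g \bm{V}_g \hat{P}^{\perp}_g \bm{X} (\bm{X}^T \hat{P}^{\perp}_g \bm{X})^{-1}$, which by the consistency of $\hat{\bm{V}}_g$ and the subspace closeness of $\hat{\bm{C}}$ to $\bm{C}$ is asymptotic to $(\bm{X}^T P^{\perp}_g \bm{X})^{-1}$ with $P^{\perp}_g = \bm{V}_g^{-1/2} P^{\perp}_{V_g^{-1/2} \bm{C}} \bm{V}_g^{-1/2}$. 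Applying a Schur complement identity to the inverse of $\bm{D}^T \bm{V}_g^{-1} \bm{D}$ produces the block decomposition
\begin{align*}
\left( \bm{X}^T P^{\perp}_g \bm{X}\right)^{-1} = \left( \bm{X}^T \bm{V}_g^{-1} \bm{X}\right)^{-1} + \bm{\Omega}_g \left( \bm{C}_{\perp}^T \bm{W}_g^{-1} \bm{C}_{\perp}\right)^{-1} \bm{\Omega}_g^T,
\end{align*}
which matches the form of $\bm{M}_n$; replacing population quantities with their consistent plug-ins gives $\hat{\bm{M}}_n$ and $\|\hat{\bm{M}}_n - \bm{M}_n\|_2 = o_P(1)$. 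The main obstacle I anticipate is the bias step: a naive Davis--Kahan-type bound on $\hat{P}^{\perp}_g - P^{\perp}_g$ applied to a signal $\bm{C} \bm{\ell}_g$ of magnitude $O(n^{1/2})$ is not sharp enough to deliver $o_P(n^{-1/2})$, so the identity $\hat{P}^{\perp}_g \hat{\bm{C}} = 0$ is essential because it converts the bias into a term involving only $(\bm{C}_{\perp} - \hat{\bm{C}}_{\perp} \bm{R}) \bm{\ell}_g$, whose magnitude is controlled via the subspace consistency from Theorem \ref{theorem:ICaSE} together with the bounded magnitude of the fixed, low-dimensional vector $\bm{\ell}_g$.
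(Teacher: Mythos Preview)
Your overall architecture is reasonable and your REML consistency sketch matches the paper's, but the bias step has a genuine rate gap and the noise step overlooks a dependence issue.

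\textbf{Bias term.} First a minor slip: $\hat{P}_g^\perp\bm{X} = 0$ is false (it would make $\bm{X}^T\hat{P}_g^\perp\bm{X}$ singular). The $\bm{X}(\bm{\Omega}-\hat{\bm{\Omega}}_{bc})$ piece instead passes through as $(\bm{\Omega}-\hat{\bm{\Omega}}_{bc})\bm{\ell}_g$ via $(\bm{X}^T\hat{P}_g^\perp\bm{X})^{-1}\bm{X}^T\hat{P}_g^\perp\bm{X}=I_d$ and is controlled by Lemma~\ref{lemma:OmegaWLS}. The real problem is the remaining piece. After your reduction you need $\norm{(\bm{C}_\perp-\hat{\bm{C}}_\perp\bm{R})\bm{\ell}_g}_2=o_P(1)$, but $\bm{C}_\perp,\hat{\bm{C}}_\perp$ have operator norm of order $n^{1/2}$, so Theorem~\ref{theorem:ICaSE} only gives $\norm{\bm{C}_\perp-\hat{\bm{C}}_\perp\bm{R}}_2=O_P\!\left(n(\gamma_K p)^{-1/2}\right)$. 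Assumption~\ref{assumption:Basics}\textit{(d)} guarantees $n^{3/2}/(\gamma_K p)\to 0$, not $n^{2}/(\gamma_K p)\to 0$, so a pure subspace-distance bound is off by $n^{1/4}$. The paper does \emph{not} bound $\bm{C}_\perp-\hat{\bm{C}}_\perp$ in operator norm here. It writes $\hat{\bm{C}}_\perp=\bm{C}_\perp\hat{\bm{v}}+(n-d)^{1/2}\hat{\bm{W}}^{1/2}\bar{\bm{Q}}_\perp\hat{\bm{z}}$ using the eigenvector perturbation from the proof of Theorem~\ref{theorem:CorrPCA}, expresses $\hat{\bm{z}}$ explicitly in terms of $\tilde{\bm{E}}^T\bar{\bm{L}}$ and $\tilde{\bm{E}}^T\tilde{\bm{E}}$, and then exploits that $\hat{\bm{z}}$ is being contracted against a \emph{fixed} low-dimensional direction (essentially $n^{-1/2}\bm{V}_g^{-1}\bm{X}$). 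That contraction gains an additional $n^{-1/2}$ from concentration, which is exactly what closes the gap (see Lemma~\ref{lemma:theorem:CGLS_Omega} in the supplement).

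\textbf{Noise term.} The sentence ``conditional on $\hat{\bm{C}}$ and $\hat{\bm{V}}_g$ the noise is mean-zero Gaussian with covariance $\ldots$'' is not justified: $\hat{\bm{C}}_\perp$ is built from $\bm{Y}_2=\bm{Q}_X^T\bm{Y}$ and hence depends on $\bm{Q}_X^T\bm{e}_g$, so $\bm{e}_g$ is not conditionally Gaussian with covariance $\bm{V}_g$ given $\hat{\bm{C}}$. The paper sidesteps this with a different split, $\hat{\bm{\beta}}_g-\bm{\beta}_g=\bm{\Omega}_g\bm{\ell}_g-\hat{\bm{\Omega}}_g\hat{\bm{\ell}}_g+(\bm{X}^T\hat{\bm{V}}_g^{-1}\bm{X})^{-1}\bm{X}^T\hat{\bm{V}}_g^{-1}\bm{e}_g$, shows $\hat{\bm{\ell}}_g-\bm{\ell}_g=(\bm{C}_\perp^T\bm{W}_g^{-1}\bm{C}_\perp)^{-1}\bm{C}_\perp^T\bm{W}_g^{-1}\bm{Q}_X^T\bm{e}_g+o_P(n^{-1/2})$ (which again requires the explicit $\hat{\bm{z}}$ analysis above, not just a subspace bound), and then uses Craig's theorem to get independence of $\bm{Q}_X^T\bm{e}_g$ and $\bm{X}^T\bm{V}_g^{-1}\bm{e}_g$. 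The two independent Gaussian pieces then combine to give $\bm{M}_n$; your Schur-complement identification of the limiting covariance is correct once these preliminary steps are in place.
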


\section{Simulated multi-tissue gene expression data analysis}
\label{section:SimulatedData}
\subsection{Simulation setup and parameters}
\label{subsection:SimSetup}
We simulated the expression of $p = 15,000$ genes from 50 individuals across three tissues with a complicated tissue-by-tissue correlation structure to compare our method against other state of the art methods designed to estimate $K$ \citep{BujaFA,bcv}, the column space of $\bm{C}_{\perp}$ \citep{PCA,BaiLi} and $\bm{\beta}$ \citep{SVA2008,RUV,BiometrikaConfounding,CATE,ChrisANDDan}. We first randomly chose 25 individuals to be in the treatment group and set $\bm{X} \in \left\lbrace 0,1 \right\rbrace^{n}$ to be the treatment status for the $n=150$ samples. We then set $K = 10$ and for given $\bm{A} \in \mathbb{R}^{1 \times K}$ and $\bm{V}_g$ ($g=1,\ldots,p$), simulated $\bm{y}_{g} \in \mathbb{R}^n$, the expression of gene $g$ in the $n$ (individual, tissue) pairs, as follows:
\begin{subequations}
\label{equation:Simuation}
\begin{align}
&\bm{y}_g = \bm{X}\beta_g + \bm{C}\bm{\ell}_g + 2^{-1/2}\bm{V}_g^{1/2}\bm{e}_g, \quad \bm{e}_{gi} \sim T_4 \,\,\, (g=1,\ldots,p; i=1,\ldots,n)\\
&\beta_g \sim 0.8 \delta_0 + 0.2 N\left(0,0.4^2 \right) \,\,\, (g=1,\ldots,p)\\
& \bm{C} = \bm{X}\bm{A} + \bm{\Xi}\left\lbrace (n-4)^{-1} \bm{\Xi}^T \bm{Q}_{[X\, Z]}\bm{W}_*^{-1}\bm{Q}_{[X\, Z]}^T\bm{\Xi}\right\rbrace^{-1/2}, \quad \bm{\Xi} \sim MN_{n \times K}\left( 0, I_n, I_K\right)\\
& \bm{\ell}_{gk} \sim \pi_k\delta_0 + \left( 1-\pi_k\right)N\left( 0, \eta_k^2\right) \,\,\, (g=1,\ldots,p; k=1,\ldots,K)
\end{align}
\end{subequations}
\noindent where $\delta_0$ is the point mass at 0, $\bm{Z} = \bm{1}_{50} \otimes I_3 \in \mathbb{R}^{n \times 3}$ is the tissue-specific intercept and $T_4$ is the t-distribution with four degrees of freedom, and was chosen to emulate real data with heavy tails. We chose to simulate a non-sparse main effect $\bm{\beta}$ to show that we can violate Assumption \ref{assumption:B} and still do inference that is just as accurate as when $\bm{C}$ is known. The values for $\pi_k$, $\eta_k$ and the resulting $\gamma_k$ ($k=1,\ldots,10$) are provided in Table \ref{Table:L}.\par
\Ltab
\indent In order to mirror the complex and gene-specific nature of cross-tissue correlation patterns, we assumed tissues two and three were more similar to one another than to the first and set $\bm{V}_g = I_{50} \otimes \bm{M}_g$ to be
\begin{align*}
&\bm{M}_g = \C\left\lbrace \left( \epsilon_{g1}\,\, \epsilon_{g2} \,\, \epsilon_{g3} \right)^T\right\rbrace\\
&\epsilon_{g1} = \alpha_{g1} + \xi_{g1}, \,\, \epsilon_{g2} = \phi_{g2}\alpha_{g1} + \alpha_{g2} + \xi_{g2}, \,\, \epsilon_{g3} = \phi_{g3}\alpha_{g1} + \rho_{g3} \alpha_{g2} + \xi_{g3} \\
& \alpha_{g1} \sim \left( 0,v_{g1}^2\right), \,\, \alpha_{g2} \sim \left( 0,v_{g2}^2\right), \,\, \xi_{gj} \sim \left( 0, \sigma_{gj}^2\right) \,\,\, (j=1,2,3).
\end{align*}
The constants $v_{g1}^2, \phi_{g2}, v_{g2}^2, \phi_{g3}, \rho_{g3}$ and $\sigma_{gj}^2$ ($j=1,2,3$) were simulated from Gamma distributions with means 0.8, 1.25, 0.4, 0.75, 1 and 0.2, respectively, each with coefficient of variation equal to 0.2, and subsequently re-scaled so that $\delta_*^2 = \abs{p^{-1}\sum\limits_{g=1}^p \bm{Q}_{[X\, Z]}^T\bm{V}_g \bm{Q}_{[X\, Z]}} = 1$. The average correlation matrix using these parameters is given in Table \ref{Table:Corr} below.

\TissueCorr

\noindent This seemingly complex tissue-by-tissue covariance structure is amenable to the variance model assumed in \eqref{equation:Model_var}, since
\begin{align}
\bm{V}_g =& \sum\limits_{r=1}^3 \sum\limits_{s=r}^3 v_{grs}\left\lbrace I_{50}\otimes \left( \bm{a}_{rs} \bm{a}_{rs}^T\right)\right\rbrace = \sum\limits_{r=1}^3 \sum\limits_{s=r}^3 v_{grs} \bm{B}_{rs} \label{equation:BjSimulation}
\end{align}
where $\bm{a}_{rs} \in \mathbb{R}^3$ has a 1 in the $r^{\text{th}}$ and $s^{\text{th}}$ coordinates and 0 everywhere else and 
\begin{align*}
v_{g12}, \, v_{g13}, \, v_{g23} \geq 0 , \quad v_{g11}+v_{g12}+v_{g13}, \, v_{g22}+v_{g12}+v_{g23}, \, v_{g33}+v_{g13}+v_{g23} \geq 0.
\end{align*}
We lastly set $\bm{A} = \alpha \bm{1}_K^T$ and fixed $\alpha$ across all simulated datasets so $\left( \bm{Q}_Z^T \bm{V}_* \bm{Q}_Z\right)^{-1/2}\bm{Q}_Z^T\bm{C}$ explained approximately 30\% of the variability in $\left( \bm{Q}_Z^T \bm{V}_* \bm{Q}_Z\right)^{-1/2}\bm{Q}_Z^T\bm{X}$, on average. For each simulated data set, we set treatment status, $\bm{X}$, to be the covariate of interest and the tissue specific intercepts, $\bm{Z}$, to be nuisance covariates and estimated $K$ using CBCV with $F=3$ folds, $\bm{C}_{\perp}$ and $\bm{V}_*$ using ICaSE and $\bm{C}$ using \eqref{equation:Cestimate}. We then estimated $\bm{V}_g$ and $\bm{\beta}_g$ using restricted maximum likelihood and generalized least squares, respectively, as described in the statement of Theorem \ref{theorem:CGLS}.\par

\subsection{Comparison of estimators for $K$ and $\bm{C}$}
\label{subsection:SimData:EstKandC}
Since (as far as we are aware) our methods for estimating the factor dimension $K$ and $\bm{C}$ are the first methods designed to account for potential correlation among samples, we compared our estimates for $K$, $\bm{C}$ and subsequently $\bm{\beta}$ with state of the art methods designed for data with independent samples. We first evaluated our estimates for $K$ in 50 simulated datasets in Figure~\ref{Fig:EstK} and compared them with the two most widely used methods in the biological literature: parallel analysis \citep{BujaFA} and bi-cross validation \citep{bcv}, as implemented in the R-packages SVA \citep{SVAinR} and CATE \citep{CATEinR}, respectively. As predicted by Theorem \ref{theorem:CBCV}, our method estimates $K$ correctly in 48 out of the 50 simulated data sets (Figure~\ref{Fig:EstK}), while bi-cross validation and parallel analysis drastically overestimate it, since both methods are effectively treating the random effect as part of the latent fixed effect term $\bm{L}\bm{C}_{\perp}^T$. We underestimated $K$ in 2 datasets because the three components with the smallest $\gamma_k$'s were shrouded by the heavy-tailed residuals. When residuals were normally distributed, we estimated $K$ correctly in every dataset.\par
\indent An interesting feature of Figure~\ref{Fig:EstK} is parallel analysis' estimates for $K$ are smaller than bi-cross validation's, which is a manifestation of the more general phenomenon that parallel analysis fails to identify components with smaller eigenvalues $\gamma_k$ in the presence of components with larger eigenvalues. In fact, when we set $\gamma_k = n$ for $k = 1,\ldots,5$ and left the remaining eigenvalues as set in Table \ref{Table:L}, parallel analysis estimated $K$ to be only 5. This is because parallel analysis' approximations of the singular values of $\bm{Y}_2$ under the null hypothesis $\bm{L}\bm{C}_{\perp}^T = \bm{0}$ are obtained by independently permuting the entries in each row of $\bm{Y}_2$, and therefore tend to be large when the signal in $\bm{L}\bm{C}_{\perp}^T$ is large. We refer the reader to Section 3.1 of \cite{PermChooseK} for a more detailed discussion of this phenomenon.\par

\begin{figure}[ht]
\centering
\includegraphics[scale=0.4]{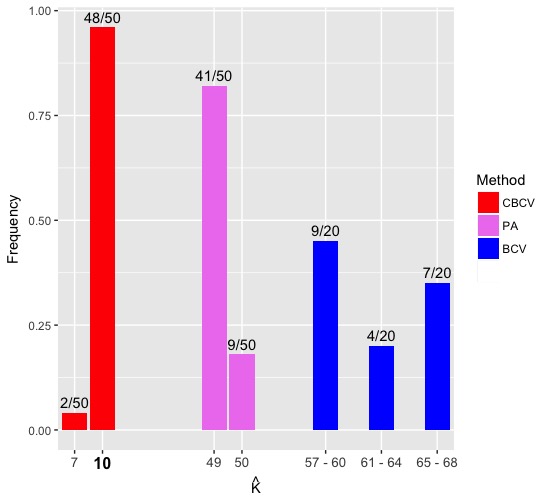}
\caption{Our proposed method's (CBCV) and parallel analysis' (PA) estimates for $K$ in 50 simulated datasets and bi-cross validation's (BCV) estimates in 20 simulated datasets with the true $K=10$. We only estimated $K$ with BCV in 20 datasets because of its slow run time.}\label{Fig:EstK}
\end{figure}

\indent Assuming $K=10$ was known, we next compared the accuracy of our estimates for the column space of $\bm{C}_{\perp}$ using ICaSE with the accuracy of $K$-partial SVD \citep{PCA} and maximum likelihood \citep{BaiLi}, which first estimates $\bm{L}$ and the diagonal matrix $\bm{\Sigma}$ under the quasi likelihood model $\bm{Y}_2 \sim MN_{p \times (n-4)} \left(\bm{0}, \bm{L}\bm{L}^T + \bm{\Sigma}, I_{n-4} \right)$, and sets $\hat{\bm{C}}_{\perp} = \bm{Y}_2^T \hat{\bm{\Sigma}}^{-1}\hat{\bm{L}}\left( \hat{\bm{L}}^T \hat{\bm{\Sigma}}^{-1}\hat{\bm{L}}\right)^{-1}$. For each estimate $\hat{\bm{C}}_{\perp}$, we measured the angle between the column space of $\bm{C}_{\perp}$, $\im\left(\bm{C}_{\perp} \right)$, and $\im\left( \hat{\bm{C}}_{\perp}\right)$ as
\begin{align*}
\measuredangle\left( \bm{C}_{\perp}, \hat{\bm{C}}_{\perp}\right) = \max_{\bm{v} \in \im\left( \bm{C}_{\perp}\right)} \left[\min_{\hat{\bm{v}} \in \im\left(\hat{\bm{C}}_{\perp} \right)}\left\lbrace \cos^{-1}\left( \frac{\hat{\bm{v}}^T \bm{v}}{\norm{\hat{\bm{v}}}_2 \norm{\bm{v}}_2} \right)\right\rbrace \right],
\end{align*}
which is a symmetric function, provided the dimensions of $\im\left(\bm{C}_{\perp} \right)$ and $\im\left( \hat{\bm{C}}_{\perp}\right)$ are the same. In order to benchmark the performance of ICaSE, we also simulated additional datasets $\bar{\bm{Y}} \in \mathbb{R}^{15,000 \times 150}$ with independent columns, which were generated with the parameters given in \eqref{equation:Simuation} and Table~\ref{Table:L}, except we fixed $\bm{V}_1=\cdots=\bm{V}_p = I_n$. The angles between the actual and estimated subspace for 50 simulated datasets $\bm{Y}$ and $\bar{\bm{Y}}$ are summarized in Figure~\ref{Figure:Sim:CtC}. Just as Theorem \ref{theorem:ICaSE} predicts and the discussion at the end of Section \ref{subsection:EstimationModel} anticipates, ICaSE accurately estimates the column space of $\bm{C}_{\perp}$, whereas naive singular value decomposition and maximum quasi likelihood that ignores the between-sample correlation cannot recover the latent subspace. Further, ICaSE's estimate for the column space of $\bm{C}_{\perp}$ when expression across samples exhibits a complex correlation structure is approximately as accurate as when it is known to be independent.\par 

\begin{figure}[ht]
\centering
\subcaptionbox{ \label{Figure:Sim:CtC}}{%
    \includegraphics[scale=0.35]{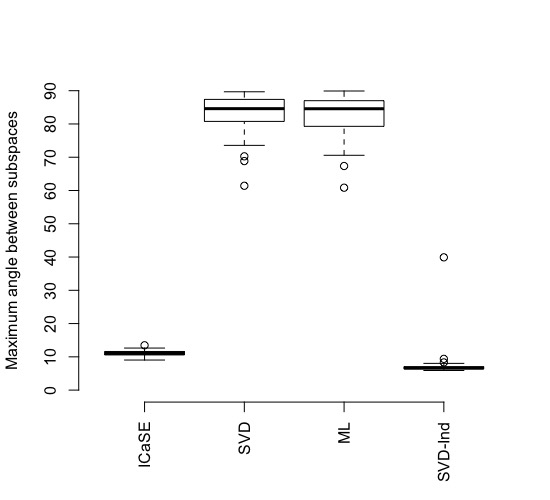} \hspace*{0.5em}%
  }
\subcaptionbox{ \label{Figure:Sim:Beta}}{%
    \includegraphics[scale=0.35]{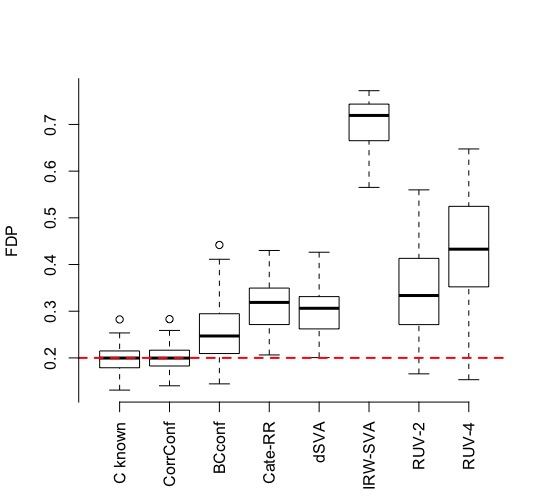}
  }
\caption{(a): The angle between the actual and estimated $\bm{C}_{\perp}$ for 50 simulated datasets $\bm{Y}$, where $\hat{\bm{C}}_{\perp}$ was estimated using our proposed method ICaSE, the first $K$ right singular vectors of $\bm{Y}_2$ (SVD) and maximum quasi likelihood (ML), all with $K=10$ known. Independent SVD (SVD-Ind) estimated $\bm{C}_{\perp}$ as the first $K=10$ right singular vectors of $\bar{\bm{Y}}_2$, which was generated with independent samples. (b): The false discovery proportion (FDP) among genes with a q-value no greater than 0.2 in 50 simulated datasets. We randomly chose 90 genes with $\beta_g=0$ as control genes in RUV-2 and RUV-4.}\label{Fig:Sim}
\end{figure}

\indent We lastly estimated $\bm{\beta}$, the effect of treatment on gene expression in each of the 50 simulated datasets, via generalized least squares with the design matrix $\hat{\bm{M}} = \left[ \bm{X}\, \bm{Z}\, \hat{\bm{C}}\right]$, where $\hat{\bm{C}}$ was estimated assuming $K=10$ was known using CorrConf (our estimator specified in \eqref{equation:Cestimate}), BCconf (the estimator proposed in \cite{ChrisANDDan}), Cate-RR (the software default robust regression estimator suggested in \cite{CATE}), dSVA \citep{BiometrikaConfounding}, iteratively re-weighted SVA (IRW-SVA) \citep{SVA2008}, RUV-2 \citep{RUV}, RUV-4 \citep{RUV4} and when $\bm{C}$ was known (i.e. $\hat{\bm{C}} = \bm{C}$). In order to make the estimation of $\bm{V}_g$ ($g=1,\ldots,p$) computationally tractable, we modeled the simulated residuals $\bm{e}_1,\ldots,\bm{e}_p$ as
\begin{align*}
\bm{e}_g \sim N\left( \bm{0}, v_g \bm{V}\right) \,\,\, (g=1,\ldots,p), \quad \bm{V} = \sum\limits_{r=1}^3 \sum\limits_{s=r}^3 \tau_{rs} \bm{B}_{rs},
\end{align*}
where $\bm{B}_{rs}$ was defined in \eqref{equation:BjSimulation}, and estimated $v_1,\ldots,v_p$ and $\tau_{rs}$ with restricted maximum likelihood for each of the eight methods using the estimated design matrix $\hat{\bm{M}}$. We then computed the \textit{P} value for the null hypothesis $\beta_g = 0$ ($g=1,\ldots,p$) by comparing $\hat{\beta}_g$ to a t-distribution with $n-4-K = n-14$ degrees of freedom, used these \textit{P} values as input into q-value \citep{qvalue} to control the false discovery rate (as this is a software popular among biologists) and deemed a gene to have significantly different expression across the two treatment conditions if its q-value was no greater than 0.2. Figure~\ref{Figure:Sim:Beta} plots the true false discovery proportion in 50 simulated datasets among genes with a q-value less than or equal to 0.2 for each of the eight methods.\par
\indent The performance of our method (CorrConf), as the statement of Theorem \ref{theorem:CGLS} suggests, is nearly indistinguishable from the generalized least squares estimator when $\bm{C}$ is known, with nearly identical power. On the other hand, the other six methods tend to introduce more type I errors because their estimates for $\bm{C}$ do not account for the dependence between the residuals and therefore fail to recover $\bm{\Omega}$, the variation in $\bm{C}$ due to treatment. When $K$ is underestimated, these six methods have even poorer performance than when $K=10$ is known, and still fail to control the false discovery rate and exhibit a decrease in power when $K$ is overestimated to the extent suggested by bi-cross validation and parallel analysis in Figure~\ref{Fig:EstK} (see Figure~\ref{Fig:Supp:Sim} in the Supplement).\par
\indent The anomalous behavior of IRW-SVA in Figure~\ref{Figure:Sim:Beta} is contingent on the size of $\bm{\Omega}$ and the eigenvalues $\gamma_1,\ldots,\gamma_K$. Instead of using the paradigm employed in BCconf, Cate-RR and dSVA of separating $\bm{\beta}$ from $\bm{L}$ by estimating $\bm{L}$ and $\bm{C}_{\perp}$ from the residuals $\bm{Y}_2$ and $\bm{\Omega}$ from $\hat{\bm{L}}$ and $\bm{Y}_1$ (with $\bm{G} = I_n$ in \eqref{equation:DataSplitting}), IRW-SVA circumvents estimating $\bm{\Omega}$ with the noisy design matrix $\hat{\bm{L}}$ by first identifying the genes with no main effect (i.e. $\beta_g=0$) and then estimating $\bm{C}$ with factor analysis on the reduced data matrix. If $K$ is known, the proof of Theorem \ref{theorem:CorrPCA} in the Supplement suggests that if sufficiently many null genes were correctly identified, $\bm{\Omega}$ is sufficiently small, $\gamma_K$ is sufficiently large and $n^{-1}\bm{C}^T \bm{V}_* \bm{C}$ is approximately a multiple of $n^{-1}\bm{C}^T \bm{C}$, then IRW-SVA can accurately recover $\bm{C}$ when samples are correlated. This is not necessarily the case for BCconf, Cate-RR and dSVA, since these methods must additionally use $\hat{\bm{L}}$ to estimate $\bm{\Omega}$, which can be inaccurate depending on the accuracy of $\hat{\bm{L}}$. However, when both $\gamma_k$ and the $k^{\text{th}}$ column of $\bm{\Omega}$ are moderate to large, IRW-SVA attributes a disproportionate amount of the variability in $\bm{C}$ as arising from the direct effect of treatment on expression compared to the other three methods. We refer the reader to Section 5.3 of \cite{CATE} for a more detailed discussion.

\subsection{Modifying existing methods that estimate $K$ to account for sample correlation}
\label{subsection:SimData:ModK}
Given the importance of the choice of $K$ in estimating $\bm{\beta}$ and factor analysis in general \citep{ChooseK1,ChooseK2,PermChooseK}, we discuss three possible adjustments one might suggest to attempt to ameliorate bi-cross validation's and parallel analysis' estimates for $K$ in these simulated data. The simplest would be to merely estimate $K$, and subsequently $\bm{C}$, separately for each tissue, since gene expression is assumed to be independent across individuals. If the data within each tissue were sufficiently informative for $K$ and $\bm{C}$, this procedure should estimate the within-tissue factor dimension to be 10 and $\hat{\bm{C}}_{t} \approx \bm{C}_{t} \bm{R}_{t}$ where $\bm{C}_{t}$ is $\bm{C}$ restricted to the $t^{\text{th}}$ tissue and $\bm{R}_{t}$ is an invertible matrix ($t=1,2,3$). The final estimate for $K$ would then be $3 \times 10=30$ and $\hat{\bm{C}} \approx \bm{\Pi}\left( \bm{C}_{1} \bm{R}_{1} \oplus \bm{C}_{2}\bm{R}_{2} \oplus \bm{C}_{3}\bm{R}_{3}\right)$, where $\bm{\Pi}$ is a permutation matrix that reorders (individual, tissue) pairs. While $\hat{\bm{C}}$ will give approximately unbiased estimates for $\bm{\beta}$, there will be a reduction in the residual degrees of freedom, and therefore power. However, analyzing each tissue separately effectively reduces the sample size (and therefore the eigenvalues $\gamma_1,\ldots,\gamma_{K}$) by 67\% in this simulation example, which is why depending on the analyzed tissue, bi-cross validation and parallel analysis only estimate the within-tissue factor dimension to be anywhere from 1 to 4, which is a marked underestimate of $K$.\par
\indent To discuss the remaining two adjustments, we note that in this simulation example, $\bm{V}_g \approx a_gI_n + b_g \bm{B} = a_gI_n + b_g \bm{Z}_B \bm{Z}_B^T$, where $\bm{B}$ is a partition matrix that groups the $n$ samples into $n/3$ individuals and the columns of $\bm{Z}_B \in \mathbb{R}^{n \times n/3}$ are indicators specifying from which individual the sample originated. The second alteration would be to include $\bm{Z}_B$ in the set of nuisance covariates and restrict $\bm{Y}$ to the set of $2n/3$ within-individual contrasts. However, this effectively reduces the sample size from $n$ to $2n/3$ and shrinks the $\gamma_k$'s by at least 33\%, thereby making it harder to differentiate the latent signal $\bm{L}\bm{C}_{\perp}^T$ from the noise. In fact, bi-cross validation and parallel analysis had median estimates of $K$ equal to 8 and 6, respectively, using this alteration. The third adjustment, which avoids dramatically reducing the sample size, is to rotate $\bm{Y}_2$ by the eigenvectors of $\bm{Q}_{[X\, Z]}^T\bm{B}\bm{Q}_{[X\, Z]}$, which in this simulation example shrinks the between-sample dependence but increases the heterogeneity of the sample-specific residual variances. Since parallel analysis only compares the singular values of $\bm{Y}_2$ with singular values under a bootstrapped null model, rotating $\bm{Y}_2$ did not change parallel analysis' estimates for $K$. On the other hand, bi-cross validation's estimates for $K$ should change because cross validation will not be as sensitive to correlations between samples. However, its estimates for $\bm{C}_{\perp}$ will still be inaccurate because of the heterogeneity in sample-specific residual variances, which is why its median estimate for $K$ was only 5 in this simulation example.

\section{Sex-specific DNA methylation patterns in a longitudinal twin study}
\label{section:RealData}
We next applied our method to identify sex-specific DNA methylation patterns from a longitudinal twin study using data previously published in \cite{Martino}. The authors measured the DNA methylation of 10 monozygotic (MZ) and 5 dizygotic (DZ) Australian twin pairs (all DZ twins were both male or female) at birth and 18 months on the Infinium HumanMethylation450 BeadChip platform in buccal epithelium, a relatively homogeneous tissue. After probe and sample quality filtering and data-normalization, the authors were left with $p = 330,168$ methylation sites (CpGs) whose methylation was quantified in 29 male and 24 female ($n = 53$) samples as the difference between log-methylated and log-unmethylated probe intensity \citep{Mvalue} (see \cite{Martino} for all pre-processing steps). We then used our proposed method to choose $K$ and estimate $\bm{\beta}$ (CBCV-CorrConf), BCconf, Cate-RR, dSVA and IRW-SVA to identify sex-associated CpGs (CpGs whose methylation differed in males and females), and subsequently validated each method's findings using sex-associated CpGs identified at birth in previous studies with substantially larger sample sizes. We did not compare our method with RUV-2 or RUV-4, since we did not have access to control CpGs.\par 
\indent We first show that we can write the covariance of the 53 observations at each CpG as a linear combination of six positive semi-definite matrices. Let $y_{m,t,a}$ be the measured DNA methylation for twin $t \in \left\lbrace 1,2 \right\rbrace$, from mother $m \in [15]$ at age $a \in \left\lbrace 0, \text{18m} \right\rbrace$, where samples with different mothers were assumed to be independent and twin 1 and twin 2 from the same mother were assumed to be exchangeable. A preliminary analysis showed that the correlation between MZ and DZ twins at both ages was approximately the same, which is consistent with the observation that methylation patterns are in large part determined by environmental exposures \citep{eLifeGalanter,MethEnv}. Therefore, the $4 \times 4$ covariance matrix for $\bm{y}_m = \left(y_{m,1,0} \, y_{m,1,\text{18m}} \, y_{m,2,0} \, y_{m,2,\text{18m}} \right)^T$ completely determined the $n \times n$ covariance matrix for each CpG. We avoid assuming a generative model for $\C\left( \bm{y}_m\right)$ by only making assumptions on the pairwise covariances, which averts potential biases in our estimate for $\bm{V}_*$, and therefore $\bm{C}$. First, we would expect the covariance between observations made on the same individual (or sample) to be at least as large as those made on different individuals (or samples). Second, one might also expect that the shared variance for twins at the same age be as least as large as that at different ages. That is, for $a_1 \neq a_2$ and $t_1 \neq t_2$,
\begin{subequations}
\label{equation:MartinoCovariance}
\begin{align}
0\leq \C\left( y_{m,t_1,a_1}, y_{m,t_2,a_2}\right) \leq \C\left( y_{m,t_1,a_1}, y_{m,t_1,a_2}\right) &\leq \V\left( y_{m,t_1,a_i}\right) \,\,\, (i=1,2)\\
0 \leq \C\left( y_{m,t_1,a_1}, y_{m,t_2,a_2}\right) \leq \C\left( y_{m,t_1,a_i}, y_{m,t_2,a_i}\right) & \leq \V\left( y_{m,t_1,a_i}\right) \,\,\, (i=1,2).
\end{align}
\end{subequations}
Therefore, we can write the covariance matrix for $\bm{y}_m$ as
\begin{align*}
\C\left(\bm{y}_m\right) =& v_{\alpha} \bm{1}_4 \bm{1}_4^T + v_{\eta}\left(\bm{1}_2\bm{1}_2^T \oplus \bm{1}_2\bm{1}_2^T\right) + v_{\phi,0} \bm{a}_0 \bm{a}_0^T + v_{\phi,\text{18m}} \bm{a}_{\text{18m}}\bm{a}_{\text{18m}}^T + v_{0} \diag\left( \bm{a}_0\right) + v_{\text{18m}} \diag\left( \bm{a}_{\text{18m}}\right)
\end{align*}
where $\bm{a}_{0}^T = \left(1 , 0 , 1 , 0 \right)$, $\bm{a}_{\text{18m}}^T = \left(0 , 1 , 0 , 1 \right)$ and
\begin{align*}
& v_{\alpha} = \C\left( y_{m,t_1,0}, y_{m,t_2,\text{18m}}\right)\\
& v_{\alpha} + v_{\eta} = \C\left( y_{m,t_1,0}, y_{m,t_1,\text{18m}}\right), \,\, v_{\alpha} + v_{\phi,a} = \C\left( y_{m,t_1,a}, y_{m,t_2,a}\right) \,\,\, (a=0, \text{18m})\\
& v_{\alpha} + v_{\eta} + v_{\phi,a} + v_{a} = \V\left( y_{m,t_1,a}\right)\,\,\, (a=0, \text{18m}).
\end{align*}
By \eqref{equation:MartinoCovariance}, the variance multipliers also lie in a convex polytope that can be written in the form of \eqref{equation:Theta}, and are such that
\begin{align*}
v_{\alpha}\geq 0, \, v_{\eta} \geq 0, \, v_{\phi,a} \geq 0, \, v_{\eta}+v_{a}\geq 0,  \, v_{\phi, a}+v_{a} \geq 0 \,\,\, (a = 0, \text{18m}),
\end{align*}
meaning we can apply Algorithm \ref{algorithm:EsimationProcedure} to estimate $K$ and $\bm{C}$ and subsequently identify sex-specific DNA methylation patterns.\par
\indent Since there was no evidence that the difference in methylation between males and females changed from birth to 18 months, we assumed the methylation at each CpG was a linear combination of the subject's age (birth or 18 months), sex and other unobserved factors to be estimated, where age was a nuisance covariate and sex was the phenotype of interest. We first used CBCV with $F=5$ folds and estimated $K$ to be 2 and then estimated $\bm{C}$ and $\bm{V}_*$ with CorrConf. Our estimates for the six average variance multipliers were all strictly greater than 0 and were consistent with previous observations that one's methylome reflects one's environmental exposures \citep{eLifeGalanter,MethEnv}. That is, the average residual variance at 18 months was 25\% larger than that at birth and the correlation between methylation for twins at 18 months was nearly 20\% larger than that at birth, indicating this set of twin's methylomes tended to converge over the first 18 months of life.\par
\indent We next computed each of the other four method's estimates for $\bm{C}$ using each method's default software to choose $K$: bi-cross validation, the default for Cate-RR and BCconf, or parallel analysis, the default for dSVA and IRW-SVA. Using the full $p \times n$ data matrix, bi-cross validation and parallel analysis estimated $K$ to be 4 and 15, respectively (CBCV also estimated $K$ to be 4 when we applied it assuming $\bm{V}_g$ was a multiple of the identity). The fact that both estimated the latent factor dimension to be greater than CBCV's estimate of 2 is not surprising, as these methods tend to overestimate $K$ when samples are correlated. Since these methods are not designed for dependent data and given the complexity of the sample correlation structure for samples from infants with the same mother, we discuss two adjustments designed to alleviate potential biases in their estimates for $K$ and subsequent test statistics. The first is to split the data matrix into a set of samples measured at birth and another set measured at 18 months, and subsequently rotate the two data matrices to nullify between-twin correlations, which should help to mitigate biases in bi-cross validation's estimates for $K_0$ and $K_{\text{18m}}$ (the number of latent factors at birth and 18 months), but leaves parallel analysis' estimates unchanged. While data splitting removes between-individual correlations, it effectively reduces the sample size by 50\% when estimating $\bm{C}$ because we are forced to estimate the latent factors at birth and 18 months separately. Bi-cross validation estimated $K_0=3$, $K_{\text{18m}}=2$ and parallel analysis estimated $K_0=9$, $K_{\text{18m}}=7$. Lastly, one could split the data by age and twin id into four data matrices, which would ostensibly eliminate the correlation between samples. However, since twin ids are arbitrary, estimates for $K$, $\bm{C}$ and subsequently $\bm{\beta}$ were heavily dependent on how twins were grouped, so we did not include comparisons with this data splitting technique.\par
\indent Once we estimated $\bm{C}$ for all five methods, we estimated the effect due to sex on methylation, and corresponding q-values to control the false discovery rate, exactly as we did for the simulated data in Section \ref{section:SimulatedData} and deemed a CpG a sex-associated CpG if its q-value in that method was no greater than 0.2. Since we obviously did not know the ground truth, we used sex-associated CpGs identified at birth in \cite{BMCValidation} and \cite{NatureValidation} as a validation set to help judge the veracity of each method's findings. \cite{BMCValidation} and \cite{NatureValidation} measured DNA methylation in umbilical cord blood on the Infinium HumanMethylation450 BeadChip platform in children born to 111 unrelated Brazilian and 71 unrelated Mexican American mothers, respectively. The authors of both studies measured and corrected for cord blood cellular composition and identified 2,355 and 1,928 sex-associated CpGs that were also among the 330,168 CpGs studied in \cite{Martino}. Table \ref{Table:RealData} gives the fraction of sex-associated CpGs identified using $\hat{\bm{C}}$ estimated with our method (CBCV-CorrConf), along with the other four methods applied to the full $p \times n$ data matrix, that are also among the 3,532 sex-associated CpGs identified in \cite{NatureValidation} or \cite{BMCValidation}. BCconf's, Cate-RR's, dSVA's and IRW-SVA's results were nearly identical when we used the data splitting method described in the previous paragraph.

\RealData

\indent While it may be the case that most of BCconf, Cate-RR, dSVA and IRW-SVA are actual sex-associated CpGs, the results in Table \ref{Table:RealData} mirror the trends observed in Figure~\ref{Figure:Sim:Beta} (as well as those observed in Figure~\ref{Fig:Supp:Sim} in the Supplement). That is, while these four methods nominally identify more sex-associated CpGs, we are less confident in their results because their estimates for the latent factors reduce the residual variance but likely do not suitably account for the variability in sex explainable by $\bm{C}$, thereby making their results less reproducible.\par 
\indent These results also highlight the importance of the choice of $K$. Estimating $K$ with CBCV (and cross-validation in general) tends to yield more reproducible results because we only include a latent factor if prediction performs suitably well on new, held-out data. When we applied all five methods with $K=2$, BCconf, Cate-RR and dSVA performed similarly with overlaps no greater than 30\% (349 out of dSVA's 1168 sex-associated CpGs ($< 30\%$) were in the validation set). However, 272 out of IRW-SVA's 662 sex-associated CpGs (41\%) were in the validation set, which is nearly identical to CorrConf's results in Table~\ref{Table:RealData}. Similarly, when we set $K=4$ for all methods, dSVA performed nearly identically to Cate-RR, whereas CorrConf and IRW-SVA had overlaps of 27\% and 26\%, respectively, and both ostensibly identified approximately 1,500 sex-associated CpGs. The similarity between CorrConf and IRW-SVA in this dataset is not surprising, since the estimated $\bm{C}$, $\gamma_K$, $\bm{\Omega}$ and $\bm{V}_*$ satisfied the sufficient conditions for IRW-SVA to accurately recover $\bm{C}$ discussed at the end of Section \ref{subsection:SimData:EstKandC}. We believe $K=2$ is the most appropriate choice of $K$ for this dataset because CorrConf's estimate for $\bm{C}$ appears to explain enough of the variance in methylation to achieve reasonable power, while also accurately recovering $\bm{\Omega}$ to control for false discoveries and ensuring the results are reproducible.


\section{Discussion}
\label{section:Discussion}
To the best of our knowledge, we have provided the first method to account for latent factors in high dimensional data with correlated observations. We proved that our proposed method correlated bi-cross validation (CBCV) tends to accurately choose the latent factor dimension and that our estimate for $\bm{C}$ is accurate enough so that asymptotically, inference on the main effect $\bm{\beta}$ is just as accurate as when $\bm{C}$ is known. We also demonstrated our method's finite sample properties by analyzing complex, multi-tissue simulated gene expression data, and also used a real longitudinal DNA methylation data from a twin study to show our method tends to give more reproducible results compared to other state of the art methods.\par
\indent Our proposed procedure is certainly not a panacea for data with arbitrary correlation structure, and relies on the residual variance $\bm{V}_g$ being a linear combination of known positive semi-definite matrices. Data with more complex, non-linear sample correlation structure like lengthy auto regressive processes may not be amenable to \eqref{equation:Model_var} without requiring $b$ to be large, since a linear combination of $p$ non-linear functions will not necessarily have an apriori known functional form. It may be possible to use the intuition we have developed to first estimate the latent factors with the largest effects, estimate the average unit-specific covariances $\bm{V}_1,\ldots,\bm{V}_p$ and use these to subsequently fix $\bm{V}_*$ to approximate $K$ and $\bm{C}$, since ICaSE, CBCV and \eqref{equation:Cestimate} only depend on the unit-specific variances through $\bm{V}_*$. This may be an interesting area of future application.\par 
\indent Another important point is the estimation model we use in \eqref{equation:WrongModel} is not the only way to approximate the data generating model. In fact, one could argue that $\bm{Y} \approx MN_{p \times n}\left\lbrace \bm{\mu}, \diag\left( \delta_1^2, \cdots,\delta_p^2 \right), \bm{V}\right\rbrace$ is a better approximation to \eqref{equation:Model_yANDvar}. However, ICaSE, and subsequently CBCV, have substantially longer run times with this model (and are intractable with ultra high dimensional DNA methylation data), since we need to repeatedly manipulate and decompose a large $p \times n$ matrix as opposed to the relatively small $n \times n$ empirical sample covariance matrix $p^{-1}\bm{Y}^T \bm{Y}$. We also did not see any improvement with this model over model \eqref{equation:WrongModel} in simulations.

\section{Software}
\label{section:Software}

An R package is available for download at https://github.com/chrismckennan/CorrConf. To install, type the following into the R console:
\begin{verbatim}
install.packages("devtools")
devtools::install_github("chrismckennan/CorrConf/CorrConf")
library(CorrConf)
\end{verbatim}

\section*{Acknowledgments}

We thank Carole Ober and Marcus Soliai for providing data and for comments that have substantially improved the manuscript. This research is supported in part by NIH grants R01-HL129735 and R01-MH101820.


\bibliographystyle{rss}
\bibliography{References}

\newpage

\endgroup

\setcounter{equation}{0}
\setcounter{theorem}{0}
\setcounter{lemma}{0}
\setcounter{corollary}{0}
\setcounter{figure}{0}
\renewcommand{\thefigure}{S\arabic{figure}}
\renewcommand{\theequation}{S\arabic{equation}}
\renewcommand{\thesection}{S\arabic{section}}
\renewcommand{\thetheorem}{S\arabic{theorem}}
\renewcommand{\thelemma}{S\arabic{lemma}}
\renewcommand{\thecorollary}{S\arabic{corollary}}

\section{Supplementary Material}
\label{Section:Supplement}
\subsection{Additional simulation results}
\label{subsection:Supp:SimResults}
Here we include additional results regarding the estimation of the effect of interest, $\bm{\beta}$, from the simulations discussed in Section \ref{section:SimulatedData} when $K$ is not known to be 10. Since all competing methods perform worse when $K$ is underestimated, we estimated $\bm{\beta}$ using BCconf, Cate-RR, dSVA, IRW-SVA, RUV-2 and RUV-4 in the 50 simulated datasets with $\hat{K}$ set to 61, which was the median of all 20 estimates for $K$ using bi-cross validation. We chose the median bi-cross validation estimate for $K$ as opposed to the median parallel analysis estimate so that each method had an opportunity to estimate the factors with smallest $\gamma_k$'s, which were not recoverable with parallel analysis' estimate for $K$. The results are given in Figure~\ref{Fig:Supp:Sim}, which shows no competing method is able to control for false discovery. Even BCconf, which is the only competing method able to come close to controlling false discovery, has less power than CorrConf to detect true signals. This is presumably because of bi-cross validation's large estimate of $K$, which reduces residual degrees of freedom and increases the estimated variation in $\bm{X}$ attributable to $\bm{C}$ (i.e. increases $\hat{\bm{\Omega}}$).\par 

\begin{figure}[ht]
\centering
\subcaptionbox{ \label{Figure:Supp:Sim:CtC}}{%
    \includegraphics[scale=0.35]{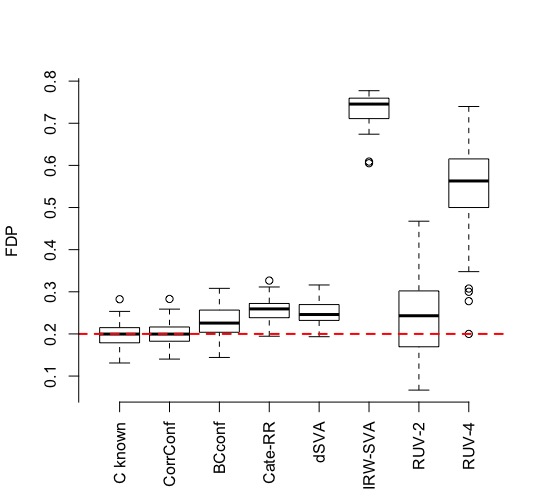} \hspace*{0.5em}%
  }
\subcaptionbox{ \label{Figure:Supp:Sim:Beta}}{%
    \includegraphics[scale=0.35]{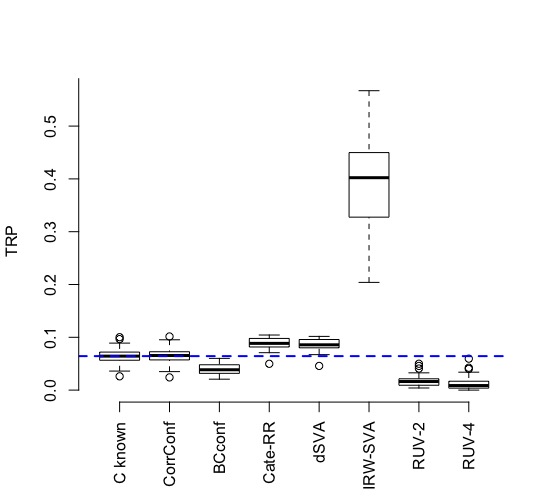}
  }
\caption{The false discovery proportion, (a), and true recovery proportion, (b), among genes with a q-value no greater than 0.2 in 50 simulated datasets, where $\text{TRP} = \left\lbrace \text{\# of discoveries} \right\rbrace/\left\lbrace \text{Total \# of genes with $\beta_g \neq 0$} \right\rbrace$. The dashed red and blue lines in (a) and (b) are $\text{FDP} = 0.2$ and $\text{TRP} = \left\lbrace\text{Median TRP when $\bm{C}$ is known}\right\rbrace$. We set $\hat{K}=10$ in CorrConf (which was CBCV's median estimate for $K$), and $\hat{K}=61$ in BCconf, Cate-RR, dSVA, IRW-SVA, RUV-2 and RUV-4.}\label{Fig:Supp:Sim}
\end{figure}


\newpage
\subsection{Proofs of all propositions, lemmas and theorems}
\label{subsection:Supp:Proofs}
Unless otherwise stated, we use the notation that a matrix or vector $\bm{A} = O_P\left( a_n\right)$ if $\norm{\bm{A}}_2 = O_P\left( a_n\right)$. We also define
\begin{align*}
\tilde{\bm{B}}_j = \bm{Q}_X^T \bm{B}_j \bm{Q}_X \quad (j=1,\ldots,b).
\end{align*}
\label{subsection:Proofs}


\subsubsection{Proof of Proposition \ref{proposition:KLVdelta}}
\label{subsubsection:Proof:KLVdelta}

\begin{proof}[of Proposition \ref{proposition:KLVdelta}]
Since the mean of $F_{\eqref{equation:WrongModel}, \left( \bm{\tau}, \delta^2\right)}$ that minimizes the KL divergence is the mean of $F_{\eqref{equation:Model_yANDvar}}$, we can write the KL divergence up to constants that do not depend on $\bm{\tau}$ or $\delta^2$ as
\begin{align*}
KL\left\lbrace F_{\eqref{equation:Model_yANDvar}}  \mid \mid F_{\eqref{equation:WrongModel}, \left( \bm{\tau}, \delta^2\right)}\right\rbrace = -\log \abs{\bm{Z}} + \Tr\left( \bm{Z}\bar{\bm{V}}\right)
\end{align*}
where $\bm{Z} = \left( \delta^2 \bm{V}\right)^{-1}$ and $\bar{\bm{V}} = p^{-1}\sum\limits_{g=1}^p \bm{V}_g$. This attains its only minimum when $\delta_*^2 \bm{V}_* = \bm{Z}_*^{-1} = \bar{\bm{V}}$, since the KL-divergence is strictly convex in $\bm{Z}$. We then have
\begin{align*}
\sum\limits_{j=1}^b \delta_*^2 \tau_{*_j}\bm{B}_j = \delta_*^2 \bm{V}_* = p^{-1}\sum\limits_{g=1}^p \left( \sum\limits_{j=1}^b v_{g,j} \bm{B}_j\right) = \sum\limits_{j=1}^b \left( p^{-1}\sum\limits_{g=1}^p v_{g,j} \right)\bm{B}_j
\end{align*}
Since the matrix $\bm{H}$ defined in the statement of Proposition \ref{proposition:KLVdelta} is positive definite, $\left\lbrace \vecM\left( \bm{B}_j\right) \right\rbrace_{j=1}^b$ is a linearly independent set, which proves \eqref{equation:KLVResult}. To prove $\bm{\tau}_* \in \Theta$,
\begin{align*}
\bm{A}_{\mathcal{I}}\bm{\tau}_* = \delta_*^{-2}p^{-1}\sum\limits_{g=1}^p \bm{A}_{\mathcal{I}}\bm{v}_g \geq \bm{0}
\end{align*}
since $\bm{A}_{\mathcal{I}}\bm{v}_g \geq \bm{0}$ and
\begin{align*}
\bm{A}_{\mathcal{E}}\bm{\tau}_* = \delta_*^{-2}p^{-1}\sum\limits_{g=1}^p \bm{A}_{\mathcal{E}}\bm{v}_g = \bm{0}
\end{align*}
since $\bm{A}_{\mathcal{E}}\bm{v}_g = \bm{0}$. This completes the proof.
\end{proof}

\subsubsection{Proof of Theorem \ref{theorem:ICaSE}}
\label{subsubsection:Proof:ICaSE}
\indent We next prove Theorem \ref{theorem:ICaSE}, which will be subsequently used in the proofs of Theorems \ref{theorem:CBCV} and \ref{theorem:CGLS} and Lemma \ref{lemma:OmegaWLS}. For ease of notation we will assume in this section that the data $\bm{Y}$ are generated according to model \eqref{equation:Model_yANDvar} with $d=0$ and will use the incorrect, but simpler model in \eqref{equation:WrongModel} with $d=0$ as well in the proofs of Lemma \ref{lemma:EigsEtE}, Theorem \ref{theorem:CorrPCA}, Lemma \ref{lemma:Vhat} and Theorem \ref{theorem:ICaSE} below. We first prove a lemma about the extreme singular values of a Gaussian random matrix with independent rows.

\begin{lemma}
\label{lemma:EigsEtE}
Let $\bm{E} \in \mathbb{R}^{p \times n}$ be a random matrix with independent rows where the $g^{\text{th}}$ row is distributed as $\bm{e}_g \sim N\left( 0, \bm{V}_g \right)$ where $p^{-1}\sum\limits_{g=1}^p \bm{V}_g = \bar{\bm{V}}$ and suppose Assumption \ref{assumption:Basics} holds. Then
\begin{align*}
\norm{p^{-1}\bm{E}^T \bm{E} - \bar{\bm{V}}}_2 = O_P\left( n^{1/2}p^{-1/2}\right).
\end{align*}
\end{lemma}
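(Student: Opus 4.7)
The plan is to prove this by a standard $\varepsilon$-net plus Bernstein argument. Assumption \ref{assumption:Basics}(a) gives the uniform bound $\norm{\bm{V}_g}_2 \leq \sum_{j=1}^b \abs{v_{g,j}}\norm{\bm{B}_j}_2 \leq bc_2^2 =: C$, and setting $\bm{S} = p^{-1}\bm{E}^T\bm{E} - \bar{\bm{V}}$, we have $\E \bm{S} = \bm{0}$ and $\bm{S}$ symmetric, so $\norm{\bm{S}}_2 = \sup_{\bm{u}\in S^{n-1}} \abs{\bm{u}^T \bm{S}\bm{u}}$.

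First I would fix a $\tfrac{1}{4}$-net $\mathcal{N}$ of $S^{n-1}$ with $\abs{\mathcal{N}} \leq 9^n$, so that $\norm{\bm{S}}_2 \leq 2\sup_{\bm{u}\in\mathcal{N}} \abs{\bm{u}^T \bm{S}\bm{u}}$ by the usual discretization inequality. For fixed $\bm{u}$, define $\sigma_g^2 = \bm{u}^T \bm{V}_g \bm{u} \leq C$ and $W_g = \bm{u}^T\bm{e}_g/\sigma_g$, which are independent $N(0,1)$ random variables across $g$, so
\begin{align*}
\bm{u}^T \bm{S}\bm{u} = p^{-1}\sum_{g=1}^p \sigma_g^2\left(W_g^2 - 1\right)
\end{align*}
is a centered sum of $p$ independent sub-exponential variables with individual $\psi_1$-norms bounded by a constant multiple of $\sigma_g^2 \leq C$. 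Bernstein's inequality for sub-exponentials then yields, for an absolute $c>0$ and all $t \leq C$,
\begin{align*}
\Pr\left(\abs{\bm{u}^T \bm{S}\bm{u}} > t\right) \leq 2\exp\left(-cpt^2/C^2\right).
\end{align*}

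Setting $t = M\sqrt{n/p}$ (which satisfies $t \leq C$ for $n$ large, since $n/p \to 0$ by Assumption \ref{assumption:Basics}(d)) and union-bounding over the at most $9^n$ net points gives
\begin{align*}
\Pr\left(\norm{\bm{S}}_2 > 2M\sqrt{n/p}\right) \leq 2\exp\left(n\log 9 - cM^2 n/C^2\right),
\end{align*}
which vanishes as $M \to \infty$, delivering $\norm{\bm{S}}_2 = O_P(\sqrt{n/p})$. The argument is essentially routine and I expect no substantive obstacle: the heterogeneity of the $\bm{V}_g$'s enters only through the uniform bound on $\sigma_g^2$, and Bernstein handles independent-but-not-identically-distributed summands without modification; the only textbook fact needed is that $W_g^2 - 1$ has bounded sub-exponential norm as a centered chi-square-one variable.
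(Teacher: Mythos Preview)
Your proof is correct and is precisely the approach the paper has in mind: it cites the result as ``a simple extension of Theorem 5.39 in \citet{Vershynin},'' whose proof is exactly the $\varepsilon$-net plus Bernstein-for-sub-exponentials argument you have written out, with the heterogeneous $\bm{V}_g$'s handled via the uniform bound from Assumption~\ref{assumption:Basics}\ref{item:assumptionBasics:V}.
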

\begin{proof}
The proof of this is a simple extension of Theorem 5.39 in \citet{Vershynin} and is omitted.
\end{proof}

\begin{theorem}
\label{theorem:CorrPCA}
Suppose Assumptions \ref{assumption:Basics} and \ref{assumption:ICaSE} hold and define $\epsilon = \norm{\hat{\bm{V}} - \bm{V}_*}_2$, where $\hat{\bm{V}}$ is an estimate of $\bm{V}_*$ with $\log \abs{\hat{\bm{V}}} = 0$, and
\begin{subequations}
\label{equation:CbarQbarlambda}
\begin{align}
\label{equation:Cbar}
\bar{\bm{C}} &= \hat{\bm{V}}^{-1/2}\bm{C}\left( \bm{C}^T \hat{\bm{V}}^{-1}\bm{C} \right)^{-1/2}\\
\label{equation:Qbar}
\bar{\bm{Q}} &= \bm{Q}_{\bar{\bm{C}}}\\
\label{equation:lambda}
\lambda_k &= \bm{\lambda}_k\left( p^{-1}\hat{\bm{V}}^{-1/2}\bm{C} \bm{L}^T \bm{L}\bm{C}^T \hat{\bm{V}}^{-1/2} \right).
\end{align}
\end{subequations}
where $\bm{\lambda}_k\left( \bm{A}\right)$ is the $k^{\text{th}}$ largest eigenvector of the matrix $\bm{A}$. Define the estimates $\hat{\bar{\bm{C}}} \in \mathbb{R}^{n \times K}$ and $\hat{\lambda}_1,\ldots,\hat{\lambda}_K$ to be the first $K$ eigenvectors and eigenvalues of $\hat{\bm{V}}^{-1/2}p^{-1}\bm{Y}^T\bm{Y}\hat{\bm{V}}^{-1/2}$, respectively. Then if $\epsilon/\gamma_K = o_P(1)$,
\begin{align}
\label{equation:corrPCA:subspace}
\norm{ P_{\bar{\bm{C}}} - P_{\hat{\bar{\bm{C}}}} }_F^2 &= O_P\left[\left\lbrace n^{1/2}\left( \gamma_K p\right)^{-1/2} + \epsilon \gamma_K^{-1}\right\rbrace^2\right].
\end{align}
Further, if $\epsilon = o_P(1)$,
\begin{align}
\label{equation:corrPCA:lambda}
&\hat{\lambda}_k/\lambda_k = 1 + \delta_*^2/\lambda_k + O_P\left\lbrace n\left( \gamma_K p\right)^{-1} + n^{1/2}\left( \gamma_K p\right)^{-1/2}\epsilon + \left( \gamma_K p\right)^{-1/2} + \epsilon \gamma_K^{-1}\right\rbrace\\
\label{equation:corrPCA:CtChat.IK}
&\norm{\bar{\bm{C}}^T \hat{\bar{\bm{C}}} - I_K}_2 = O_P\left\lbrace n\left( \gamma_K p\right)^{-1} + n^{1/2}\left( \gamma_K p\right)^{-1/2}\epsilon + \left( \gamma_K p\right)^{-1/2} + \epsilon \gamma_K^{-1}\right\rbrace.
\end{align}
\end{theorem}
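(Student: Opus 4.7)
The plan is to analyze $\hat{\bm{M}} := \hat{\bm{V}}^{-1/2}p^{-1}\bm{Y}^T\bm{Y}\hat{\bm{V}}^{-1/2}$ as a spiked perturbation of the reference matrix $\bm{M}_{\text{ref}} := \bm{M}_{\text{sig}} + \delta_*^2 I_n$, where $\bm{M}_{\text{sig}} := p^{-1}\hat{\bm{V}}^{-1/2}\bm{C}\bm{L}^T\bm{L}\bm{C}^T\hat{\bm{V}}^{-1/2}$ has non-zero eigenvalues $\lambda_1,\ldots,\lambda_K$ on the eigenspace $\im(\bar{\bm{C}})$. Since $d=0$ gives $\bm{Y}=\bm{L}\bm{C}^T+\bm{E}$, I would decompose $\hat{\bm{M}}=\bm{M}_{\text{sig}}+\bm{M}_{\text{noise}}+\bm{M}_{\text{cross}}$ with $\bm{M}_{\text{noise}} := p^{-1}\hat{\bm{V}}^{-1/2}\bm{E}^T\bm{E}\hat{\bm{V}}^{-1/2}$ and $\bm{M}_{\text{cross}} := p^{-1}\hat{\bm{V}}^{-1/2}(\bm{C}\bm{L}^T\bm{E}+\bm{E}^T\bm{L}\bm{C}^T)\hat{\bm{V}}^{-1/2}$. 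The top-$K$ eigenvalues of $\bm{M}_{\text{ref}}$ are $\lambda_k+\delta_*^2$ with eigenspace $\im(\bar{\bm{C}})$, the bottom bulk has eigenvalue $\delta_*^2$, and the eigengap at index $K$ equals $\lambda_K\asymp\gamma_K$.

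For \eqref{equation:corrPCA:subspace} I would bound each piece in operator norm and apply the Davis--Kahan $\sin\Theta$ theorem. Lemma \ref{lemma:EigsEtE} together with $\|\hat{\bm{V}}^{-1/2}\|_2=O(1)$ and $\|\hat{\bm{V}}^{-1/2}\bm{V}_*\hat{\bm{V}}^{-1/2}-I_n\|_2=O(\epsilon)$ yields $\|\bm{M}_{\text{noise}}-\delta_*^2 I_n\|_2=O_P(n^{1/2}p^{-1/2}+\epsilon)$. For the cross term, each column variance of $\bm{L}^T\bm{E}$ is $\sum_g\ell_{gk}^2[\bm{V}_g]_{ii}\lesssim p\gamma_K/n$, so $\|\bm{L}^T\bm{E}\|_2=O_P(\sqrt{p\gamma_K})$, and combined with $\|\hat{\bm{V}}^{-1/2}\bm{C}\|_2=O(\sqrt{n})$ (from Assumption \ref{assumption:Basics}(b)) this gives $\|\bm{M}_{\text{cross}}\|_2=O_P(\sqrt{\gamma_K n/p})$. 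Davis--Kahan then produces $\|\sin\Theta\|_2=O_P\{(n/(\gamma_K p))^{1/2}+\epsilon/\gamma_K\}$, and squaring through $\|P_{\bar{\bm{C}}}-P_{\hat{\bar{\bm{C}}}}\|_F^2\leq 2K\|\sin\Theta\|_2^2$ gives \eqref{equation:corrPCA:subspace}.

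For \eqref{equation:corrPCA:lambda} and \eqref{equation:corrPCA:CtChat.IK}, Weyl's inequality with the operator-norm bound above only yields the coarser rate $(n/(\gamma_K p))^{1/2}$, so I would use second-order analytic perturbation theory. Write $\bm{v}_k\in\im(\bar{\bm{C}})$ for the unit eigenvector of $\bm{M}_{\text{ref}}$ at $\lambda_k+\delta_*^2$ and $\bm{\Delta}:=\hat{\bm{M}}-\bm{M}_{\text{ref}}$. Then
\begin{align*}
\hat{\lambda}_k=\lambda_k+\delta_*^2+\bm{v}_k^T\bm{\Delta}\bm{v}_k+\sum_{j\neq k}\frac{(\bm{v}_j^T\bm{\Delta}\bm{v}_k)^2}{(\lambda_k+\delta_*^2)-\mu_j}+\ldots, \quad \hat{\bm{v}}_k=\bm{v}_k+\sum_{j\neq k}\frac{\bm{v}_j^T\bm{\Delta}\bm{v}_k}{\lambda_k-\mu_j}\bm{v}_j+\ldots,
\end{align*}
where $\mu_j$ ranges over the remaining eigenvalues of $\bm{M}_{\text{ref}}$. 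The diagonal contribution from $\bm{M}_{\text{noise}}-\delta_*^2 I_n$ is $O_P(\epsilon)$, giving $\epsilon/\gamma_K$ after dividing by $\lambda_k$; the diagonal from $\bm{M}_{\text{cross}}$ is a linear functional of $\bm{E}$ whose variance computes to $O(\gamma_K/p)$ by expanding $\sum_g(\bm{a}^T\bm{\ell}_g)^2\bm{b}^T\bm{V}_g\bm{b}$ with $\bm{a}=\bm{C}^T\hat{\bm{V}}^{-1/2}\bm{v}_k$ and $\bm{b}=\hat{\bm{V}}^{-1/2}\bm{v}_k$ (using $\|\bm{a}\|^2=O(n)$ and $\|\bm{b}\|^2=O(1)$), contributing $(\gamma_K p)^{-1/2}$. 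The second-order sum is dominated by $\|\bm{\Delta}\|_2^2/\lambda_K$, producing the $n/(\gamma_K p)$ term together with the mixed $\epsilon n^{1/2}/(\gamma_K p)^{1/2}$ term after dividing by $\lambda_k\asymp\gamma_K$; this yields \eqref{equation:corrPCA:lambda}. For \eqref{equation:corrPCA:CtChat.IK}, entries of $\bar{\bm{C}}^T\hat{\bar{\bm{C}}}-I_K$ are $\bm{v}_{k'}^T\hat{\bm{v}}_k-\delta_{k'k}$; off-diagonals are $\bm{v}_{k'}^T\bm{\Delta}\bm{v}_k/(\lambda_k-\lambda_{k'})$, controlled at the same entry-wise rates (the well-separatedness in Assumption \ref{assumption:Basics}(c) gives within-top-$K$ gaps of order $\gamma_K$), while diagonals are $1-O_P(\|\hat{\bm{v}}_k-\bm{v}_k\|^2)$, absorbed at the same rate.

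The principal obstacle is obtaining the sharp entry-wise bounds on the bilinear forms $\bm{v}_j^T\bm{M}_{\text{cross}}\bm{v}_k$ and $\bm{v}_j^T(\bm{M}_{\text{noise}}-\delta_*^2\hat{\bm{V}}^{-1/2}\bm{V}_*\hat{\bm{V}}^{-1/2})\bm{v}_k$: a naive operator-norm estimate loses a factor of $n^{1/2}$, so one must exploit that $\bm{v}_k\in\im(\bar{\bm{C}})$ is a deterministic direction after conditioning on $\hat{\bm{V}}$ and apply Gaussian linear concentration for the cross piece and Hanson--Wright for the quadratic-in-$\bm{E}$ remainder, with the final bound taken uniformly on the event $\|\hat{\bm{V}}-\bm{V}_*\|_2\leq\epsilon$ to avoid circularity from the data-dependence of $\hat{\bm{V}}$.
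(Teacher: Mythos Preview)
Your argument is correct and delivers the stated rates, but it takes a genuinely different route from the paper. The paper does not use Davis--Kahan or Rayleigh--Schr\"odinger directly on the $n\times n$ matrix. Instead it rotates $p^{-1}\hat{\bm{V}}^{-1/2}\bm{Y}^T\bm{Y}\hat{\bm{V}}^{-1/2}$ into the orthonormal frame $[\bar{\bm{C}}\ \bar{\bm{Q}}]$ (the block technique of \citet{Debashis}), obtaining a $2\times 2$ block matrix $\bar{\bm{S}}$ whose pieces $\bar{\bm{L}}^T\bar{\bm{E}}_1,\bar{\bm{L}}^T\bar{\bm{E}}_2,\bar{\bm{E}}_1^T\bar{\bm{E}}_1,\bar{\bm{E}}_1^T\bar{\bm{E}}_2$ are bounded one by one. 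The eigenvector of $\bar{\bm{S}}$ is then written as $(\hat{\bm{v}}_k^T,\hat{\bm{z}}_k^T)^T$ with $\hat{\bm{v}}_k\in\mathbb{R}^K$ and $\hat{\bm{z}}_k\in\mathbb{R}^{n-K}$, and the exact identities
\[
\hat{\bm{z}}_k=(\hat{\mu}_kI-\bar{\bm{E}}_2^T\bar{\bm{E}}_2)^{-1}\bar{\bm{E}}_2^T(\bar{\bm{L}}+\bar{\bm{E}}_1)\hat{\bm{v}}_k,\qquad \hat{\mu}_k\hat{\bm{v}}_k=\{\cdots\}\hat{\bm{v}}_k
\]
reduce the $n$-dimensional eigenproblem to a $K$-dimensional one, to which a perturbation bound for fixed-size matrices (Theorem~3.5 of \citet{Tang}) is applied. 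The data-dependence of $\hat{\bm{V}}$ is handled exactly as you anticipate---each block is split into a $\bm{V}_*$-based term (sharp concentration) plus an $O_P(\epsilon)$ correction---so your ``principal obstacle'' and its resolution coincide with the paper's. What the paper's approach buys is the explicit decomposition $\hat{\bar{\bm{C}}}=\bar{\bm{C}}\hat{\bm{v}}+\bar{\bm{Q}}\hat{\bm{z}}$ with separate control on $\hat{\bm{v}}$ and $\hat{\bm{z}}$; these objects are reused verbatim in Corollaries~\ref{corollary:ConfSub:CtC}--\ref{corollary:ConfSub:S} and in the proofs of Lemma~\ref{lemma:Vhat}, Lemma~\ref{lemma:OmegaWLS} and Theorem~\ref{theorem:CGLS}. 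Your route is cleaner for the theorem in isolation, but you would have to re-derive those block components to carry the downstream results.
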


\begin{proof}
First, $(\lambda_k - \lambda_{k+1})/\lambda_{k+1} \geq c_1^{-1} + o_P(1)$ when $\epsilon = o_P(1)$ by item \ref{item:assumptionBasics:V} of Assumption \ref{assumption:Basics}. Next, since $\bm{L}$ and $\bm{C}$ are unique up to a $K \times K$ invertible matrix, it suffices to re-define $\bm{L}$ as $\bm{L}\left( n^{-1}\bm{C}^T \hat{\bm{V}}^{-1}\bm{C} \right)^{+1/2}$ and assume $np^{-1}\bm{L}^T \bm{L} = \diag\left( \lambda_1, \ldots, \lambda_K \right)$. We let $\gamma = \gamma_K$ and use a technique developed in \citet{Debashis} and define the rotated matrix $\bar{\bm{S}}$ to be
\begin{align}
\label{equation:Debashis}
\bar{\bm{S}} &= \begin{pmatrix}
\bar{\bm{C}}^T\\
\bar{\bm{Q}}^T
\end{pmatrix}\left( \gamma p\right)^{-1}\hat{\bm{V}}^{-1/2}\bm{Y}^T \bm{Y} \hat{\bm{V}}^{-1/2} \begin{pmatrix}
\bar{\bm{C}} & \bar{\bm{Q}}
\end{pmatrix} = \left\lbrace\begin{matrix}
\left( \bar{\bm{L}} + \bar{\bm{E}}_1 \right)^T \left( \bar{\bm{L}} + \bar{\bm{E}}_1 \right) & \left( \bar{\bm{L}} + \bar{\bm{E}}_1 \right)^T \bar{\bm{E}}_2\\
\bar{\bm{E}}_2^T \left( \bar{\bm{L}} + \bar{\bm{E}}_1 \right) & \bar{\bm{E}}_2^T\bar{\bm{E}}_2
\end{matrix}\right\rbrace\\
\label{equation:LBar}
\bar{\bm{L}} &= n^{1/2}\left(\gamma p \right)^{-1/2}\bm{L}\\
\label{equation:E1}
\bar{\bm{E}}_1 &= \left(\gamma p \right)^{-1/2}\bm{E}\hat{\bm{V}}^{-1/2}\bar{\bm{C}}\\
\label{equation:E2}
\bar{\bm{E}}_2 &= \left(\gamma p \right)^{-1/2}\bm{E}\hat{\bm{V}}^{-1/2}\bar{\bm{Q}}.
\end{align}
We now get explicit error bounds for terms in $\bar{\bm{S}}$.
\begin{enumerate}
\item \begin{align*}
\bar{\bm{L}}^T \bar{\bm{E}}_1 &= \left(\gamma p \right)^{-1/2} \bar{\bm{L}}^T \bm{E}\hat{\bm{V}}^{-1}\left( n^{-1/2}\bm{C}\right)\underbrace{\left( n^{-1}\bm{C}^T \hat{\bm{V}}^{-1}\bm{C}\right)^{-1/2}}_{O_P(1)}\\
&= \left( \underbrace{\left(\gamma p \right)^{-1/2}\bar{\bm{L}}^T \bm{E}\bm{V}^{-1}_*\left( n^{-1/2}\bm{C}\right)}_{O_P\left\lbrace \left(\gamma p \right)^{-1/2}\right\rbrace} + \underbrace{\left(\gamma p \right)^{-1/2}\bar{\bm{L}}^T \bm{E} \left( \hat{\bm{V}}^{-1}-\bm{V}^{-1}_*\right)\left( n^{-1/2}\bm{C}\right)}_{O_P\left\lbrace n^{1/2}\left(\gamma p \right)^{-1/2} \epsilon\right\rbrace} \right)O_P(1).
\end{align*}
Therefore, $\norm{\bar{\bm{L}}^T \bar{\bm{E}}_1}_2 = O_P\left\lbrace \left(\gamma p \right)^{-1/2} + n^{1/2}\left(\gamma p \right)^{-1/2} \epsilon\right\rbrace$.
\item \begin{align*}
\bar{\bm{L}}^T \bar{\bm{E}}_2 &= \underbrace{\left(\gamma p \right)^{-1/2}\bar{\bm{L}}^T \bm{E}\hat{\bm{V}}^{-1/2} \hat{\bm{V}}^{1/2}\bm{Q}_{\bm{C}}}_{O_P\left\lbrace n^{1/2}\left(\gamma p \right)^{-1/2}\right\rbrace}\underbrace{\left( \bm{Q}_{\bm{C}}^T \hat{\bm{V}} \bm{Q}_{\bm{C}}\right)^{-1/2}}_{O_P(1)}.
\end{align*}
Therefore, $\norm{\bar{\bm{L}}^T \bar{\bm{E}}_2}_2 = O_P\left\lbrace n^{1/2}\left(\gamma p \right)^{-1/2}\right\rbrace$.
\item \begin{align*}
\bar{\bm{E}}_1^T \bar{\bm{E}}_1 = \left(\gamma p \right)^{-1} \bar{\bm{C}}^T \bm{V}^{-1/2}_* \bm{E}^T \bm{E}\bm{V}^{-1/2}_* \bar{\bm{C}} + O_P\left( \epsilon \gamma^{-1} \right).
\end{align*}
Define $\tilde{\bm{C}} = \bm{V}^{-1/2}_*\bm{C}\left( \bm{C}^T \bm{V}^{-1}_*\bm{C}\right)^{-1/2}$.
\begin{align*}
\left(\gamma p \right)^{-1} \bar{\bm{C}}^T \hat{\bm{V}}^{-1/2} \bm{E}^T \bm{E}\hat{\bm{V}}^{-1/2} \bar{\bm{C}} &= \gamma^{-1}\delta_*^2 I_K + \gamma^{-1} \bar{\bm{C}}^T \left( \hat{\bm{V}}^{-1/2} p^{-1}\bm{E}^T \bm{E}\hat{\bm{V}}^{-1/2} - \delta_*^2 I_n \right)\bar{\bm{C}}\\
&= \gamma^{-1}\delta_*^2 I_K + \underbrace{\gamma^{-1}\tilde{\bm{C}}^T\left( \bm{V}^{-1/2}_* p^{-1}\bm{E}^T \bm{E}\bm{V}^{-1/2}_* - \delta_*^2 I_n \right)\tilde{\bm{C}}}_{O_P\left( \gamma^{-1}p^{-1/2}\right)} + O_P\left( \epsilon \gamma^{-1}\right).
\end{align*}
Therefore,
\begin{align*}
\norm{\bar{\bm{E}}_1^T \bar{\bm{E}}_1 - \gamma^{-1}\delta_*^2 I_K}_2 = O_P\left( \epsilon \gamma^{-1} + \gamma^{-1}p^{-1/2}\right).
\end{align*}
\item Let $\tilde{\bm{Q}} = \bm{Q}_{\tilde{\bm{C}}}$. Then
\begin{align*}
\bar{\bm{E}}_1^T \bar{\bm{E}}_2 = \left( \gamma p\right)^{-1} \bar{\bm{C}}^T \hat{\bm{V}}^{-1/2}\bm{E}^T \bm{E}\hat{\bm{V}}^{-1/2}\bar{\bm{Q}} = \left( \gamma p\right)^{-1} \tilde{\bm{C}}^T \bm{V}^{-1/2}_*\bm{E}^T \bm{E}\bm{V}^{-1/2}_* \tilde{\bm{Q}} + O_P\left(\epsilon \gamma^{-1}\right).
\end{align*}
Let $\tilde{\bm{E}} = p^{-1/2}\bm{E}\bm{V}^{-1/2}_*, \tilde{\bm{E}}_1 = \tilde{\bm{E}} \tilde{\bm{C}}, \tilde{\bm{E}}_2 = \tilde{\bm{E}}\tilde{\bm{Q}}$ and $\bm{A}_g = \bm{V}^{-1/2}_* \bm{V}_g \bm{V}^{-1/2}_*$. Then the $k^{\text{th}}$ column of $\tilde{\bm{E}}_2^T \tilde{\bm{E}}_1$ ($k=1,\ldots,K$) is
\begin{align*}
\bm{b}_k &= p^{-1}\tilde{\bm{Q}}^T\tilde{\bm{E}}^T \tilde{\bm{E}} \tilde{\bm{c}}_k = \tilde{\bm{Q}}^T\left( p^{-1}\tilde{\bm{E}}^T \tilde{\bm{E}} - p^{-1}\sum\limits_{g=1}^p \bm{A}_g \right)\tilde{\bm{c}}_k + \tilde{\bm{Q}}^T \underbrace{p^{-1}\sum\limits_{g=1}^p \bm{A}_g}_{=\delta_*^2 I_n} \tilde{\bm{c}}_k\\
&= p^{-1}\sum\limits_{g=1}^p \tilde{\bm{Q}}^T\bm{A}_g^{1/2}\left( \bm{r}_g\bm{r}_g^T - \delta_*^2 I_n \right)\bm{A}_g^{1/2}\tilde{\bm{c}}_k = p^{-1}\sum\limits_{g=1}^p \bm{b}_{g,k}
\end{align*}
where $\bm{r}_g \sim N\left( \bm{0},I_n\right)$, $\bm{r}_g$ and $\bm{r}_{g'}$ are independent for $g \neq g'$ ($g=1,\ldots,p; g' = 1,\ldots,p$) and $\E\left(\bm{b}_k \right) = \bm{0}$. Therefore,
\begin{align*}
\E\left(\norm{\bm{b}_k}_2^2\right) = \sum\limits_{i=1}^{n-K} \E\left( \bm{b}_{ki}^2\right) = \sum\limits_{i=1}^{n-K}\V\left( \bm{b}_{ki}\right) = O\left\lbrace n \V\left( \bm{b}_{k1}\right)\right\rbrace = O\left( np^{-1}\right)
\end{align*}
where the second to last and last equalities follow because $\norm{\bm{A}_g}_2$ is uniformly bounded by item \ref{item:assumptionBasics:V} of Assumption \ref{assumption:Basics}. Therefore, $\norm{\bar{\bm{E}}_2^T\bar{\bm{E}}_1}_2 = O_P\left( \gamma^{-1} n^{1/2}p^{-1/2} + \epsilon \gamma^{-1} \right)$.
\end{enumerate}
\indent Let $\mu_k = \lambda_k/\gamma$ and define $\begin{pmatrix}
\hat{\bm{v}}_k^T & \hat{\bm{z}}_k^T
\end{pmatrix}^T \in \mathbb{R}^{n}$ to be the $k^{\text{th}}$ normalized eigenvector of $\bar{\bm{S}}$, where $\bm{v}_k \in \mathbb{R}^{K}$ and $\bm{z}_k \in \mathbb{R}^{n-K}$. All of this proves that $\hat{\mu}_k = \mu_k + \delta_*^2/\gamma + o_P(1)$ by Weyl's theorem. To prove sharper bounds, we set set up the eigenvalue equations
\begin{align*}
\hat{\mu}_k \hat{\bm{v}}_k &= \left( \bar{\bm{L}} + \bar{\bm{E}}_1 \right)^T \left( \bar{\bm{L}} + \bar{\bm{E}}_1 \right)\hat{\bm{v}}_k + \left( \bar{\bm{L}} + \bar{\bm{E}}_1 \right)^T \bar{\bm{E}}_2\hat{\bm{z}}_k\\
\hat{\mu}_k \bm{z}_k &= \bar{\bm{E}}_2^T \left( \bar{\bm{L}} + \bar{\bm{E}}_1 \right)\hat{\bm{v}}_k + \bar{\bm{E}}_2^T\bar{\bm{E}}_2\hat{\bm{z}}_k.
\end{align*}
A little algebra shows that
\begin{align}
\label{equation:Eigsv}
\hat{\mu}_k \hat{\bm{v}}_k &= \left\lbrace\left( \bar{\bm{L}} + \bar{\bm{E}}_1 \right)^T \left( \bar{\bm{L}} + \bar{\bm{E}}_1 \right) + \left( \bar{\bm{L}} + \bar{\bm{E}}_1 \right)^T \bar{\bm{E}}_2 \left( \hat{\mu}_k I_{n} - \bar{\bm{E}}_2^T\bar{\bm{E}}_2 \right)^{-1}\bar{\bm{E}}_2^T \left( \bar{\bm{L}} + \bar{\bm{E}}_1 \right)\right\rbrace\hat{\bm{v}}_k\\
\label{equation:Eigsz}
\hat{\bm{z}}_k &= \left( \hat{\mu}_k I_{n} - \bar{\bm{E}}_2^T\bar{\bm{E}}_2 \right)^{-1}\bar{\bm{E}}_2^T \left( \bar{\bm{L}} + \bar{\bm{E}}_1 \right)\hat{\bm{v}}_k
\end{align}
where $\hat{\mu}_k I_{n} - \bar{\bm{E}}_2^T\bar{\bm{E}}_2$ is invertible because $\hat{\mu}_k = \mu_k + \delta_*^2/\gamma + o_P(1)$ and by Lemma \ref{lemma:EigsEtE}. By what we showed above, we then have that
\begin{subequations}
\label{equation:corrPCA:EigsResults}
\begin{align}
\label{equation:EigsResults:zhat}
\norm{\hat{\bm{z}}_k}_2 &= O_P\left\lbrace n^{1/2}\left( \gamma p\right)^{-1/2} + \epsilon \gamma^{-1}\right\rbrace\\
\hat{\mu}_k &= \bm{\lambda}_k\left\lbrace \left( \bar{\bm{L}} + \bar{\bm{E}}_1 \right)^T \left( \bar{\bm{L}} + \bar{\bm{E}}_1 \right)\right\rbrace + O_P\left[\left\lbrace n^{1/2}\left( \gamma p\right)^{-1/2} + \epsilon \gamma^{-1}\right\rbrace^2\right] \nonumber\\
\label{equation:EigsResults:lambda.hat}
&= \mu_k + \delta_*^2 \gamma^{-1} + O_P\left\lbrace n\left( \gamma p\right)^{-1} + n^{1/2}\left( \gamma p\right)^{-1/2}\epsilon + \left( \gamma p\right)^{-1/2} + \epsilon \gamma^{-1}\right\rbrace\\
\label{equation:EigsResults:vhatv}
\norm{\hat{\bm{v}}_k - \bm{v}_k}_2 &= O_P\left[\left\lbrace n^{1/2}\left( \gamma p\right)^{-1/2} + \epsilon \gamma^{-1}\right\rbrace^2\right]\\
\label{equation:EigsResults:vhata}
\norm{\hat{\bm{v}}_k - \bm{a}_k}_2 &= O_P\left\lbrace n\left( \gamma p\right)^{-1} + n^{1/2}\left( \gamma p\right)^{-1/2}\epsilon + \left( \gamma p\right)^{-1/2} + \epsilon \gamma^{-1}\right\rbrace
\end{align}
\end{subequations}
where $\bm{v}_k$ is the $k^{\text{th}}$ eigenvector of $\left( \bar{\bm{L}} + \bar{\bm{E}}_1 \right)^T \left( \bar{\bm{L}} + \bar{\bm{E}}_1 \right)$ and $\bm{a}_k \in \mathbb{R}^K$ is the $k^{\text{th}}$ standard basis vector. Equation \eqref{equation:EigsResults:zhat} follow from Weyl's Theorem (i.e. $\hat{\mu}_k = \mu_k + \delta_*^2/\lambda_k + o_P(1)$) and \eqref{equation:EigsResults:lambda.hat} follows from Theorem 3.5 of \citet{Tang} and the assumption that $(\lambda_k - \lambda_{k+1})/\lambda_{k+1} \geq c_1^{-1}+o_P(1)$. Since the left and right singular vectors of any matrix are unique up to sign parity, a second application of Theorem 3.5 of \citet{Tang}, along with the assumption that $(\lambda_k - \lambda_{k+1})/\lambda_{k+1} \geq c_1^{-1}+o_P(1)$, proves \eqref{equation:EigsResults:vhatv} and \eqref{equation:EigsResults:vhata}. This proves \eqref{equation:corrPCA:lambda}.\par
\indent Define $\hat{\bm{v}} = \begin{pmatrix}
\hat{\bm{v}}_1 & \cdots & \hat{\bm{v}}_K
\end{pmatrix}$ and $\hat{\bm{z}} = \begin{pmatrix}
\hat{\bm{z}}_1 & \cdots & \hat{\bm{z}}_K
\end{pmatrix}$. Since $\hat{\bm{v}}^T\hat{\bm{v}} + \hat{\bm{z}}^T\hat{\bm{z}} = I_K$ and by \eqref{equation:EigsResults:zhat},
\begin{align*}
\norm{\hat{\bm{v}} \hat{\bm{v}}^T - I_K}_2 = O_P\left[\left\lbrace n^{1/2}\left( \gamma p\right)^{-1/2} + \epsilon \gamma^{-1}\right\rbrace^2\right].
\end{align*}
Then
\begin{align}
\label{equation:Cbarhat}
\hat{\bar{\bm{C}}} = \bar{\bm{C}}\hat{\bm{v}} + \bar{\bm{Q}}\hat{\bm{z}}
\end{align}
with
\begin{align*}
\norm{P_{\hat{\bar{\bm{C}}}} - P_{\bar{\bm{C}}}}_F^2 = 2K - 2\Tr\left[ \bar{\bm{C}}^T \hat{\bar{\bm{C}}} \hat{\bar{\bm{C}}}^T \bar{\bm{C}}\right] = 2K - 2\Tr\left[ \hat{\bm{v}} \hat{\bm{v}}^T\right] = O_P\left[\left\lbrace n^{1/2}\left( \gamma p\right)^{-1/2} + \epsilon \gamma^{-1}\right\rbrace^2\right],
\end{align*}
which proves \eqref{equation:corrPCA:subspace}. Lastly, \eqref{equation:corrPCA:CtChat.IK} follows from \eqref{equation:EigsResults:vhatv} and \eqref{equation:EigsResults:vhata}.
\end{proof}

\begin{center}
\line(1,0){425}
\end{center}

We note that because $\hat{\delta}^2 \hat{\bm{\tau}} \in \Theta_*$ by Assumption \ref{assumption:ICaSE}, we may assume that $(n-d)^{-1}\bm{C}_{\perp}^T \hat{\bm{W}}^{-1}\bm{C}_{\perp} = I_{n-K}$, $(n-d)p^{-1}\bm{L}^T \bm{L} = \diag\left( \lambda_1, \ldots,\lambda_K\right)$ where $\lambda_K \leq \cdots \leq \lambda_1$ and $\lambda_1, \lambda_K \asymp \gamma_K$ for any fixed estimates $\hat{\delta}^2, \hat{\bm{\tau}}$ in the proofs of Theorems \ref{theorem:ICaSE}, \ref{theorem:CBCV} and \ref{theorem:CGLS} because the statement of those theorems only involves the columns space of $\bm{C}$. Since we will eventually prove \eqref{equation:ConfSub:V} in Theorem \ref{theorem:ICaSE} when $K$ is known and
\begin{align*}
&n^{1/2}\norm{(n-d)^{-1}\bm{C}_{\perp}^T \bm{W}_*^{-1} \bm{C}_{\perp} - (n-d)^{-1}\bm{C}_{\perp}^T \hat{\bm{W}}_*^{-1} \bm{C}_{\perp}}_2 = o_P(1),\\
& n^{1/2}\norm{\bm{\Omega} \left\lbrace (n-d)^{-1}\bm{C}_{\perp}^T \hat{\bm{W}}_*^{-1} \bm{C}_{\perp}\right\rbrace^{1/2} - \bm{\Omega}}_2 = o_P(1).
\end{align*}
Therefore, we assume $(n-d)^{-1}\bm{C}_{\perp}^T \hat{\bm{W}}^{-1}\bm{C}_{\perp} = I_{n-K}$ and $(n-d)p^{-1}\bm{L}^T \bm{L}$ is diagonal with decreasing elements in the remainder of the supplement, including the proof of Lemma \ref{lemma:OmegaWLS}.

\begin{corollary}[used in the proof of Theorems \ref{theorem:ICaSE} and \ref{theorem:CGLS}]
\label{corollary:ConfSub:CtC}
Suppose the assumptions of Theorem \ref{theorem:CorrPCA} hold (including the assumption that $\epsilon = o_P(1)$) and let $\bm{M}_g \in \mathbb{R}^{n \times n}$ be a random symmetric positive definite matrix that is only a function of $\bm{e}_g$, the $g^{\text{th}}$ row of $\bm{E}$, with $\norm{\bm{M}_g}_2 = 1$ and smallest eigenvalue uniformly bounded away from 0. Define $\hat{\bm{C}} = n^{1/2}\hat{\bm{V}}^{1/2} \hat{\bar{\bm{C}}}$. Then using the same notation as the statement of Theorem \ref{theorem:CorrPCA},
\begin{align}
\label{equation:Cor:ChatMChat}
\norm{n^{-1}\hat{\bm{C}}^T \bm{M}_g \hat{\bm{C}} - n^{-1}\bm{C}^T \bm{M}_g \bm{C}}_2 &= O_P\left( \eta\right)\\
\label{equation:Cor:ChatMC}
\norm{n^{-1}\hat{\bm{C}}^T \bm{M}_g \bm{C} - n^{-1}\bm{C}^T \bm{M}_g \bm{C}}_2 &=  O_P\left( \eta\right)\\
\label{equation:Cor:Proj}
\norm{P_{\bm{C}} - P_{\hat{\bm{C}}}}_F^2 &= O_P\left( \eta\right).
\end{align}
where $\eta = n\left( \gamma_K p\right)^{-1} + \left( \gamma_K p\right)^{-1/2} + n^{1/2}\left( \gamma_K p\right)^{-1/2} \epsilon + \epsilon \gamma_K^{-1}$.
\end{corollary}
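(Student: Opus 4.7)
The approach is to import the additive decomposition of $\hat{\bar{\bm{C}}}$ constructed in the proof of Theorem \ref{theorem:CorrPCA} and push it through to $\hat{\bm{C}} = n^{1/2}\hat{\bm{V}}^{1/2}\hat{\bar{\bm{C}}}$. Using the identity $\hat{\bm{V}}^{1/2}\bar{\bm{C}} = \bm{C}(\bm{C}^T\hat{\bm{V}}^{-1}\bm{C})^{-1/2}$ together with \eqref{equation:Cbarhat}, I would write $\hat{\bm{C}} = \bm{C}\bm{A} + \bm{\xi}$, where
\[
\bm{A} = (n^{-1}\bm{C}^T\hat{\bm{V}}^{-1}\bm{C})^{-1/2}\hat{\bm{v}}, \qquad \bm{\xi} = n^{1/2}\hat{\bm{V}}^{1/2}\bar{\bm{Q}}\hat{\bm{z}}.
\]
Combining the identifiability condition $n^{-1}\bm{C}^T\bm{V}_*^{-1}\bm{C} = I_K$ (the $d=0$ instance of Assumption \ref{assumption:Basics}\ref{item:assumptionBasics:CtC}) with $\hat{\bm{V}}^{-1} - \bm{V}_*^{-1} = O_P(\epsilon)$ yields $(n^{-1}\bm{C}^T\hat{\bm{V}}^{-1}\bm{C})^{-1/2} = I_K + O_P(\epsilon)$, and multiplying by \eqref{equation:corrPCA:CtChat.IK} gives $\norm{\bm{A} - I_K}_2 = O_P(\eta)$, absorbing the stray $O_P(\epsilon)$ into the $\epsilon\gamma_K^{-1}$ term via $\gamma_K \geq c_2^{-1}$. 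Moreover, \eqref{equation:EigsResults:zhat} gives $n^{-1}\norm{\bm{\xi}}_2^2 = O_P\{n/(\gamma_K p) + \epsilon^2\gamma_K^{-2}\} = O_P(\eta)$.

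With these pieces in hand, \eqref{equation:Cor:Proj} is immediate: since $\bm{A}$ is invertible with probability tending to one, $P_{\hat{\bm{C}}} = P_{\bm{C} + \bm{\xi}\bm{A}^{-1}}$, so combining the standard subspace perturbation inequality with $\sigma_K(\bm{C}) \asymp \sqrt{n}$ (forced by Assumption \ref{assumption:Basics}\ref{item:assumptionBasics:CtC} and the boundedness of $\bm{V}_*$) delivers $\norm{P_{\bm{C}} - P_{\hat{\bm{C}}}}_F^2 \lesssim n^{-1}\norm{\bm{\xi}}_2^2 = O_P(\eta)$. For \eqref{equation:Cor:ChatMC} and \eqref{equation:Cor:ChatMChat}, I would expand
\begin{align*}
n^{-1}\hat{\bm{C}}^T\bm{M}_g\bm{C} - n^{-1}\bm{C}^T\bm{M}_g\bm{C} &= n^{-1}(\bm{A} - I_K)^T\bm{C}^T\bm{M}_g\bm{C} + n^{-1}\bm{\xi}^T\bm{M}_g\bm{C},\\
n^{-1}\hat{\bm{C}}^T\bm{M}_g\hat{\bm{C}} - n^{-1}\bm{C}^T\bm{M}_g\bm{C} &= n^{-1}(\bm{A}^T\bm{C}^T\bm{M}_g\bm{C}\bm{A} - \bm{C}^T\bm{M}_g\bm{C}) + 2n^{-1}\bm{A}^T\bm{C}^T\bm{M}_g\bm{\xi} + n^{-1}\bm{\xi}^T\bm{M}_g\bm{\xi}.
\end{align*}
The pieces linear in $\bm{A} - I_K$ are $O_P(\eta)$ by the first paragraph's bound together with $\norm{n^{-1}\bm{C}^T\bm{M}_g\bm{C}}_2 = O(1)$, and the pure quadratic $n^{-1}\bm{\xi}^T\bm{M}_g\bm{\xi}$ is $O_P(\eta)$ by the bound on $\norm{\bm{\xi}}_2^2$.

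The genuine obstacle is the cross term $n^{-1}\bm{\xi}^T\bm{M}_g\bm{C}$, since naive Cauchy--Schwarz $n^{-1}\norm{\bm{\xi}}_2\norm{\bm{C}}_2$ only yields $O_P(\sqrt{n/(\gamma_K p)})$, which is larger than the $n/(\gamma_K p)$ piece of $\eta$. To sharpen it, I would expand $\hat{\bm{z}}$ via \eqref{equation:Eigsz} using a Neumann series in $\bar{\bm{E}}_2^T\bar{\bm{E}}_2$ (justified by Lemma \ref{lemma:EigsEtE}), so that to leading order $\bm{\xi}_k \propto n^{1/2}(\gamma_K p)^{-1}\hat{\bm{V}}^{1/2}\bar{\bm{Q}}\bar{\bm{Q}}^T\hat{\bm{V}}^{-1/2}\bm{E}^T\bm{L}\hat{\bm{v}}_k$. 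Writing $\bm{E}^T\bm{L} = \sum_{g'=1}^p \bm{e}_{g'}\bm{\ell}_{g'}^T$ and splitting the sum at $g' = g$, the $g' = g$ piece is bounded deterministically using $\norm{\bm{\ell}_g}_2 = O(1)$ and $\norm{\bm{e}_g}_2, \norm{\bm{C}}_2 = O_P(\sqrt{n})$, contributing $O_P(n/(\gamma_K p))$; the $g' \neq g$ sum is mean-zero conditional on $\bm{e}_g$ (and hence on $\bm{M}_g$) with conditional variance $\lesssim \norm{\bm{L}}_2^2\norm{\bm{C}}_2^2 \asymp \gamma_K p$ per entry, contributing $O_P((\gamma_K p)^{-1/2})$ after the $(\gamma_K p)^{-1}$ prefactor. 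The higher-order Neumann corrections and the $\bar{\bm{E}}_1$ contribution in \eqref{equation:Eigsz} produce exactly the remaining $n^{1/2}(\gamma_K p)^{-1/2}\epsilon + \epsilon\gamma_K^{-1}$ terms of $\eta$ by the same condition-on-$\bm{e}_g$ argument, completing the proof.
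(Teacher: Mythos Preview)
Your proposal follows essentially the same route as the paper: decompose $\hat{\bar{\bm{C}}}=\bar{\bm{C}}\hat{\bm{v}}+\bar{\bm{Q}}\hat{\bm{z}}$ (your $\bm{C}\bm{A}+\bm{\xi}$), control the diagonal and $\hat{\bm{z}}$-quadratic pieces via \eqref{equation:corrPCA:CtChat.IK} and \eqref{equation:EigsResults:zhat}, and handle the delicate cross term $n^{-1}\bm{C}^T\bm{M}_g\bm{\xi}$ by expanding $\hat{\bm{z}}$ through \eqref{equation:Eigsz}, replacing $\hat{\bm{V}}$ by $\bm{V}_*$, and splitting $\bm{E}^T\bm{L}$ and $\bm{E}^T\bm{E}$ at row $g$ so that the $g'\neq g$ contribution is mean-zero conditional on $\bm{e}_g$ (hence on $\bm{M}_g$); the paper also derives \eqref{equation:Cor:Proj} from \eqref{equation:Cor:ChatMChat}--\eqref{equation:Cor:ChatMC} via the trace identity rather than Davis--Kahan, but that is cosmetic.

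One small slip: $\gamma_K\geq c_2^{-1}$ yields $\epsilon\gamma_K^{-1}=O(\epsilon)$, not the reverse, so a bare $O_P(\epsilon)$ from $(n^{-1}\bm{C}^T\hat{\bm{V}}^{-1}\bm{C})^{-1/2}-I_K$ cannot be absorbed into the $\epsilon\gamma_K^{-1}$ piece of $\eta$ when $\gamma_K\to\infty$. The paper sidesteps this by tacitly working under the reparametrization $n^{-1}\bm{C}^T\hat{\bm{V}}^{-1}\bm{C}=I_K$ (announced in the note immediately following the Corollary), under which your $\bm{A}$ collapses to $\hat{\bm{v}}$ and the stray $\epsilon$ never appears; adopting that convention fixes your argument as well.
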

\begin{proof}
The proof utilizes objects defined in Theorem \ref{theorem:CorrPCA}. We see that
\begin{align*}
n^{-1}\hat{\bm{C}}^T \bm{M}_g \hat{\bm{C}} =&  \hat{\bar{\bm{C}}}^T \hat{\bm{V}}^{1/2}\bm{M}_g\hat{\bm{V}}^{1/2}\hat{\bar{\bm{C}}} = \hat{\bm{v}}^T \bar{\bm{C}}^T\hat{\bm{V}}^{1/2}\bm{M}_g\hat{\bm{V}}^{1/2} \bar{\bm{C}} \hat{\bm{v}} + \hat{\bm{v}}^T \bar{\bm{C}}^T\hat{\bm{V}}^{1/2}\bm{M}_g\hat{\bm{V}}^{1/2} \bar{\bm{Q}} \hat{\bm{z}} \\
& + \left( \hat{\bm{v}}^T\bar{\bm{C}}^T\hat{\bm{V}}^{1/2}\bm{M}_g\hat{\bm{V}}^{1/2} \bar{\bm{Q}} \hat{\bm{z}}\right)^T + \underbrace{\hat{\bm{z}}^T\bar{\bm{Q}}^T\hat{\bm{V}}^{1/2}\bm{M}_g\hat{\bm{V}}^{1/2} \bar{\bm{Q}} \hat{\bm{z}}}_{O_P\left[\left\lbrace n\left(\gamma_K p \right)^{-1} + \epsilon\gamma_K^{-1}\right\rbrace^2\right] = O_P\left(\eta \right)}.
\end{align*}
First,
\begin{align*}
\norm{\hat{\bm{v}}^T \bar{\bm{C}}^T\hat{\bm{V}}^{1/2}\bm{M}_g\hat{\bm{V}}^{1/2} \bar{\bm{C}} \hat{\bm{v}} - \bar{\bm{C}}^T\hat{\bm{V}}^{1/2}\bm{M}_g\hat{\bm{V}}^{1/2} \bar{\bm{C}}}_2 = O_P\left( \eta\right).
\end{align*}
Next, by the proof of Theorem \ref{theorem:CorrPCA},
\begin{align*}
\norm{\hat{\bm{z}} - \bar{\bm{E}}_2^T \left( \bar{\bm{L}} + \bar{\bm{E}}_1\right)}_2 = O_P\left( \eta\right).
\end{align*}
Therefore, we need only understand how 
\begin{align*}
&\bar{\bm{C}}^T\hat{\bm{V}}^{1/2}\bm{M}_g\hat{\bm{V}}^{1/2} \bar{\bm{Q}} \bar{\bm{E}}_2^T \left( \bar{\bm{L}} + \hat{\bm{E}}_1\right) = \left( n\gamma_K p\right)^{-1/2}\bm{C}^T \bm{M}_g \hat{\bm{V}}\bm{Q}_{\bm{C}} \left( \bm{Q}_{\bm{C}}^T \hat{\bm{V}} \bm{Q}_{\bm{C}}\right)^{-1}\bm{Q}_{\bm{C}}^T \bm{E}^T \bar{\bm{L}} \\
& + \gamma_K^{-1}n^{-1}\bm{C}^T \bm{M}_g \hat{\bm{V}}\bm{Q}_{\bm{C}} \left( \bm{Q}_{\bm{C}}^T \hat{\bm{V}} \bm{Q}_{\bm{C}}\right)^{-1} \bm{Q}_{\bm{C}}^T\left(p^{-1}\bm{E}^T\bm{E} \right)\hat{\bm{V}}^{-1}\bm{C}\\
=& \left( \gamma_K p n\right)^{-1/2}\bm{C}^T \bm{M}_g \bm{V}_*\bm{Q}_{\bm{C}} \left( \bm{Q}_{\bm{C}}^T \bm{V}_* \bm{Q}_{\bm{C}}\right)^{-1}\bm{Q}_{\bm{C}}^T \bm{E}^T \bar{\bm{L}}\\
& + \gamma_K^{-1}n^{-1}\bm{C}^T \bm{M}_g \bm{V}_*\bm{Q}_{\bm{C}} \left( \bm{Q}_{\bm{C}}^T \bm{V}_* \bm{Q}_{\bm{C}}\right)^{-1} \bm{Q}_{\bm{C}}^T\left(p^{-1}\bm{E}^T\bm{E} \right)\bm{V}_*^{-1}\bm{C} + O_P\left\lbrace \epsilon\gamma_K^{-1} + n^{1/2}\left( \gamma_K p\right)^{-1/2}\epsilon \right\rbrace 
\end{align*}
behaves. We can write the first term as
\begin{align*}
&\underbrace{\left( \gamma_K p\right)^{-1}\bm{C}^T \bm{M}_g \bm{V}_*\bm{Q}_{\bm{C}} \left( \bm{Q}_{\bm{C}}^T \bm{V}_* \bm{Q}_{\bm{C}}\right)^{-1}\bm{Q}_{\bm{C}}^T \bm{e}_g \bm{\ell}_g^T}_{O_P\left\lbrace n\left( \gamma_K p\right)^{-1}\right\rbrace} +\\
&\underbrace{\left( \gamma_K n p\right)^{-1/2}\bm{C}^T \bm{M}_g \bm{V}_*\bm{Q}_{\bm{C}} \left( \bm{Q}_{\bm{C}}^T \bm{V}_* \bm{Q}_{\bm{C}}\right)^{-1}\bm{Q}_{\bm{C}}^T \bm{E}_{-g}^T \bar{\bm{L}}_{-g}}_{O_P\left\lbrace \left( \gamma_K p\right)^{-1/2}\right\rbrace}\\
=& O_P\left( \eta\right)
\end{align*}
where the subscript $-g$ means we remove the $g^{\text{th}}$ row from the matrix. The second term can also be decomposed in the same way:
\begin{align*}
& \underbrace{\gamma_K^{-1}n^{-1}\bm{C}^T \bm{M}_g \bm{V}_*\bm{Q}_{\bm{C}} \left( \bm{Q}_{\bm{C}}^T \bm{V}_* \bm{Q}_{\bm{C}}\right)^{-1}\bm{Q}_{\bm{C}}^T\left(p^{-1}\bm{e}_g \bm{e}_g^T \right)\bm{V}_*^{-1}\bm{C}}_{O_P\left\lbrace n^{1/2} \left( \gamma_K p\right)^{-1} \right\rbrace} +\\
& \underbrace{\gamma_K^{-1}n^{-1}\bm{C}^T \bm{M}_g \bm{V}_*\bm{Q}_{\bm{C}} \left( \bm{Q}_{\bm{C}}^T \bm{V}_* \bm{Q}_{\bm{C}}\right)^{-1}\bm{Q}_{\bm{C}}^T\left(p^{-1}\bm{E}_{-g}^T\bm{E}_{-g} \right)\bm{V}_*^{-1}\bm{C}}_{O_P\left\lbrace \left( \gamma_K p\right)^{-1/2}\right\rbrace}\\
&= O_P\left( \eta\right).
\end{align*}
This completes the proof for \eqref{equation:Cor:ChatMChat}. The proof of \eqref{equation:Cor:ChatMC} is identical to the analysis above and is not shown.\par
\indent To prove \eqref{equation:Cor:Proj}, we see that
\begin{align*}
\norm{P_{\bm{C}} - P_{\hat{\bm{C}}}}_F^2 = 2I_K - 2\Tr\left( P_{\bm{C}}P_{\hat{\bm{C}}}\right) = 2I_K - 2\Tr\left( \bm{C}^T \hat{\bm{C}} \left( \hat{\bm{C}}^T\hat{\bm{C}}\right)^{-1} \hat{\bm{C}}^T \bm{C}\left( \bm{C}^T\bm{C}\right)^{-1}\right) = O_P\left( \eta\right).
\end{align*}
\end{proof}

\begin{corollary}
\label{corollary:ConfSub:S}
Suppose the conditions of Corollary \ref{corollary:ConfSub:CtC} hold. Then using the same notation as Theorem \ref{theorem:CorrPCA},
\begin{align}
\label{equation:DebashiRemoved}
\gamma_K \left[\bar{\bm{S}} - \sum\limits_{k=1}^K \hat{\mu}_k \begin{pmatrix}
\hat{\bm{v}}_k\\
\hat{\bm{z}}_k
\end{pmatrix} \begin{pmatrix}
\hat{\bm{v}}_k^T & \hat{\bm{z}}_k
\end{pmatrix}\right] = \begin{pmatrix}
\underbrace{o_p(1)}_{K\times K} & \underbrace{o_P(1)}_{K\times (n-K)}\\
\underbrace{o_P(1)}_{(n-K)\times K} & \underbrace{\bar{\bm{Q}}^T \hat{\bm{V}}^{-1/2}p^{-1}\bm{E}^T \bm{E}\hat{\bm{V}}^{-1/2}\bar{\bm{Q}} + o_P(1)}_{(n-K)\times (n-K)}
\end{pmatrix}
\end{align}
where $o_P(1)$ here is short for a rank $K$ (i.e. $K$ non-zero singular values) matrix whose 2-norm converges to 0 in probability.
\end{corollary}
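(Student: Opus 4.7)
The plan is to read off the block structure of $\bar{\bm{S}}$ from \eqref{equation:Debashis} and combine it with the eigenvalue equations \eqref{equation:Eigsv} and \eqref{equation:Eigsz}. Write $\bar{\bm{S}} = \left( \begin{smallmatrix} A & B \\ B^T & D \end{smallmatrix}\right)$ with $A = (\bar{\bm{L}}+\bar{\bm{E}}_1)^T(\bar{\bm{L}}+\bar{\bm{E}}_1)$, $B = (\bar{\bm{L}}+\bar{\bm{E}}_1)^T\bar{\bm{E}}_2$ and $D = \bar{\bm{E}}_2^T\bar{\bm{E}}_2$, and stack the top $K$ eigen-pieces into $\hat{\bm{U}} = [\hat{\bm{v}}_1 \cdots \hat{\bm{v}}_K] \in \mathbb{R}^{K \times K}$, $\hat{\bm{Z}} = [\hat{\bm{z}}_1 \cdots \hat{\bm{z}}_K] \in \mathbb{R}^{(n-K) \times K}$ and $\hat{\bm{M}} = \diag(\hat{\mu}_1, \ldots, \hat{\mu}_K)$. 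The identity \eqref{equation:DebashiRemoved} then reduces to bounding the four blocks $\gamma_K(A - \hat{\bm{U}}\hat{\bm{M}}\hat{\bm{U}}^T)$, $\gamma_K(B - \hat{\bm{U}}\hat{\bm{M}}\hat{\bm{Z}}^T)$ and its transpose, and $\gamma_K(D - \hat{\bm{Z}}\hat{\bm{M}}\hat{\bm{Z}}^T)$, and recognizing that $\gamma_K D = \bar{\bm{Q}}^T \hat{\bm{V}}^{-1/2}\left(p^{-1}\bm{E}^T\bm{E}\right)\hat{\bm{V}}^{-1/2}\bar{\bm{Q}}$ holds exactly by the definition \eqref{equation:E2} of $\bar{\bm{E}}_2$.

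In matrix form, \eqref{equation:Eigsv} and \eqref{equation:Eigsz} read $\hat{\bm{U}}\hat{\bm{M}} = A\hat{\bm{U}} + B\hat{\bm{Z}}$ and $\hat{\bm{Z}}\hat{\bm{M}} = B^T\hat{\bm{U}} + D\hat{\bm{Z}}$. Post-multiplying by $\hat{\bm{U}}^T$ or $\hat{\bm{Z}}^T$ (and using $\hat{\bm{U}}\hat{\bm{M}}\hat{\bm{Z}}^T = \hat{\bm{U}}(\hat{\bm{Z}}\hat{\bm{M}})^T$ for the off-diagonal block) gives
\begin{align*}
A - \hat{\bm{U}}\hat{\bm{M}}\hat{\bm{U}}^T &= A(I_K - \hat{\bm{U}}\hat{\bm{U}}^T) - B\hat{\bm{Z}}\hat{\bm{U}}^T,\\
B - \hat{\bm{U}}\hat{\bm{M}}\hat{\bm{Z}}^T &= (I_K - \hat{\bm{U}}\hat{\bm{U}}^T)B - \hat{\bm{U}}\hat{\bm{Z}}^T D,\\
D - \hat{\bm{Z}}\hat{\bm{M}}\hat{\bm{Z}}^T &= D(I_{n-K} - \hat{\bm{Z}}\hat{\bm{Z}}^T) - B^T\hat{\bm{U}}\hat{\bm{Z}}^T.
\end{align*}
Orthonormality of the $n$-dimensional eigenvectors of $\bar{\bm{S}}$ gives $\hat{\bm{U}}^T\hat{\bm{U}} + \hat{\bm{Z}}^T\hat{\bm{Z}} = I_K$, and since $\hat{\bm{U}}$ is square this forces $\|I_K - \hat{\bm{U}}\hat{\bm{U}}^T\|_2 = \|\hat{\bm{Z}}^T\hat{\bm{Z}}\|_2 = \|\hat{\bm{Z}}\|_2^2$. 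From the proof of Theorem~\ref{theorem:CorrPCA} I already have $\|\hat{\bm{Z}}\|_2 = O_P\{n^{1/2}(\gamma_K p)^{-1/2} + \epsilon\gamma_K^{-1}\}$, $\|A\|_2, \|\hat{\bm{U}}\|_2 = O_P(1)$, $\|B\|_2 = O_P(\gamma_K^{-1/2})$ and $\|D\|_2 = O_P(\gamma_K^{-1})$.

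Feeding these into the three displayed identities and scaling by $\gamma_K$ produces error terms of the three generic forms $\gamma_K\|A\|_2\|\hat{\bm{Z}}\|_2^2 = O_P(n/p + \epsilon^2/\gamma_K)$, $\gamma_K\|B\|_2\|\hat{\bm{Z}}\|_2 = O_P\{(n/p)^{1/2} + \epsilon\gamma_K^{-1/2}\}$, and $\gamma_K\|D\|_2\|\hat{\bm{Z}}\|_2^2 = O_P\{n/(\gamma_K p) + \epsilon^2\gamma_K^{-2}\}$. Each is $o_P(1)$ by item~\ref{item:assumptionBasics:np} of Assumption~\ref{assumption:Basics} (giving $n/p \to 0$ and $\gamma_K \leq c_2 n$), the lower bound $\gamma_K \geq c_2^{-1}$ from item~\ref{item:assumptionBasics:LtL}, the hypothesis $\epsilon = o_P(1)$, and the standing hypothesis $\epsilon/\gamma_K = o_P(1)$ of Theorem~\ref{theorem:CorrPCA}. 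Because every error is an outer product involving at least one of $\hat{\bm{U}}$, $\hat{\bm{Z}}$ or $I_K - \hat{\bm{U}}\hat{\bm{U}}^T$, it has rank at most $K$, matching the rank annotation in the statement. The main obstacle is purely the rate bookkeeping: one must verify that every combination $\gamma_K \cdot (\text{norm bound})$ collapses to $o_P(1)$ under the prescribed regime, most delicately for $\gamma_K^{1/2}\|\hat{\bm{Z}}\|_2 = (n/p)^{1/2} + \epsilon\gamma_K^{-1/2}$, which requires jointly invoking $n/p \to 0$ and $\gamma_K$ bounded away from zero.
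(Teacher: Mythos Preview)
Your argument is correct and takes a somewhat different, more algebraic route than the paper's. The paper works term-by-term: for the top-left block it compares each $\hat\mu_k\hat{\bm v}_k\hat{\bm v}_k^T$ to $\bm\lambda_k(\bar{\bm N}^T\bar{\bm N})\bm v_k\bm v_k^T$ using the eigenvector perturbation bounds \eqref{equation:EigsResults:vhatv}--\eqref{equation:EigsResults:vhata}, and for the off-diagonal it expands $\hat\mu_k\hat{\bm z}_k$ through the resolvent formula \eqref{equation:Eigsz} to approximate it by $\bar{\bm E}_2^T\bar{\bm N}\hat{\bm v}_k$. You instead manipulate the eigenvalue equations directly into exact block identities and bound everything with only the crude norms $\|\hat{\bm Z}\|_2$, $\|A\|_2$, $\|B\|_2$, $\|D\|_2$; this avoids ever invoking the individual-eigenvector perturbation estimates or the resolvent expansion, which is a cleaner and more self-contained argument.

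Two small points worth tightening. First, the off-diagonal term $\gamma_K\hat{\bm U}\hat{\bm Z}^T D$ yields a rate $\gamma_K\|D\|_2\|\hat{\bm Z}\|_2 = O_P(\|\hat{\bm Z}\|_2)$ that is not literally one of the three forms you list, although it is trivially $o_P(1)$ and dominated by your second form once you use the crude bound $\|D\|_2 = O_P(\gamma_K^{-1}) \le O_P(\gamma_K^{-1/2})$. Second, your rank justification (``every error involves $\hat{\bm U}$, $\hat{\bm Z}$ or $I_K-\hat{\bm U}\hat{\bm U}^T$'') bounds each summand by rank $K$, but a sum of two rank-$K$ matrices can have rank $2K$; for the bottom-right residual you should note that both $\gamma_K D\hat{\bm Z}\hat{\bm Z}^T$ and $\gamma_K B^T\hat{\bm U}\hat{\bm Z}^T$ share the right factor $\hat{\bm Z}^T$, so their sum (which in fact equals $\gamma_K\hat{\bm Z}\hat{\bm M}\hat{\bm Z}^T$ by the eigen-equation) has rank at most $K$.
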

\begin{proof}
We see that
\begin{align*}
\hat{\mu}_k \begin{pmatrix}
\hat{\bm{v}}_k\\
\hat{\bm{z}}_k
\end{pmatrix} \begin{pmatrix}
\hat{\bm{v}}_k^T & \hat{\bm{z}}_k
\end{pmatrix} = \hat{\mu}_k\begin{pmatrix}
\hat{\bm{v}}_k \hat{\bm{v}}_k^T & \hat{\bm{v}}_k \hat{\bm{z}}_k^T\\
\hat{\bm{z}}_k\hat{\bm{v}}_k^T & \hat{\bm{z}}_k\hat{\bm{z}}_k^T
\end{pmatrix}\,\,\, (k=1,\ldots,K).
\end{align*}
First, let $\bar{\bm{N}} = \bar{\bm{L}} + \bar{\bm{E}}_1$. In \eqref{equation:corrPCA:EigsResults} we defined $\bm{v}_k$ to be the $k^{\text{th}}$ eigenvector of $\bar{\bm{N}}^T \bar{\bm{N}}$. We then have that
\begin{align*}
\norm{\hat{\mu}_k \hat{\bm{v}}_k \hat{\bm{v}}_k^T - \bm{\lambda}_k\left( \bar{\bm{N}}^T \bar{\bm{N}}\right)\bm{v}_k\bm{v}_k^T}_2 = o_P\left( \gamma_K^{-1}\right),
\end{align*}
meaning
\begin{align*}
\norm{\bar{\bm{N}}^T \bar{\bm{N}} - \sum\limits_{k=1}^K \hat{\mu}_k \hat{\bm{v}}_k \hat{\bm{v}}_k^T}_2 = o_P\left( \gamma_K^{-1}\right).
\end{align*}
Next, we have that
\begin{align*}
\hat{\mu}_k \hat{\bm{z}}_k =& \hat{\mu}_k \left( \hat{\mu}_k I_{n-K} - \bar{\bm{E}}_2^T \bar{\bm{E}}\right)^{-1}\bar{\bm{E}}_2^T \bar{\bm{N}}\hat{\bm{v}}_k = \frac{\hat{\mu}_k}{\hat{\mu}_k - \delta_*^2/\gamma_K}\bar{\bm{E}}_2^T \bar{\bm{N}}\hat{\bm{v}}_k\\
&+ O_P\left\lbrace \gamma_K^{-1}\left(n^{1/2}p^{-1/2} + \epsilon\right)\left(n^{1/2}p^{-1/2}\gamma_K^{-1/2} + \epsilon \gamma_K^{-1}\right)\right\rbrace\\
=& \frac{\mu_k + \delta_*^2/\gamma_K}{\mu_k}\bar{\bm{E}}_2^T \bar{\bm{N}}\hat{\bm{v}}_k + o_P\left( \gamma_K^{-1} \right) = \bar{\bm{E}}_2^T \bar{\bm{N}}\hat{\bm{v}}_k + o_P\left( \gamma_K^{-1} \right)
\end{align*}
Therefore,
\begin{align*}
\norm{\bar{\bm{E}}_2^T \bar{\bm{N}} - \sum\limits_{k=1}^K \hat{\mu}_k\hat{\bm{z}}_k\hat{\bm{v}}_k^T}_2 = \norm{\bar{\bm{E}}_2^T \bar{\bm{N}}\left( I_K - \hat{\bm{v}}\hat{\bm{v}}^T\right)}_2 + o_P\left( \gamma_K^{-1} \right) = o_P\left( \gamma_K^{-1} \right).
\end{align*}
Lastly, $\norm{\hat{\bm{z}}_k}_2^2 = o_P\left( 1/\gamma_K\right)$, which completes the proof.
\end{proof}

\begin{center}
\line(1,0){425}
\end{center}

\indent We now prove a crucial lemma that states that we can accurately estimate $\bm{V}_*$ when the starting point is sufficiently close to $\bm{V}_*$.

\begin{lemma}
\label{lemma:Vhat}
Suppose Assumptions \ref{assumption:Basics} and \ref{assumption:ICaSE} hold and let $k \geq K$. Define $\hat{\bm{V}}$ to be an initial estimate of $\bm{V}_*$ such that $\norm{\hat{\bm{V}} - \bm{V}_*}_2 = \epsilon = o_P(1)$ and $\hat{\delta}^{(f) 2}, \hat{\bm{V}}^{(f)}$ to be the estimates after we run step \ref{ICaSE:Subspace} and then \ref{ICaSE:Variance} of Algorithm \ref{algorithm:ICaSE} using $\hat{\bm{V}}$ as a starting point. Then the maximum of step \ref{ICaSE:Variance} is such that
\begin{align*}
\norm{\hat{\bm{V}}^{(f)} - \bm{V}_*}_2 &= O_P\left( n^{-1} \right)\\
\hat{\delta}^{(f) 2} - \delta_*^2 &= O_P\left( n^{-1} \right).
\end{align*}
\end{lemma}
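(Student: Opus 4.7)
The plan is to combine Theorem~\ref{theorem:CorrPCA} applied to step~\ref{ICaSE:Subspace} with a standard $M$-estimation argument for the restricted maximum likelihood in step~\ref{ICaSE:Variance}, exploiting the strict concavity of the population objective $h_n$ guaranteed by Assumption~\ref{assumption:ICaSE}\textit{(b)}.

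First I would apply Theorem~\ref{theorem:CorrPCA} with the initial estimate $\hat{\bm V}$ (whose error $\epsilon$ is $o_P(1)$) to conclude that after step~\ref{ICaSE:Subspace} the output $\hat{\bm C}_\perp\in\mathbb{R}^{n\times k}$ is such that its column space contains, up to vanishing error, the column space of $\bm C_\perp$. In particular, the orthogonal projector $P_{\hat{\bm C}_\perp}$ absorbs essentially all of the signal $\bm L\bm C_\perp^T$, with the residual signal $\bm L\bm C_\perp^T(I-P_{\hat{\bm C}_\perp})$ negligible on the scale we ultimately need thanks to Assumption~\ref{assumption:Basics}\textit{(d)}.

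Next I would write out the profile REML objective $\ell_n(\delta^2,\bm\tau)$ obtained by concentrating $\bm L$ out of the Gaussian likelihood with row mean $\bm L\hat{\bm C}_\perp^T$ and row covariance $\delta^2\bm V(\bm\tau)$. Up to additive constants, the REML maximizer coincides with that of
\begin{align*}
\ell_n(\delta^2,\bm\tau)=-n^{-1}\log\lvert\delta^2\bm V(\bm\tau)\rvert-n^{-1}\Tr\bigl[\hat{\bm S}\{\delta^2\bm V(\bm\tau)\}^{-1}\bigr],
\end{align*}
where $\hat{\bm S}$ is the residual cross-product matrix built from $\bm Y_2$ after projecting out $\hat{\bm C}_\perp$. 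I would then decompose $\hat{\bm S}-\delta_*^2\bm V_*$ into a stochastic part coming from $p^{-1}\bm E_2^T\bm E_2-\delta_*^2\bm V_*$, whose trace against any fixed bounded matrix has standard deviation $O\{(np)^{-1/2}\}$ by a direct variance calculation (each of the $p$ iid-like row contributions $n^{-1}\bm e_g^T\bm A\bm e_g-n^{-1}\Tr(\bm A\bm V_g)$ has variance $O(n^{-1})$), and a deterministic bias part stemming from $\hat{\bm C}_\perp\neq\bm C_\perp$, controlled by the subspace accuracy bound from step~\ref{ICaSE:Subspace}.

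Finally I would run a standard Taylor/implicit-function argument at $(\delta_*^2,\bm\tau_*)$. Since $\nabla h_n(\delta_*^2\bm\tau_*)=\bm 0$ by Proposition~\ref{proposition:KLVdelta} and $-\nabla^2 h_n(\delta_*^2\bm\tau_*)-c_1^{-1}I_b\succ\bm 0$ by Assumption~\ref{assumption:ICaSE}\textit{(b)}, and since $\nabla^2\ell_n\to\nabla^2 h_n$ uniformly on $\Theta_*$, the empirical Hessian is uniformly negative definite near the truth with high probability. Hence the REML maximizer satisfies
\begin{align*}
\bigl\lVert(\hat\delta^{(f)2}\hat{\bm\tau}^{(f)})-(\delta_*^2\bm\tau_*)\bigr\rVert_2 = O_P\bigl(\lVert\nabla\ell_n(\delta_*^2,\bm\tau_*)\rVert_2\bigr),
\end{align*}
and the preceding gradient bound yields $O_P(n^{-1})$ under Assumption~\ref{assumption:Basics}\textit{(d)}. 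Translating through $\hat{\bm V}^{(f)}=\bm V(\hat{\bm\tau}^{(f)})$ and the bound $\lVert\bm B_j\rVert_2\leq c_2$ gives both claimed rates. The main obstacle will be the careful bookkeeping of the bias contribution to $\nabla\ell_n(\delta_*^2,\bm\tau_*)$ coming from the subspace error $\hat{\bm C}_\perp\neq\bm C_\perp$, together with the determinantal constraint $\log\lvert\bm V(\bm\tau)\rvert=0$; one must verify that these combine to give a gradient bound of order at worst $n^{-1}$, rather than the slower rates one might naively read off from Theorem~\ref{theorem:CorrPCA}.
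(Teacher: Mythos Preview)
Your overall strategy---apply Theorem~\ref{theorem:CorrPCA} to control $\hat{\bm C}_\perp$, then run an $M$-estimation argument on the REML objective with the Hessian governed by Assumption~\ref{assumption:ICaSE}\ref{item:ICaSE:Technical}---is exactly the paper's. The stochastic part of the score is indeed $O_P\{(np)^{-1/2}\}$, and the paper also disposes of the determinantal constraint by reparametrizing $\bm\tau\leftarrow\delta^2\bm\tau$ at the outset and recovering $\delta^2$ afterward via Lipschitz continuity of $\exp\{n^{-1}\log\lvert\,\cdot\,\rvert\}$ on $\Theta_*$.

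The one place your proposal is underspecified is precisely the point you flag as the main obstacle: why the bias from $\hat{\bm C}_\perp\neq\bm C_\perp$ enters the REML objective and its gradient at order $O_P(n^{-1})$ rather than the slower rates visible in Theorem~\ref{theorem:CorrPCA}. The paper's mechanism is not a direct appeal to the size of the subspace error but a \emph{rank argument}. Using Corollary~\ref{corollary:ConfSub:S}, one shows that $\hat{\bm Q}^T p^{-1}\bm Y_2^T\bm Y_2\hat{\bm Q}$ equals the pure-error term $\hat{\bm Q}^T p^{-1}\bm E_2^T\bm E_2\hat{\bm Q}$ plus a \emph{rank-$K$} perturbation whose operator norm is merely $o_P(1)$. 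Because the REML likelihood, score, and Hessian all depend on this matrix only through $(n-K)^{-1}\Tr[\,\cdot\,\bm A]$ with $\norm{\bm A}_2$ uniformly bounded on $\Theta_*$, a rank-$K$ matrix with $o_P(1)$ norm contributes at most $K\cdot o_P(1)/(n-K)=o_P(n^{-1})$. The same device, together with the block-inverse identity \eqref{equation:Vhat:Vinv}, shows that the restricted objective $\hat l(\bm\tau)$ on the $(n-K)$-dimensional space of $\hat{\bm Q}$ differs from the unrestricted $l(\bm\tau)$ in \eqref{equation:LikeVfinalFull} by $O_P(n^{-1})$ uniformly on $\Theta_*$. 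This rank-plus-trace step is what converts an $o_P(1)$ subspace error into an $O_P(n^{-1})$ rate; if you try to bound the bias directly from the projector difference in Theorem~\ref{theorem:CorrPCA}, you will be stuck at rates like $O_P(n/p)$ or $O_P(\epsilon^2/\gamma_K)$, which are not $O_P(n^{-1})$ under Assumption~\ref{assumption:Basics}\ref{item:assumptionBasics:np} alone.
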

\begin{proof}
It suffices to prove this lemma by proving that $\norm{\delta^{(f) 2} \hat{\bm{V}}^{(f)} - \delta_*^2 \bm{V}_*} = \mathcal{O}_P\left( n^{-1}\right)$, since $\delta^2 = \exp\left\lbrace n^{-1}\log\abs{\delta^2 \bm{V}\left(\bm{\tau}\right)}\right\rbrace$ is Lipschitz continuous in $\delta^2 \bm{\tau}$ and bounded away from 0 in $\Theta_*$. Therefore, we ignore the requirement that $\log \abs{\bm{V}} = 0$ and re-define $\bm{\tau}$ to be $\bm{\tau} \leftarrow \delta^2 \bm{\tau}$ in the proof of this lemma. We continue to use the objects $\bar{\bm{C}}$, $\bar{\bm{Q}}$ and $\hat{\bar{\bm{C}}}$ defined in \eqref{equation:Cbar}, \eqref{equation:Qbar} and \eqref{equation:Cbarhat}, and also define $\hat{\bm{Q}} = \bm{Q}_{\hat{\bm{C}}}$ and $\hat{\bar{\bm{Q}}} = \bm{Q}_{\hat{\bar{\bm{C}}}}$.\par
\indent We first assume that $k = K$. Recall step \ref{ICaSE:Subspace} in Algorithm \ref{algorithm:ICaSE} is to estimate $\bar{\bm{C}}$, and step \ref{ICaSE:Variance} computes $\hat{\bm{V}}^{(f)}$ as
\begin{align}
\label{equation:LikeVfinal}
\hat{\bm{V}}^{(f)} = \argmax_{\substack{\bm{V} = \bm{V}\left( \bm{\tau} \right) \\ \bm{\tau} \in \Theta_*}}\left[ -\left( n-K\right)^{-1}\log \abs{\hat{\bm{Q}}^T \bm{V} \hat{\bm{Q}}} - \left( n-K\right)^{-1}\Tr\left\lbrace \hat{\bm{Q}}^T p^{-1}\bm{Y}^T \bm{Y}\hat{\bm{Q}}\left( \hat{\bm{Q}}^T \bm{V} \hat{\bm{Q}} \right)^{-1} \right\rbrace\right] .
\end{align}
where $\Theta_*$ is defined in Assumption \ref{assumption:ICaSE}. We therefore need to understand how $\hat{\bm{Q}}^T p^{-1}\bm{Y}^T \bm{Y}\hat{\bm{Q}}$ behaves. First, note that we can express $\hat{\bar{\bm{Q}}}$ in terms of $\hat{\bm{Q}}$:
\begin{align*}
\hat{\bar{\bm{Q}}} = \hat{\bm{V}}^{1/2} \hat{\bm{Q}}\left( \hat{\bm{Q}}^T \hat{\bm{V}} \hat{\bm{Q}}\right)^{-1/2}.
\end{align*}
Using the results of Corollary \ref{corollary:ConfSub:S}, we get that
\begin{align}
\hat{\bm{Q}}^T p^{-1}\bm{Y}^T \bm{Y}\hat{\bm{Q}} &= \left( \hat{\bm{Q}}^T \hat{\bm{V}} \hat{\bm{Q}} \right)^{1/2}\hat{\bar{\bm{Q}}}^T \hat{\bm{V}}^{-1/2} p^{-1}\bm{Y}^T \bm{Y} \hat{\bm{V}}^{-1/2}\hat{\bar{\bm{Q}}}\left( \hat{\bm{Q}}^T \hat{\bm{V}} \hat{\bm{Q}} \right)^{1/2}\nonumber\\
=& \left( \hat{\bm{Q}}^T \hat{\bm{V}} \hat{\bm{Q}} \right)^{1/2} \hat{\bar{\bm{Q}}}^T\begin{pmatrix}
\bar{\bm{C}} & \bar{\bm{Q}}
\end{pmatrix}\left( \gamma_K \bar{\bm{S}} \right) \begin{pmatrix}
\bar{\bm{C}}^T\\
\bar{\bm{Q}}^T
\end{pmatrix}\hat{\bar{\bm{Q}}}\left( \hat{\bm{Q}}^T \hat{\bm{V}} \hat{\bm{Q}} \right)^{1/2}\nonumber\\
=& \left( \hat{\bm{Q}}^T \hat{\bm{V}} \hat{\bm{Q}} \right)^{1/2} \hat{\bar{\bm{Q}}}^T\begin{pmatrix}
\bar{\bm{C}} & \bar{\bm{Q}}
\end{pmatrix} \left\lbrace \gamma_K \bar{\bm{S}} - \sum\limits_{j=1}^K \gamma_K\hat{\mu}_j \begin{pmatrix}
\hat{\bm{v}}_j\\
\hat{\bm{z}}_j
\end{pmatrix}\begin{pmatrix}
\hat{\bm{v}}_j^T & \hat{\bm{z}}_j^T
\end{pmatrix} \right\rbrace\begin{pmatrix}
\bar{\bm{C}}^T\\
\bar{\bm{Q}}^T
\end{pmatrix} \hat{\bar{\bm{Q}}} \left( \hat{\bm{Q}}^T \hat{\bm{V}} \hat{\bm{Q}} \right)^{1/2}\nonumber\\
=& \left( \hat{\bm{Q}}^T \hat{\bm{V}} \hat{\bm{Q}} \right)^{1/2} \hat{\bar{\bm{Q}}}^T \bar{\bm{Q}} \left( \bar{\bm{Q}}^T \hat{\bm{V}}^{-1/2}p^{-1}\bm{E}^T \bm{E} \hat{\bm{V}}^{-1/2} \bar{\bm{Q}} \right) \bar{\bm{Q}}^T \hat{\bar{\bm{Q}}}\left( \hat{\bm{Q}}^T \hat{\bm{V}} \hat{\bm{Q}} \right)^{1/2}\nonumber\\
\label{equation:QhatTrace}
& + \underbrace{o_P\left(1\right)}_{\text{rank $K$ matrix with this  2-norm}}.
\end{align}
Since the likelihood function in \eqref{equation:LikeVfinal} is depends on $\hat{\bm{Q}}^T p^{-1}\bm{Y}^T \bm{Y}\hat{\bm{Q}}$ through $1/(n-K)\Tr\left( \cdot \right)$, the rank $K$ matrix with Frobenius norm $o_P\left( 1 \right)$ will contribute $o_P\left( 1/n \right)$ to the likelihood, score function and Hessian, and can therefore be ignored. Let $\sigma_i$ be the $i^{\text{th}}$ singular value of $\bar{\bm{Q}}^T \hat{\bar{\bm{Q}}}$. Since $\bar{\bm{Q}}$ and $\hat{\bar{\bm{Q}}}$ have orthonormal columns, $\sigma_i \leq 1$. Further, the proof of Theorem \ref{theorem:CorrPCA} shows that
\begin{align*}
0 \leq & \sum\limits_{i=1}^{n-K} \left( 1-\sigma_i\right) \leq \sum\limits_{i=1}^{n-K}\left( 1-\sigma_i^2\right) = \Tr\left( I_{n-K}\right) - \Tr\left( \bar{\bm{Q}}^T \hat{\bar{\bm{Q}}} \hat{\bar{\bm{Q}}}^T\bar{\bm{Q}}\right) = 2^{-1}\norm{P_{\bar{\bm{Q}}} - P_{\hat{\bar{\bm{Q}}}}}_F^2\\
=& 2^{-1}\norm{P_{\bar{\bm{C}}} - P_{\hat{\bar{\bm{C}}}}}_F^2 = o_P(1).
\end{align*}
Since $\bar{\bm{Q}}$, $\hat{\bar{\bm{Q}}}$ are unique up to $(n-K) \times (n-K)$ rotation matrices, it suffices to assume $\bar{\bm{Q}}^T \hat{\bar{\bm{Q}}}$ is diagonal. And since the singular values of
\begin{align*}
\left( \bar{\bm{Q}}^T \hat{\bm{V}}^{-1/2}p^{-1}\bm{E}^T \bm{E} \hat{\bm{V}}^{-1/2} \bar{\bm{Q}} \right) \bar{\bm{Q}}^T \hat{\bar{\bm{Q}}} \left( \hat{\bm{Q}}^T \hat{\bm{V}} \hat{\bm{Q}} \right)^{1/2} \left\lbrace \hat{\bm{Q}}^T \bm{V}\left( \bm{\tau} \right) \hat{\bm{Q}} \right\rbrace^{-1} \left( \hat{\bm{Q}}^T \hat{\bm{V}} \hat{\bm{Q}} \right)^{1/2}
\end{align*}
are uniformly bounded from above in probability for all $\bm{\tau} \in \bm{\Theta}_*$,
\begin{align*}
&\left(n-K\right)^{-1}\left[ \left( \hat{\bm{Q}}^T \hat{\bm{V}} \hat{\bm{Q}} \right)^{1/2} \hat{\bar{\bm{Q}}}^T \bar{\bm{Q}} \left( \bar{\bm{Q}}^T \hat{\bm{V}}^{-1/2}p^{-1}\bm{E}^T \bm{E} \hat{\bm{V}}^{-1/2} \bar{\bm{Q}} \right) \bar{\bm{Q}}^T \hat{\bar{\bm{Q}}}\left( \hat{\bm{Q}}^T \hat{\bm{V}} \hat{\bm{Q}} \right)^{1/2}\left\lbrace \hat{\bm{Q}}^T \bm{V}\left( \bm{\tau} \right) \hat{\bm{Q}} \right\rbrace^{-1}  \right]\\
&= \left(n-K\right)^{-1}\Tr\left[ \left( \hat{\bm{Q}}^T \hat{\bm{V}} \hat{\bm{Q}} \right)^{1/2}\left( \bar{\bm{Q}}^T \hat{\bm{V}}^{-1/2}p^{-1}\bm{E}^T \bm{E} \hat{\bm{V}}^{-1/2} \bar{\bm{Q}} \right)\left( \hat{\bm{Q}}^T \hat{\bm{V}} \hat{\bm{Q}} \right)^{1/2}\left\lbrace \hat{\bm{Q}}^T \bm{V}\left( \bm{\tau} \right) \hat{\bm{Q}} \right\rbrace^{-1}  \right] + o_P\left( n^{-1} \right).
\end{align*}
Next, we can re-write $\bar{\bm{Q}}$ as
\begin{align*}
\bar{\bm{Q}} = \left( \hat{\bar{\bm{Q}}}\hat{\bar{\bm{Q}}}^T + \hat{\bar{\bm{C}}}\hat{\bar{\bm{C}}}^T \right)\bar{\bm{Q}} = \hat{\bar{\bm{Q}}} \left( \hat{\bar{\bm{Q}}}^T \bar{\bm{Q}} \right) + \underbrace{o_P\left( 1 \right)}_{\text{rank $K$ matrix with this 2-norm}}.
\end{align*}
Therefore, 
\begin{align*}
&\left(n-K\right)^{-1}\Tr\left[ \left( \hat{\bm{Q}}^T \hat{\bm{V}} \hat{\bm{Q}} \right)^{1/2}\left( \bar{\bm{Q}}^T \hat{\bm{V}}^{-1/2}p^{-1}\bm{E}^T \bm{E} \hat{\bm{V}}^{-1/2} \bar{\bm{Q}} \right)\left( \hat{\bm{Q}}^T \hat{\bm{V}} \hat{\bm{Q}} \right)^{1/2}\left\lbrace \hat{\bm{Q}}^T \bm{V}\left( \bm{\tau} \right) \hat{\bm{Q}} \right\rbrace^{-1}  \right]\\
&= \left(n-K\right)^{-1}\Tr\left[ \left( \hat{\bm{Q}}^T \hat{\bm{V}} \hat{\bm{Q}} \right)^{1/2}\hat{\bar{\bm{Q}}}^T \hat{\bm{V}}^{-1/2}\left( p^{-1}\bm{E}^T \bm{E} \right)\hat{\bm{V}}^{-1/2}\hat{\bar{\bm{Q}}}\left( \hat{\bm{Q}}^T \hat{\bm{V}} \hat{\bm{Q}} \right)^{1/2}\left\lbrace \hat{\bm{Q}}^T \bm{V}\left( \bm{\tau} \right) \hat{\bm{Q}} \right\rbrace^{-1}  \right] + o_P\left( n^{-1} \right)\\
&= \left(n-K\right)^{-1} \Tr\left[ \hat{\bm{Q}}^T\left( p^{-1}\bm{E}^T \bm{E} \right)\hat{\bm{Q}}\left\lbrace \hat{\bm{Q}}^T \bm{V}\left( \bm{\tau} \right) \hat{\bm{Q}} \right\rbrace^{-1}  \right] + o_P\left( n^{-1} \right).
\end{align*}
where the last equality follows from the fact that $\hat{\bar{\bm{Q}}}$ and $\hat{\bm{Q}}$ satisfy $\hat{\bar{\bm{Q}}} = \hat{\bm{V}}^{1/2}\hat{\bm{Q}}\left( \hat{\bm{Q}}^T \hat{\bm{V}}\hat{\bm{Q}}\right)^{-1/2}$. The likelihood function in \eqref{equation:LikeVfinal} (which is $O_P(1)$) can then be re-written up to factors that are $o_P(1/n)$ as
\begin{align*}
\hat{l}\left( \bm{\tau} \right) = -\left(n-K\right)^{-1}\log\abs{\hat{\bm{Q}}^T \bm{V}\left( \bm{\tau} \right) \hat{\bm{Q}}} - \left(n-K\right)^{-1}\Tr\left[ \hat{\bm{Q}}^T\left( p^{-1}\bm{E}^T \bm{E} \right)\hat{\bm{Q}}\left\lbrace \hat{\bm{Q}}^T \bm{V}\left( \bm{\tau} \right) \hat{\bm{Q}} \right\rbrace^{-1}\right].
\end{align*}
We will now show that $\hat{l}\left( \bm{\tau} \right) - l\left( \bm{\tau} \right) = O_P(1/n)$, where
\begin{align}
\label{equation:LikeVfinalFull}
l\left( \bm{\tau}\right) = -n^{-1}\log\abs{\bm{V}\left( \bm{\tau} \right)} - n^{-1}\Tr\left\lbrace \left( p^{-1}\bm{E}^T \bm{E} \right) \bm{V}\left( \bm{\tau} \right)^{-1}\right\rbrace.
\end{align}
First,
\begin{align*}
\log\abs{\bm{V}} = \log{\abs{ \hat{\bm{Q}}^T \bm{V} \hat{\bm{Q}} }} + \log\abs{ \hat{\bm{C}}^T \bm{V} \hat{\bm{C}} - \hat{\bm{C}}^T \bm{V} \hat{\bm{Q}} \left( \hat{\bm{Q}}^T \bm{V} \hat{\bm{Q}} \right)^{-1}\hat{\bm{Q}}^T \bm{V} \hat{\bm{C}} } = \log{\abs{ \hat{\bm{Q}}^T \bm{V} \hat{\bm{Q}} }} + O(1).
\end{align*}
where we have abused notation here and defined $\hat{\bm{C}}$ to be a $n \times K$ matrix with orthonormal columns that is orthogonal to $\hat{\bm{Q}}$. Let $\bm{S} = p^{-1}\bm{E}^T \bm{E}$. Then
\begin{align*}
\bm{S} = \hat{\bm{Q}}\hat{\bm{Q}}^T \bm{S} \hat{\bm{Q}}\hat{\bm{Q}}^T + \text{rank $K$ matrix eigenvalues that are $O_P(1)$}.
\end{align*}
Therefore,
\begin{align*}
l\left( \bm{\tau} \right) = -\left( n-K\right)^{-1}\log{\abs{ \hat{\bm{Q}}^T \bm{V} \hat{\bm{Q}} }} - \left( n-K\right)^{-1}\Tr\left[ \hat{\bm{Q}}^T \bm{S} \hat{\bm{Q}} \hat{\bm{Q}}^T \bm{V}^{-1}\hat{\bm{Q}} \right] + O_P\left( n^{-1} \right).
\end{align*}
Further, we can re-write $\bm{V}^{-1}$ as
\begin{align}
\bm{V}^{-1} &= \begin{bmatrix}
\hat{\bm{Q}} & \hat{\bm{C}}
\end{bmatrix}\begin{bmatrix}
\hat{\bm{Q}}^T \bm{V} \hat{\bm{Q}} & \hat{\bm{Q}}^T \bm{V} \hat{\bm{C}}\\
\hat{\bm{C}}^T \bm{V}\hat{\bm{Q}} & \hat{\bm{C}}^T \bm{V}\hat{\bm{C}}
\end{bmatrix}^{-1} \begin{bmatrix}
\hat{\bm{Q}}^T\\
\hat{\bm{C}}^T
\end{bmatrix}\nonumber\\
\label{equation:Vhat:Vinv}
&= \hat{\bm{Q}} \left( \hat{\bm{Q}}^T \bm{V} \hat{\bm{Q}} \right)^{-1}\hat{\bm{Q}}^T + \text{Matrix of rank $K$ with bounded singular values}.
\end{align}
We then get that
\begin{align*}
l\left( \bm{\tau} \right) = -\left(n-K\right)^{-1}\log{\abs{ \hat{\bm{Q}}^T \bm{V} \hat{\bm{Q}} }} - \left(n-K\right)^{-1}\Tr\left[ \hat{\bm{Q}}^T \bm{S} \hat{\bm{Q}} \left(\hat{\bm{Q}}^T \bm{V}\hat{\bm{Q}}\right)^{-1} \right] + O_P\left( n^{-1} \right),
\end{align*}
as desired.\par
\indent Since the eigenvalues of $\bm{V}(\bm{\tau})$ are uniformly bounded from above and below on $\Theta_*$, we have that $\mathop{\sup}\limits_{\bm{x}\in\bm{\Theta}_*}\abs{\hat{l}^{(n)}(\bm{x}) - l^{(n)}(\bm{x})} = O_P\left(n^{-1}\right)$ (the superscript $(n)$ is to indicate this is with sample size $n$). And since $\mathop{\sup}\limits_{\bm{x} \in \Theta_*} \abs{h_n\left( \bm{x}\right) - l^{(n)}\left( \bm{x}\right)} = O_P\left( n^{1/2}p^{-1/2}\right) = o_P(1)$, $\mathop{\sup}\limits_{\bm{x} \in \Theta_*} \abs{h\left( \bm{x}\right) - \hat{l}^{(n)}(\bm{x})} = o_P(1)$. This proves that $\norm{\hat{\bm{\tau}}^f - \bm{\tau}_*}_2 = o_P(1)$, since $h$ achieves a global maximum at $\bm{\tau}_*$ by Assumption \ref{assumption:ICaSE}.\par
\indent To find the rate, we simply use an identical analysis to write the score and Hessian of the objective function in \eqref{equation:LikeVfinal} up to terms that are $O_P(1/n)$, which gives us
\begin{align*}
\bm{s}^{(n)}\left( \bm{\tau}\right)_j &= \frac{d}{d\bm{\tau}_j}l^{(n)}\left( \bm{\tau}\right) = -n^{-1}\Tr\left\lbrace \bm{V}\left( \bm{\tau}\right)^{-1} \bm{B}_j\right\rbrace + n^{-1}\Tr\left\lbrace \bm{V}\left( \bm{\tau}\right)^{-1} \bm{B}_j\bm{V}\left( \bm{\tau}\right)^{-1} \bm{S} \right\rbrace \,\,\, (j=1,\ldots,b)\\
\hat{\bm{s}}^{(n)}\left( \bm{\tau}\right)_j &= \frac{d}{d\bm{\tau}_j}\hat{l}^{(n)}\left( \bm{\tau}\right) = -n^{-1}\Tr\left\lbrace \hat{\bm{A}}(\bm{\tau})^{-1} \hat{\bm{B}}_j\right\rbrace + n^{-1}\Tr\left\lbrace \hat{\bm{A}}(\bm{\tau})^{-1}\hat{\bm{B}}_j \hat{\bm{A}}(\bm{\tau})^{-1} \hat{\bm{Q}}^T \bm{S}\hat{\bm{Q}} \right\rbrace\\
&  + O_P\left( n^{-1}\right)  \,\,\, (j=1,\ldots,b)\\
H^{(n)}\left( \bm{\tau}\right)_{ij} &= \frac{d^2}{d\bm{\tau}_j d\bm{\tau}_i}l^{(n)}\left( \bm{\tau}\right) = n^{-1}\Tr\left\lbrace \bm{V}\left( \bm{\tau}\right)^{-1} \bm{B}_i \bm{V}\left( \bm{\tau}\right)^{-1} \bm{B}_j \right\rbrace\\
&- 2n^{-1}\Tr\left\lbrace \bm{V}\left( \bm{\tau}\right)^{-1} \bm{B}_i \bm{V}\left( \bm{\tau}\right)^{-1} \bm{B}_j \bm{V}\left( \bm{\tau}\right)^{-1} \bm{S} \right\rbrace \,\,\, (i=1,\ldots,b; j=1,\ldots,b)\\
\hat{H}^{(n)}\left( \bm{\tau}\right)_{ij} &= \frac{d^2}{d\bm{\tau}_j d\bm{\tau}_i}\hat{l}^{(n)}\left( \bm{\tau}\right) = n^{-1}\Tr\left\lbrace \hat{\bm{A}}\left( \bm{\tau}\right)^{-1} \hat{\bm{B}}_i \hat{\bm{A}}\left( \bm{\tau}\right)^{-1} \hat{\bm{B}}_j \right\rbrace\\
& - 2n^{-1}\Tr\left\lbrace \hat{\bm{A}}\left( \bm{\tau}\right)^{-1} \hat{\bm{B}}_i \hat{\bm{A}}\left( \bm{\tau}\right)^{-1} \hat{\bm{B}}_j \hat{\bm{A}}\left( \bm{\tau}\right)^{-1} \hat{\bm{Q}}^T\bm{S}\hat{\bm{Q}} \right\rbrace + O_P\left( n^{-1}\right) \,\,\, (i=1,\ldots,b; j=1,\ldots,b)
\end{align*}
where $\hat{\bm{B}}_j = \hat{\bm{Q}}^T\bm{B}_j \hat{\bm{Q}}$ and $\hat{\bm{A}} = \hat{\bm{Q}}^T\bm{V} \hat{\bm{Q}}$. An identical analysis as the one used to show the likelihoods differ by $O_P\left( n^{-1}\right)$ shows that $\norm{\bm{s}^{(n)} - \hat{\bm{s}}^{(n)}}_2,  \norm{\bm{H}^{(n)} - \hat{\bm{H}}^{(n)}}_2= O_P\left( n^{-1}\right)$. Define the matrix $\hat{\bm{\Gamma}} \in \mathbb{R}^{b \times r} = \bm{Q}_{\left[ A_{\mathcal{E}}\, A_{\mathcal{I},\hat{s}}\right]}$, $r \leq b$, where $\bm{A}_{\mathcal{I},\hat{s}}$ contains the rows of $\bm{A}_{\mathcal{I}}$ that satisfy the equality constraints at the point $\hat{\bm{\tau}}^f$. The column space of $\hat{\bm{\Gamma}}$ will contain the column space $\bm{\Gamma}_*$ with probability tending to 1 because $\hat{\bm{\tau}}^f$ converges in probability to $\bm{\tau}_*$. Since $\norm{\bm{s}^{(n)}\left( \bm{\tau}_*\right)}_2 = O_P\left\lbrace \left(np\right)^{-1/2}\right\rbrace$ and $\bm{H}^{(n)}\left( \bm{\tau}_*\right)$ is positive definite with eigenvalues uniformly bounded away from 0, the KKT conditions at the optimum imply
\begin{align*}
\bm{0} &= \hat{\bm{\Gamma}}^T \hat{\bm{s}}^{(n)}\left( \hat{\bm{\tau}}^f\right) = \hat{\bm{\Gamma}}^T \hat{\bm{s}}^{(n)}\left( \bm{\tau}_*\right) + \hat{\bm{\Gamma}}^T\hat{\bm{H}}^{(n)}\left( \bm{\tau}_*\right)\left( \hat{\bm{\tau}}^f - \bm{\tau}_* \right) + o_P\left( \norm{\hat{\bm{\tau}}^f - \bm{\tau}_*}_2\right)\\
& = \hat{\bm{\Gamma}}^T \bm{s}^{(n)}\left( \bm{\tau}_*\right) + \hat{\bm{\Gamma}}^T \bm{H}^{(n)}\left( \bm{\tau}_*\right) \hat{\bm{\Gamma}} \left( \hat{\bm{\theta}}^f - \bm{\theta}_*\right) + O_P\left( n^{-1}\right) + o_P\left( \norm{\hat{\bm{\tau}}^f - \bm{\tau}_*}_2\right)\\
&= O_P\left\lbrace \left(np\right)^{-1/2} + n^{-1} \right\rbrace +  \hat{\bm{\Gamma}}^T\bm{H}^{(n)}\left( \bm{\tau}_*\right) \hat{\bm{\Gamma}} \left( \hat{\bm{\theta}}^f - \bm{\theta}_*\right) + o_P\left( \norm{\hat{\bm{\tau}}^f - \bm{\tau}_*}_2\right)
\end{align*}
where $\hat{\bm{\Gamma}} \left( \hat{\bm{\theta}}^f - \bm{\theta}_*\right) = \hat{\bm{\tau}}^f - \bm{\tau}_*$, since the column space of $\bm{\Gamma}_*$ is contained in the column space of $\hat{\bm{\Gamma}}$ with probability tending to 1. Therefore, $\norm{\hat{\bm{\tau}}^f - \bm{\tau}_*}_2 = O_P\left( n^{-1}\right)$.\par
\indent When $k > K$, we need only go back to \eqref{equation:DebashiRemoved}. First, the maximum of the remaining eigenvalues of \eqref{equation:DebashiRemoved} is $\delta_*^2 + o_P(1)$ by Lemma \ref{lemma:EigsEtE}. Second, our new $\hat{\bar{\bm{Q}}} \in \mathbb{R}^{n \times (n-k)}$ is simply our old $\hat{\bar{\bm{Q}}}$ in \eqref{equation:QhatTrace} with $k-K$ basis vectors removed. These two facts simply mean that we replace all of the $o_P(1)$'s and $o_P\left( 1/n\right)$'s after \eqref{equation:QhatTrace} with $O_P(1)$ and $O_P(1/n)$. Everything else about the proof remains the same.
\end{proof}

\begin{center}
\line(1,0){425}
\end{center}

We can now prove Theorem \ref{theorem:ICaSE}.

\begin{proof}[of Theorem \ref{theorem:ICaSE}]
Again, for values $\delta^2$ and $\bm{\tau}$ such that $\delta^2 \bm{\tau} \in \Theta_*$, we redefine $\bm{\tau} \leftarrow \delta^2 \bm{\tau}$ for notational convenience to prove $\norm{\bm{\tau}_* - \hat{\bm{\tau}}}_2 = O_P\left( 1/n\right)$, which will prove \eqref{equation:ConfSub:V} by the analysis in the first paragraph of the proof of Lemma \ref{lemma:Vhat}. Results \eqref{equation:ConfSub:C} and \eqref{equation:ConfSub:CKlarge} will follow by Corollary \ref{corollary:ConfSub:CtC} because the column space of $\bm{C}_{\perp}$ and $\hat{\bm{C}}_{\perp}$ is invariant to scalar multiplication.\par
\indent For any value of $\hat{\bm{V}}$, define $\epsilon = \norm{\hat{\bm{V}} - \bm{V}_*}_2$. We use a similar technique to the proof of Lemma \ref{lemma:Vhat} and work with the $\Tr\left( \cdot\right)$ component in the likelihood given by \eqref{equation:LikeVfinal}, where
\begin{align*}
p^{-1}\bm{Y}^T \bm{Y} = p^{-1}\bm{C}\bm{L}^T \bm{L}\bm{C} + p^{-1}\bm{C}\bm{L}^T \bm{E} + \left( p^{-1}\bm{C}\bm{L}^T \bm{E}\right)^T + p^{-1}\bm{E}^T \bm{E}.
\end{align*}
Suppose first that $\gamma_1 = o(n)$. Then for any matrix $\hat{\bm{Q}} \in \mathbb{R}^{n \times k}$ with orthonormal columns ($k$ bounded from above),
\begin{align*}
n^{-1}\Tr\left\lbrace \hat{\bm{Q}}^T \left( p^{-1}\bm{Y}^T \bm{Y} \right)\hat{\bm{Q}} \left( \hat{\bm{Q}}^T \bm{V} \hat{\bm{Q}}\right)^{-1} \right\rbrace = n^{-1}\Tr\left\lbrace \hat{\bm{Q}}^T \left( p^{-1}\bm{E}^T \bm{E} \right)\hat{\bm{Q}} \left( \hat{\bm{Q}}^T \bm{V} \hat{\bm{Q}}\right)^{-1}\right\rbrace + o_P(1)
\end{align*}
since the rank $k$ matrix $\bm{C}\bm{L}^T \bm{E}$ has maximum singular value $\norm{p^{-1}\bm{C}\bm{L}^T \bm{E}}_2 = O_P\left\lbrace \left( \gamma_1 n \right)^{1/2}p^{-1/2} \right\rbrace$ and $\norm{p^{-1}\bm{C}\bm{L}^T \bm{L}\bm{C}}_2 = O(\gamma_1)$. Therefore, $\epsilon = o_P(1)$ for all estimates of $\bm{V}_*$ when $\gamma_1 = o(n)$, meaning all conditions of Theorem \ref{theorem:CorrPCA} and Lemma \ref{lemma:Vhat} will be satisfied on the first iteration of Algorithm \ref{algorithm:ICaSE}, meaning $\epsilon = O_P\left( n^{-1}\right)$ in step \ref{ICaSE:Variance} of the first iteration when $k=K$. Therefore, the estimate for $\bm{C}$ in the second iteration will satisfy
\eqref{equation:corrPCA:subspace}, \eqref{equation:corrPCA:lambda} and \eqref{equation:corrPCA:CtChat.IK} from Theorem \ref{theorem:CorrPCA} and \eqref{equation:Cor:ChatMChat}, \eqref{equation:Cor:ChatMC} for Corollary \ref{corollary:ConfSub:CtC} with $\epsilon = O_P\left( n^{-1}\right)$ when $k=K$. And since we use $\hat{\bm{V}}$ as a starting point for $k+1$, $\epsilon = O_P\left( n^{-1}\right)$ for all subsequent $k's$ ($k \geq K$).\par
\indent Next, suppose $\gamma_K \asymp n$ and $k = K$. Then after step \ref{ICaSE:Subspace} in the first iteration, we have for $\bar{\bm{C}}$ defined in \eqref{equation:Cbar} and $\hat{\bm{v}} = \begin{bmatrix}
\hat{\bm{v}}_1 & \cdots & \hat{\bm{v}}_K
\end{bmatrix}, \hat{\bm{z}} = \begin{bmatrix}
\hat{\bm{z}}_1 & \cdots & \hat{\bm{z}}_K
\end{bmatrix}$ defined in \eqref{equation:corrPCA:EigsResults},
\begin{align*}
\norm{n^{-1/2}\hat{\bm{Q}}^T \bm{C}}_2 &= \norm{n^{-1/2}\hat{\bm{Q}}^T \hat{\bm{V}}^{1/2} \hat{\bm{V}}^{-1/2} \bm{C}}_2 = O\left(\norm{\hat{\bar{\bm{Q}}}^T \bar{\bm{C}}}_2 \right) = O\left( \norm{ \hat{\bar{\bm{Q}}}^T \left( \bar{\bm{C}} - \hat{\bar{\bm{C}}} \hat{\bm{v}}^T \right) }_2\right)\\
 &= O\left( \norm{ \hat{\bar{\bm{Q}}}^T \left\lbrace \bar{\bm{C}}\left(I_K - \hat{\bm{v}}\hat{\bm{v}}^T \right) + \bar{\bm{Q}}\hat{\bm{z}}\hat{\bm{v}}^T \right\rbrace }_2 \right) = O_P\left\lbrace n^{1/2}\left( \gamma_K p\right)^{-1/2} + \epsilon \gamma_K^{-1}\right\rbrace.
\end{align*}
When completing step \ref{ICaSE:Variance} of the first iteration to re-estimate $\bm{V}_*$, we have
\begin{align*}
n^{-1}\Tr\left\lbrace \hat{\bm{Q}}^T \left( p^{-1}\bm{Y}^T \bm{Y} \right)\hat{\bm{Q}} \left( \hat{\bm{Q}}^T \bm{V} \hat{\bm{Q}}\right)^{-1} \right\rbrace = n^{-1}\Tr\left\lbrace \hat{\bm{Q}}^T \left( p^{-1}\bm{E}^T \bm{E} \right)\hat{\bm{Q}} \left( \hat{\bm{Q}}^T \bm{V} \hat{\bm{Q}}\right)^{-1}\right\rbrace + O_P\left( n^{-1}\right),
\end{align*}
meaning $\epsilon = O_P\left( n^{-1}\right)$ after this step. Therefore, step \ref{ICaSE:Subspace} of the second iteration will give us a $\bm{C}$ that satisfies \eqref{equation:corrPCA:subspace}, \eqref{equation:corrPCA:lambda} and \eqref{equation:corrPCA:CtChat.IK} from Theorem \ref{theorem:CorrPCA} and \eqref{equation:Cor:ChatMChat}, \eqref{equation:Cor:ChatMC} for Corollary \ref{corollary:ConfSub:CtC} with $\epsilon = O_P\left( n^{-1}\right)$. This is also true when $k > K$ by Lemma \ref{lemma:Vhat}, since we use the previous estimate of $\bm{V}_*$ as a starting point for step \ref{ICaSE:Subspace}. Equation \eqref{equation:ConfSub:CKlarge} then follows because
\begin{align*}
\norm{P_{C_{\perp}} - P_{\hat{C}_{\perp,k}} P_{C_{\perp}}}_F^2 &= \Tr\left( P_{C_{\perp}}\right) + \Tr\left( P_{\hat{C}_{\perp,k}} P_{C_{\perp}}\right) - 2\Tr\left( P_{\hat{C}_{\perp,k}} P_{C_{\perp}}\right) = K - \Tr\left( P_{\hat{C}_{\perp,k}} P_{C_{\perp}}\right)\\
& \leq K - \Tr\left( P_{\hat{C}_{\perp,K}} P_{C_{\perp}}\right) = K - \Tr\left\lbrace \bm{C}_{\perp}^T \hat{\bm{C}}_{\perp,K}\left( \hat{\bm{C}}_{\perp,K}^T \hat{\bm{C}}_{\perp,K}\right)^{-1} \hat{\bm{C}}_{\perp,K}^T \bm{C}_{\perp} \left( \bm{C}_{\perp}^T \bm{C}_{\perp}\right)^{-1} \right\rbrace\\
&= O_P\left\lbrace n\left( \gamma_K p\right)^{-1} + \left( \gamma_K p\right)^{-1/2} + n^{-1} \gamma_K^{-1}\right\rbrace
\end{align*}
where the column space of $\hat{\bm{C}}_{\perp,K} \in \mathbb{R}^{n \times K}$ is a subspace of the column space of $\hat{\bm{C}}_{\perp,k} \in \mathbb{R}^{n \times k}$, since $k \geq K$. The final equality follows from Corollary \ref{corollary:ConfSub:CtC}.
\end{proof}

\subsubsection{Proof of Theorem \ref{theorem:CBCV}}
\label{subsubsection:CBCV}
In this section use Theorem \ref{theorem:ICaSE} to prove Theorem \ref{theorem:CBCV}. Just as we did in Section \ref{subsubsection:Proof:ICaSE}, we assume $\bm{Y}$ is generated according to model \eqref{equation:Model_yANDvar} with $d=0$ and will use the incorrect, but simpler model in \eqref{equation:WrongModel} with $d=0$.

\begin{proof}[of Theorem \ref{theorem:CBCV}]
First,
\begin{align*}
\E_{\bm{\pi}}\left( \frac{n}{\gamma_K p_f} \bm{L}_{f}^T \bm{L}_{f}\right) = \frac{n}{\gamma_K p}\bm{L}^T \bm{L}
\end{align*}
and for $r,s \in [K]$,
\begin{align*}
\V\left( \frac{n}{\gamma_K p_f} \sum\limits_{g=1}^{p_f} \bm{\ell}_{\bm{\pi}(g)}[r]\bm{\ell}_{\bm{\pi}(g)}[s]\right) &\leq \sum\limits_{g=1}^{p_f}  \V\left( \frac{n}{\gamma_K p_f}\bm{\ell}_{\bm{\pi}(g)}[r]\bm{\ell}_{\bm{\pi}(g)}[s] \right) \leq \left(\frac{n}{\gamma_K p_f}\right)^2 \sum\limits_{g=1}^{p_f} \E\left( \bm{\ell}_{\bm{\pi}(g)}[r]^2\bm{\ell}_{\bm{\pi}(g)}[s]^2\right)\\
& \leq \left(\frac{n}{\gamma_K p_f}\right)^2 c^2 \sum\limits_{g=1}^{p_f}\E\left(\bm{\ell}_{\bm{\pi}(g)}[r]^2\right) = \frac{n c^2}{\gamma_K p^2} \sum\limits_{g=1}^{p_f}\E\left( \frac{n}{\gamma_K}\bm{\ell}_{\bm{\pi}(g)}[r]^2\right) = \frac{n c^2\gamma_r p_f}{\gamma_K^2 p^2}\\
&  \to 0
\end{align*}
where the first inequality follows from the fact that $\bm{\ell}_g[r]\bm{\ell}_g[s]$ is being sampled without replacement from a finite population, meaning successive draws are negatively correlated, and the third inequality because the magnitude entries of $\bm{L}$ are uniformly bounded by a constant $c$. Therefore, the eigenvalues of $n/\left( \gamma_K p_f\right) \bm{L}_{f}^T \bm{L}_{f}$ are $\gamma_k \gamma_K^{-1} + o_P(1)$ as $n \to \infty$ ($k=1,\ldots,K$), meaning the results of Theorems \ref{theorem:CorrPCA} and \ref{theorem:ICaSE} and Lemma \ref{lemma:Vhat} hold for any subset of $p_f$ rows of $\bm{Y}$ that are chosen uniformly at random (where $p_f/p \asymp 1$).\par 
\indent We next note that by assumption and by the proof of Theorem \ref{theorem:CorrPCA} and Lemma \ref{lemma:Vhat}, the leverage scores $\hat{\bar{h}}_i$ are such that $\mathop{\max}\limits_{i \in [n]} \hat{\bar{h}}_i = o_P(1)$ when $k=K$. Therefore, to prove the theorem it suffices to assume that $\mathop{\max}\limits_{i \in [n]} \hat{\bar{h}}_i < 1-\eta/m_n \log\left( m_n\right)$ for all $k \leq K_{\max}$. Fix some fold $f$ and define $\delta_{f*}^2$ and $\bm{V}_{f*}$ to be the analogues of $\delta_*^2$ of $\bm{V}_*$ for fold $f$ and $g(k)$ to be
\begin{align*}
g(k) = \frac{1}{\delta_{f*}^2n p_f}\E\left\lbrace LOO_f(k) \mid \bm{Y}_{(-f)}, \bm{\pi} \right\rbrace = \frac{1}{\delta_{f*}^2n p_f}\E\left\lbrace \sum\limits_{i=1}^n \norm{\bar{\bm{y}}_{f,i} - \hat{\bm{L}}_{f,(-i)} \hat{\bar{\bm{c}}}_i}_2^2 \mid \bm{Y}_{(-f)}, \bm{\pi} \right\rbrace.
\end{align*}
We can re-write $g(k)$ as
\begin{align*}
g(k) =& \underbrace{\frac{1}{\delta_{f*}^2 n p_f} \E\left\lbrace \sum\limits_{i=1}^n\norm{\bm{L}_f \bar{\bm{c}}_i - \hat{\bm{L}}_{f,(-i)} \hat{\bar{\bm{c}}}_i}_2^2 \mid \bm{Y}_{(-f)}, \bm{\pi} \right\rbrace}_{(1)} + \underbrace{\frac{1}{\delta_{f*}^2 n} \E\left[\Tr\left\lbrace\hat{\bm{V}}_{(-f)}^{-1}\left( p_f^{-1}\bm{E}_f^T \bm{E}_f \right)\right\rbrace  \mid \bm{Y}_{(-f)}, \bm{\pi}\right]}_{(2)}\\
&- \underbrace{\frac{2}{\delta_{f*}^2 n}\sum\limits_{i=1}^n \E \left\lbrace p_f^{-1} \left( \bm{L}_f \bar{\bm{c}}_i - \hat{\bm{L}}_{f,(-i)} \hat{\bar{\bm{c}}}_i \right)^T \bm{E}_f \hat{\bm{V}}_{(-f)}^{-1/2} \bm{a}_i \mid \bm{Y}_{(-f)}, \bm{\pi} \right\rbrace}_{(3)}
\end{align*}
We will go through each one of these terms to evaluate $g(k)$.
\begin{enumerate}[label=(\arabic*)]
\item \begin{align*}
\frac{1}{\delta_{f*}^2 n p_f} \E\left\lbrace \sum\limits_{i=1}^n\norm{\bm{L}_f \bar{\bm{c}}_i - \hat{\bm{L}}_{f,(-i)} \hat{\bar{\bm{c}}}_i}_2^2 \mid \bm{Y}_{(-f)}, \bm{\pi} \right\rbrace =& \frac{1}{\delta_{f*}^2 n p_f}\sum\limits_{i=1}^n\norm{ \frac{\bm{L}_f \bar{\bm{c}}_i - \bm{L}_f \bar{\bm{C}}^T \hat{\bar{\bm{C}}}\left( \hat{\bar{\bm{C}}}^T\hat{\bar{\bm{C}}} \right)^{-1} \hat{\bar{\bm{c}}}_i}{\left(1-\hat{\bar{h}}_i\right)} }_2^2\\
&+ n^{-1}\sum\limits_{i=1}^n\left( \frac{1}{1-\hat{\bar{h}}_i} \right)^2 \hat{\bar{\bm{H}}}_i^T \bm{A}_{(-i)} \hat{\bm{V}}_{(-f)}^{-1/2}\bm{V}_{f*}\hat{\bm{V}}_{(-f)}^{-1/2}\bm{A}_{(-i)}\hat{\bar{\bm{H}}}_i
\end{align*}
When \underline{$k = K$}, we have
\begin{align*}
\frac{1}{\delta_{f*}^2 n p_f}\sum\limits_{i=1}^n\norm{ \frac{\bm{L}_f \bar{\bm{c}}_i - \bm{L}_f \bar{\bm{C}}^T \hat{\bar{\bm{C}}}\left( \hat{\bar{\bm{C}}}^T\hat{\bar{\bm{C}}} \right)^{-1} \hat{\bar{\bm{c}}}_i}{\left(1-\hat{\bar{h}}_i\right)} }_2^2 &= \frac{1}{\delta_{f*}^2} \Tr\left\lbrace p_f^{-1}\bm{L}_f^T\bm{L}_f \left( n^{-1}\bar{\bm{C}}^T P_{\hat{\bar{\bm{C}}}}^{\perp}\bar{\bm{C}}\right)\right\rbrace\left\lbrace 1+o_P(1) \right\rbrace\\
&\underbrace{=}_{\text{Theorem \ref{theorem:CorrPCA} and Lemma \ref{lemma:Vhat}}} O_P\left( n^{-1/2}p^{-1/2} + p^{-1} + n^{-2}\right)\\
& = o_P\left( n^{-1}\right)
\end{align*}
and
\begin{align*}
& n^{-1}\sum\limits_{i=1}^n\left( \frac{1}{1-\hat{\bar{h}}_i} \right)^2 \hat{\bar{\bm{H}}}_i^T \bm{A}_{(-i)} \hat{\bm{V}}_{(-f)}^{-1/2}\bm{V}_{f*}\hat{\bm{V}}_{(-f)}^{-1/2}\bm{A}_{(-i)}\hat{\bar{\bm{H}}}_i \\
=& \frac{1+o_P(1)}{n}\sum\limits_{i=1}^n\left( \frac{1}{1-\hat{\bar{h}}_i} \right) \hat{\bar{\bm{H}}}_i^T \bm{A}_{(-i)} \hat{\bm{V}}_{(-f)}^{-1/2}\bm{V}_{f*}\hat{\bm{V}}_{(-f)}^{-1/2}\bm{A}_{(-i)}\hat{\bar{\bm{H}}}_i\\
=& \frac{1+o_P(1)}{n}\Tr\left\lbrace\hat{\bm{V}}_{(-f)}^{-1/2}\bm{V}_{f*}\hat{\bm{V}}_{(-f)}^{-1/2} \sum\limits_{i=1}^n \frac{1}{1-\hat{\bar{h}}_i} \bm{A}_{(-i)}\hat{\bar{\bm{H}}}_i \hat{\bar{\bm{H}}}_i^T \bm{A}_{(-i)} \right\rbrace.
\end{align*}
Note that $\sum\limits_{i=1}^n 1/\left( 1-\hat{\bar{h}}_i\right) \bm{A}_{(-i)}\hat{\bar{\bm{H}}}_i \hat{\bar{\bm{H}}}_i^T \bm{A}_{(-i)}$ is positive semi-definite with
\begin{align*}
\Tr\left\lbrace \sum\limits_{i=1}^n \frac{1}{1-\hat{\bar{h}}_i} \bm{A}_{(-i)}\hat{\bar{\bm{H}}}_i \hat{\bar{\bm{H}}}_i^T \bm{A}_{(-i)}\right\rbrace = \Tr\left\lbrace \sum\limits_{i=1}^n \frac{1}{1-\hat{\bar{h}}_i} \left(\hat{\bar{\bm{H}}}_i - \hat{\bar{h}}_i \bm{a}_i \right) \left(\hat{\bar{\bm{H}}}_i - \hat{\bar{h}}_i \bm{a}_i \right)^T\right\rbrace = \sum\limits_{i=1}^n \hat{\bar{h}}_i = K.
\end{align*}
Therefore,
\begin{align*}
n^{-1}\sum\limits_{i=1}^n\left( \frac{1}{1-\hat{\bar{h}}_i} \right)^2 \hat{\bar{\bm{H}}}_i^T \bm{A}_{(-i)} \hat{\bm{V}}_{(-f)}^{-1/2}\bm{V}_{f*}\hat{\bm{V}}_{(-f)}^{-1/2}\bm{A}_{(-i)}\hat{\bar{\bm{H}}}_i = K/n\left\lbrace 1+o_P(1)\right\rbrace.
\end{align*}
When \underline{$k < K$}, we have
\begin{align*}
\frac{1}{\delta_{f*}^2 n p_f}\sum\limits_{i=1}^n\norm{ \frac{\bm{L}_f \bar{\bm{c}}_i - \bm{L}_f \bar{\bm{C}}^T \hat{\bar{\bm{C}}}\left( \hat{\bar{\bm{C}}}^T\hat{\bar{\bm{C}}} \right)^{-1} \hat{\bar{\bm{c}}}_i}{\left(1-\hat{\bar{h}}_i\right)} }_2^2 &\geq \frac{1}{\delta_{f*}^2} \Tr\left\lbrace p_f^{-1}\bm{L}_f^T\bm{L}_f \left( n^{-1}\bar{\bm{C}}^T P_{\hat{\bar{\bm{C}}}}^{\perp}\bar{\bm{C}}\right)\right\rbrace\\
& \asymp \gamma_K/n
\end{align*}
and
\begin{align*}
n^{-1}\sum\limits_{i=1}^n\left( \frac{1}{1-\hat{\bar{h}}_i} \right)^2 \hat{\bar{\bm{H}}}_i^T \bm{A}_{(-i)} \hat{\bm{V}}_{(-f)}^{-1/2}\bm{V}_{f*}\hat{\bm{V}}_{(-f)}^{-1/2}\bm{A}_{(-i)}\hat{\bar{\bm{H}}}_i \geq 0.
\end{align*}
Lastly, when \underline{$k > K$},
\begin{align*}
\frac{1}{\delta_{f*}^2 n p_f}\sum\limits_{i=1}^n\norm{ \frac{\bm{L}_f \bar{\bm{c}}_i - \bm{L}_f \bar{\bm{C}}^T \hat{\bar{\bm{C}}}\left( \hat{\bar{\bm{C}}}^T\hat{\bar{\bm{C}}} \right)^{-1} \hat{\bar{\bm{c}}}_i}{\left(1-\hat{\bar{h}}_i\right)} }_2^2 \geq 0
\end{align*}
and
\begin{align*}
&n^{-1}\sum\limits_{i=1}^n\left( \frac{1}{1-\hat{\bar{h}}_i} \right)^2 \hat{\bar{\bm{H}}}_i^T \bm{A}_{(-i)} \hat{\bm{V}}_{(-f)}^{-1/2}\bm{V}_{f*}\hat{\bm{V}}_{(-f)}^{-1/2}\bm{A}_{(-i)}\hat{\bar{\bm{H}}}_i \\
\geq & n^{-1}\Tr\left\lbrace\hat{\bm{V}}_{(-f)}^{-1/2}\bm{V}_{f*}\hat{\bm{V}}_{(-f)}^{-1/2} \sum\limits_{i=1}^n \frac{1}{1-\hat{\bar{h}}_i} \bm{A}_{(-i)}\hat{\bar{\bm{H}}}_i \hat{\bar{\bm{H}}}_i^T \bm{A}_{(-i)} \right\rbrace = k/n\left\lbrace 1+o_P(1)\right\rbrace.
\end{align*}
Putting this all together, we get that
\begin{align*}
\frac{1}{\delta_{f*}^2 n p_f} \E\left\lbrace \sum\limits_{i=1}^n\norm{\bm{L}_f \bar{\bm{c}}_i - \hat{\bm{L}}_{f,(-i)} \hat{\bar{\bm{c}}}_i}_2^2 \mid \bm{Y}_{(-f)}, \bm{\pi} \right\rbrace\,\,\, \begin{cases}
\gtrsim \gamma_K/n & \text{if $k < K$}\\
= K/n\left\lbrace 1+o_P(1)\right\rbrace & \text{if $k = K$}\\
\geq k/n\left\lbrace 1+o_P(1)\right\rbrace & \text{if $k > K$}.
\end{cases}
\end{align*}

\item \begin{align*}
\frac{1}{\delta_{f*}^2 n} \E\left[\Tr\left\lbrace \hat{\bm{V}}_{(-f)}^{-1}\left( p_f^{-1}\bm{E}_f^T \bm{E}_f \right)\right\rbrace\mid \bm{Y}_{(-f)}, \bm{\pi}\right] = n^{-1}\Tr\left\lbrace \hat{\bm{V}}_{(-f)}^{-1} \bm{V}_{f*}\right\rbrace.
\end{align*}
Since $\log\abs{\hat{\bm{V}}_{(-f)}} = \log\abs{\bm{V}_{f*}} = 0$, $n^{-1}\Tr\left\lbrace \hat{\bm{V}}_{(-f)}^{-1} \bm{V}_{f*}\right\rbrace \geq 1$ with equality if and only if $\hat{\bm{V}}_{(-f)} = \bm{V}_{f*}$ by Jensen's inequality. Therefore, for \underline{$k < K$},
\begin{align*}
n^{-1}\Tr\left\lbrace \hat{\bm{V}}_{(-f)}^{-1} \bm{V}_{f*}\right\rbrace \geq 1.
\end{align*}
When \underline{$k \geq K$}, by Taylor's Theorem we have
\begin{align*}
n^{-1}\Tr\left\lbrace \hat{\bm{V}}_{(-f)}^{-1} \bm{V}_{f*}\right\rbrace &= 1- n^{-1}\Tr\left\lbrace \bm{V}_{f*}^{-1}\left( \hat{\bm{V}}_{(-f)} - \bm{V}_{f*}\right)\bm{V}_{f*}^{-1} \bm{V}_{f*}\right\rbrace + O_P\left\lbrace \left( n^{-1} + p^{-1/2}\right)^2\right\rbrace\\
&= 1- n^{-1}\Tr\left\lbrace \bm{V}_{f*}^{-1}\left( \hat{\bm{V}}_{(-f)} - \bm{V}_{f*}\right)\right\rbrace + o_P\left( n^{-1}\right).
\end{align*}
Therefore,
\begin{align*}
1 &\leq n^{-1}\Tr\left\lbrace \hat{\bm{V}}_{(-f)}^{-1}\bm{V}_{f*}\right\rbrace = 1 - n^{-1}\Tr\left\lbrace \bm{V}_{f*}^{-1} \left( \hat{\bm{V}}_{(-f)} - \bm{V}_{f*}\right)\right\rbrace + o_P\left( n^{-1}\right)\\
 &= 2 - n^{-1}\Tr\left\lbrace \bm{V}_{f*}^{-1}\hat{\bm{V}}_{(-f)}\right\rbrace + o_P\left( n^{-1}\right) \leq 1 + o_P\left( n^{-1}\right),
\end{align*}
which implies
\begin{align*}
n^{-1}\Tr\left\lbrace \hat{\bm{V}}_{(-f)}^{-1} \bm{V}_{f*}\right\rbrace = 1 + o_P\left( n^{-1} \right).
\end{align*}
Putting this all together, we get that
\begin{align*}
\frac{1}{\delta_{f*}^2 n} \E\left[ \Tr\left\lbrace \hat{\bm{V}}_{(-f)}^{-1}\left( p_f^{-1}\bm{E}_f^T \bm{E}_f \right) \right\rbrace \mid \bm{Y}_{(-f)}, \bm{\pi}\right]\,\,\, \begin{cases}
\geq 1 & \text{if $k < K$}\\
= 1 + o_P\left( n^{-1}\right) & \text{if $k \geq K$}.
\end{cases}
\end{align*}

\item \begin{align*}
&\frac{1}{\delta_{f*}^2 n}\sum\limits_{i=1}^n \E \left\lbrace p_f^{-1} \left( \bm{L}_f \bar{\bm{c}}_i - \hat{\bm{L}}_{f,(-i)} \hat{\bar{\bm{c}}}_i \right)^T \bm{E}_f \hat{\bm{V}}_{(-f)}^{-1/2} \bm{a}_i \mid \bm{Y}_{(-f)}, \bm{\pi} \right\rbrace \\
=& n^{-1}\sum\limits_{i=1}^n \frac{1}{1-\hat{\bar{h}}_i}\hat{\bar{\bm{H}}}_i^T \bm{A}_{(-i)}\hat{\bm{V}}_{(-f)}^{-1/2}\bm{V}_{f*} \hat{\bm{V}}_{(-f)}^{-1/2} \bm{a}_i = n^{-1}\Tr\left\lbrace \hat{\bm{V}}_{(-f)}^{-1/2}\bm{V}_{f*} \hat{\bm{V}}_{(-f)}^{-1/2} \sum\limits_{i=1}^n \frac{1}{1-\hat{\bar{h}}_i} \bm{A}_{(-i)}\hat{\bar{\bm{H}}}_i \bm{a}_i^T\right\rbrace
\end{align*}
where
\begin{align*}
\Tr\left\lbrace\sum\limits_{i=1}^n \frac{1}{1-\hat{\bar{h}}_i} \bm{A}_{(-i)}\hat{\bar{\bm{H}}}_i \bm{a}_i^T\right\rbrace &= 0\\
\norm{\sum\limits_{i=1}^n \frac{1}{1-\hat{\bar{h}}_i} \bm{A}_{(-i)}\hat{\bar{\bm{H}}}_i \bm{a}_i^T}_F^2 &= \sum\limits_{i=1}^n \frac{\hat{\bar{h}}_i}{1-\hat{\bar{h}}_i}.
\end{align*}
By Cauchy-Schwartz, we get that
\begin{align*}
&\abs{n^{-1}\Tr\left\lbrace \hat{\bm{V}}_{(-f)}^{-1/2}\bm{V}_{f*} \hat{\bm{V}}_{(-f)}^{-1/2} \sum\limits_{i=1}^n \frac{1}{1-\hat{\bar{h}}_i} \bm{A}_{(-i)}\hat{\bar{\bm{H}}}_i \bm{a}_i^T\right\rbrace}\\
&\leq \left( n^{-1}\Tr\left[ \left\lbrace I-\hat{\bm{V}}_{(-f)}^{-1/2}\bm{V}_{f*} \hat{\bm{V}}_{(-f)}^{-1/2}\right\rbrace^2\right]\right)^{1/2} \left( n^{-1}\norm{\sum\limits_{i=1}^n \frac{1}{1-\hat{\bar{h}}_i} \bm{A}_{(-i)}\hat{\bar{\bm{H}}}_i \bm{a}_i^T}_F^2\right)^{1/2}\\
&= O_P\left( n^{-3/2} + n^{-1/2}p^{-1/2}\right)M_k^{1/2}
\end{align*}
where $M_k = \mathop{\max}\limits_{i \in [n]}1/\left(1-\hat{\bar{h}}_i\right)$. When \underline{$k=K$}, $M_k = 1+o_P(1)$. Otherwise, we have $M_k = O\left[\frac{\min\left( n, p/n\right)}{\log\left\lbrace \min\left( n, p/n\right)\right\rbrace}\right]$ and
\begin{align*}
&M_k^{1/2}\left( n^{-3/2} + n^{-1/2}p^{-1/2}\right) = o_P\left( n^{-1}\right)\\
\Leftrightarrow & M_k^{1/2}\left( n^{-1/2} + n^{1/2}p^{-1/2}\right) = o_P\left( 1\right)\\
\Leftrightarrow & M_k\left( n^{-1} + np^{-1}\right) = o_P(1),
\end{align*}
where the last equality holds by our assumption on $M_k$. Putting this all together, we get that
\begin{align*}
\frac{1}{\delta_{f*}^2 n}\sum\limits_{i=1}^n \E \left\lbrace p_f^{-1} \left( \bm{L}_f \bar{\bm{c}}_i - \hat{\bm{L}}_{f,(-i)} \hat{\bar{\bm{c}}}_i \right)^T \bm{E}_f \hat{\bm{V}}_{(-f)}^{-1/2} \bm{a}_i \mid \bm{Y}_{(-f)},\bm{\pi} \right\rbrace = \begin{cases}
o_P\left( n^{-1}\right) & \text{if $k = K$}\\
o_P\left( n^{-1}\right) & \text{if $k \neq K$, $\mathop{\max}\limits_{i \in [n]} \hat{\bar{h}}_i < 1 - \eta \frac{\log m_n}{m_n}$}.
\end{cases}
\end{align*}
\end{enumerate}
Putting all three of these together, we finally get that
\begin{align*}
\frac{1}{\delta_{f*}^2n p_f}\E\left\lbrace LOO_f(k) \mid \bm{Y}_{(-f)},\bm{\pi} \right\rbrace \,\,\, \begin{cases}
\geq 1 + \Omega\left( \gamma_K/n\right) & \text{if $k < K$}\\
= 1 + K/n\left\lbrace 1+o_P(1)\right\rbrace & \text{if $k = K$}\\
\geq 1 + k/n\left\lbrace 1+o_P(1)\right\rbrace & \text{if $k > K$}
\end{cases}
\end{align*}
which completes the proof.
\end{proof}


\subsubsection{Lemma \ref{lemma:OmegaWLS} and Theorem \ref{theorem:CGLS}}
\label{subsubsection:Omega}
In this section we prove Lemma \ref{lemma:OmegaWLS} and Theorem \ref{theorem:CGLS} which justify performing inference on $\bm{\beta}$ using the estimated design matrix $\hat{\bm{C}}$. Recall from our discussion after the proof of Theorem \ref{theorem:CorrPCA} that once we have estimated $\delta_*^2$ and $\bm{V}_*$, it suffices to assume that $(n-d)^{-1}\bm{C}_{\perp}^T \hat{\bm{W}} \bm{C}_{\perp} = I_{K}$ and $(n-d)p^{-1}\bm{L}^T \bm{L}$ is diagonal with decreasing elements in the proofs Lemma \ref{lemma:OmegaWLS} and Theorem \ref{theorem:CGLS}.

\begin{proof}[of Lemma \ref{lemma:OmegaWLS}]
Let $m = n-d$. Once we have estimated $\bm{C}_{\perp}$, $\delta_*^2$ and $\bm{V}_*$ from Algorithm \ref{algorithm:ICaSE}, we define 
\begin{align*}
\tilde{\bm{X}} &= \hat{\bm{V}}^{-1/2}\bm{X} \in \mathbb{R}^{n \times d}\\
\tilde{\bm{C}} &= \hat{\bm{V}}^{-1/2}\bm{C} \in \mathbb{R}^{n \times K}\\
\tilde{\bm{C}}_{\perp} &= \bm{Q}_{\tilde{X}}^T \tilde{\bm{C}} \in \mathbb{R}^{m \times K}.
\end{align*}
From our discussion at the beginning of Section \ref{subsubsection:Omega}, it suffices to assume that $m^{-1}\tilde{\bm{C}}_{\perp}^T \tilde{\bm{C}}_{\perp} = I_K$ and $mp^{-1}\bm{L}^T \bm{L} = \diag\left( \lambda_1,\ldots,\lambda_K\right)$ where $\lambda_1,\lambda_K \asymp \gamma_K$ and $\left( \lambda_k - \lambda_{k+1}\right) / \lambda_{k+1} \geq c_1^{-1} + o_P(1)$ where $c_1$ is defined in Assumption \ref{assumption:Basics}. Therefore, all conditions of Theorem \ref{theorem:CorrPCA} and Lemma \ref{lemma:Vhat} are satisfied and \eqref{equation:corrPCA:EigsResults} holds with $\epsilon = 1/n$. This means that for $\hat{\bm{L}}$ defined in \eqref{equation:Lhat} and $\hat{\mu}_k$ defined the proof of Theorem \ref{theorem:CorrPCA} (see \eqref{equation:Eigsv} and \eqref{equation:Eigsz}),
\begin{align*}
m\left( \gamma_K p\right)^{-1}\hat{\bm{L}}^T \hat{\bm{L}} - m \gamma_K^{-1}\hat{\delta}^2 \left( \hat{\bm{C}}_{\perp}^T \hat{\bm{W}}^{-1} \hat{\bm{C}}_{\perp}\right)^{-1} &= \diag\left( \hat{\mu}_1,\ldots,\hat{\mu}_K \right) - \gamma_K^{-1}\hat{\delta}^2 I_K\\
& = m\left( \gamma_K p\right)^{-1}\bm{L}^T \bm{L} + O_P\left\lbrace n\left( p \gamma_K\right)^{-1} + \left( p \gamma_K\right)^{-1/2} + \left( n \gamma_K\right)^{-1} \right\rbrace\\
&= m\left( \gamma_K p\right)^{-1}\bm{L}^T \bm{L} + o_P\left( n^{-1/2}\right)
\end{align*}
Therefore, to prove Lemma \ref{lemma:OmegaWLS}, we only need to show that
\begin{align}
\label{equation:Lemma1ToShow}
\norm{m \left( \gamma_K p\right)^{-1}\hat{\bm{L}}^T\bm{Y}_1 - \left\lbrace m\left( \gamma_K p\right)^{-1}\bm{L}^T \bm{L}\bm{\Omega}^{T}\right\rbrace}_2 = o_P\left( n^{-1/2}\right).
\end{align}
\indent We first note that 
\begin{align*}
\tilde{\bm{C}}_{\perp} = \bm{Q}_{\tilde{X}}^T \tilde{\bm{C}} = \left( \bm{Q}_{X}^T \hat{\bm{V}} \bm{Q}_{X}\right)^{-1/2} \left( \hat{\bm{V}}^{1/2} \bm{Q}_{X}\right)^T \hat{\bm{V}}^{-1/2}\bm{C} = \hat{\bm{W}}^{-1/2} \bm{C}_{\perp}.
\end{align*}
Algorithm \ref{algorithm:ICaSE} will return
\begin{align*}
\hat{\tilde{\bm{C}}}_{\perp} = \tilde{\bm{C}}_{\perp}\hat{\bm{v}} + \sqrt{m}\bm{Q}_{\tilde{C}_{\perp}}\hat{\bm{z}},
\end{align*}
an estimate of $\tilde{\bm{C}}_{\perp}$, where $\hat{\bm{v}}$ and $\hat{\bm{z}}$ are defined in \eqref{equation:Eigsv} and \eqref{equation:Eigsz} with asymptotic properties given in \eqref{equation:corrPCA:EigsResults}. Next, since $\bm{Q}_{\tilde{X}} = \hat{\bm{V}}^{1/2}\bm{Q}_X \hat{\bm{W}}^{-1/2}$,
\begin{align}
\hat{\bm{L}} &= \bm{Y}\bm{Q}_{X}\hat{\bm{W}}^{-1}\hat{\bm{C}}_{\perp} \left( \hat{\bm{C}}_{\perp}^T \hat{\bm{W}}^{-1}\hat{\bm{C}}_{\perp} \right)^{-1} = \bm{Y}\bm{Q}_{X}\hat{\bm{W}}^{-1/2}\hat{\tilde{\bm{C}}}_{\perp}\left( \hat{\tilde{\bm{C}}}_{\perp}^T\hat{\tilde{\bm{C}}}_{\perp}\right)^{-1}\nonumber\\
\label{equation:OmegaWLS:L}
&= m^{-1}\bm{L}\tilde{\bm{C}}_{\perp}^T \hat{\tilde{\bm{C}}}_{\perp} + m^{-1}\bm{E}\hat{\bm{V}}^{-1/2}\bm{Q}_{\tilde{X}}\hat{\tilde{\bm{C}}}_{\perp} = \bm{L} \hat{\bm{v}} + m^{-1}\bm{E}\hat{\bm{V}}^{-1/2}\bm{Q}_{\tilde{X}}\hat{\tilde{\bm{C}}}_{\perp}
\end{align}
and
\begin{align*}
m \left( \gamma_K p\right)^{-1}\hat{\bm{L}}^T\bm{Y}_1 &= m \left( \gamma_K p\right)^{-1}\hat{\bm{L}}^T \bm{Y}\hat{\bm{V}}^{-1}\bm{X}\left( \tilde{\bm{X}}^T \tilde{\bm{X}}\right)^{-1}\\
 &= \underbrace{m \left( \gamma_K p\right)^{-1}\hat{\bm{L}}^T \bm{B}}_{(1)} + \underbrace{m \left( \gamma_K p\right)^{-1}\hat{\bm{L}}^T \bm{L}\bm{\Omega}^{T}}_{(2)} + \underbrace{m \left( \gamma_K p\right)^{-1}\hat{\bm{L}}^T \bm{E}\hat{\bm{V}}^{-1}\bm{X}\left( \tilde{\bm{X}}^T \tilde{\bm{X}}\right)^{-1}}_{(3)}.
\end{align*}
We will go through each of these terms to prove \eqref{equation:Lemma1ToShow}.

\begin{enumerate}[label=(\arabic*)]
\item \begin{align*}
m \left( \gamma_K p\right)^{-1}\hat{\bm{L}}^T \bm{B} = m \left( \gamma_K p\right)^{-1}\hat{\bm{v}}^T \bm{L}^T \bm{B} + \left( \gamma_K p\right)^{-1} \hat{\tilde{\bm{C}}}_{\perp}^T \bm{Q}_{\tilde{X}}^T \hat{\bm{V}}^{-1/2}\bm{E}^T \bm{B}.
\end{align*}
By the assumptions on $\bm{B}$ and $\bm{L}$, the first term is $o_P\left( n^{-1/2}\right)$. For the second term, it suffices to assume $d = 1$. Then
\begin{align*}
\text{Var}\left( p^{-1/2}\bm{E}^T \bm{B}\right) = p^{-1}\sum\limits_{g=1}^p \beta_g^2 \bm{V}_g.
\end{align*}
Therefore,
\begin{align*}
\E \norm{p^{-1/2}\bm{E}^T \bm{B}}_2^2 \asymp n s_1 
\end{align*}
where $s_1$ is defined in Assumption \ref{assumption:B}. By Assumption \ref{assumption:B},
\begin{align*}
n^{1/2}\left\lbrace\left( \gamma_K^{-1}n^{1/2}p^{-1/2}\right) \left( n s_1\right)^{1/2}\right\rbrace \lesssim n^{3/4}\left( \gamma_K p\right)^{-1/2} = \left\lbrace n^{3/2}\left( \gamma_K p\right)^{-1} \right\rbrace^{1/2} \to 0.
\end{align*}
Therefore, the second term is $o_P\left( n^{-1/2}\right)$.

\item \begin{align*}
m\left( \gamma_K p\right)^{-1} \hat{\bm{L}}^T \bm{L} = \hat{\bm{v}}^T m\left( \gamma_K p\right)^{-1}\bm{L}^T \bm{L} + \left( \gamma_K mp\right)^{-1/2}\hat{\tilde{\bm{C}}}_{\perp}^T \bm{Q}_{\tilde{X}}^T \hat{\bm{V}}^{-1/2}\bm{E}^T \bar{\bm{L}}.
\end{align*}
By \eqref{equation:EigsResults:vhata}, the first term is such that
\begin{align*}
\norm{\hat{\bm{v}}^T m\left( \gamma_K p\right)^{-1}\bm{L}^T \bm{L} - m\left( \gamma_K p\right)^{-1}\bm{L}^T \bm{L}}_2 = o_P\left( n^{-1/2}\right).
\end{align*}
For the second term,
\begin{align*}
\left( \gamma_K mp\right)^{-1/2} \hat{\tilde{\bm{C}}}_{\perp}^T \bm{Q}_{\tilde{X}}^T \hat{\bm{V}}^{-1/2}\bm{E}^T \bar{\bm{L}} =& \left( \gamma_K mp\right)^{-1/2} \hat{\bm{v}}^T \tilde{\bm{C}}_{\perp}^T \bm{Q}_{\tilde{X}}^T \hat{\bm{V}}^{-1/2}\bm{E}^T \bar{\bm{L}}\\
&+ \underbrace{\hat{\bm{z}}^T \bm{Q}_{\tilde{\bm{C}}}^T \bm{Q}_{\tilde{X}}^T \hat{\bm{V}}^{-1/2}}_{O_P\left\lbrace n^{1/2}\left(\gamma_K p\right)^{-1/2} + \left(n\gamma_K\right)^{-1} \right\rbrace}\underbrace{\left( \gamma_K p\right)^{-1/2}\bm{E}^T \bar{\bm{L}}}_{O_P\left\lbrace n^{1/2}\left(\gamma_K p\right)^{-1/2}\right\rbrace}\\
=& \left( \gamma_K mp\right)^{-1/2} \hat{\bm{v}}^T \bm{C}_{\perp}^T \left( \bm{Q}_{X}^T \hat{\bm{V}}\bm{Q}_{X}\right)^{-1}\bm{Q}_{X}^T \bm{E}^T\bar{\bm{L}} + o_P\left( n^{-1/2}\right) \\
=& \left( \gamma_K p\right)^{-1/2} \hat{\bm{v}}^T \left( m^{-1/2}\bm{C}_{\perp}\right)^T \left( \bm{Q}_{X}^T \bm{V}_*\bm{Q}_{X}\right)^{-1}\bm{Q}_{X}^T \bm{E}^T\bar{\bm{L}} + o_P\left( n^{-1/2}\right)\\
=& o_P\left( n^{-1/2}\right).
\end{align*}

\item Lastly,
\begin{align*}
m\left( \gamma_K p\right)^{-1}\hat{\bm{L}}^T \bm{E}\hat{\bm{V}}^{-1}\bm{X}\left( \tilde{\bm{X}}^T \tilde{\bm{X}}\right)^{-1} =& m\left( \gamma_K p\right)^{-1}\hat{\bm{v}}\bm{L}^T \bm{E}\hat{\bm{V}}^{-1}\bm{X}\left( \tilde{\bm{X}}^T \tilde{\bm{X}}\right)^{-1}\\
& + \left( \gamma_K p\right)^{-1}\hat{\tilde{\bm{C}}}_{\perp}^T \bm{Q}_{\tilde{X}}^T \hat{\bm{V}}^{-1/2}\bm{E}^T \bm{E}\hat{\bm{V}}^{-1}\bm{X}\left( \tilde{\bm{X}}^T\tilde{\bm{X}}\right)^{-1}\\
=& \left( \gamma_K p\right)^{-1/2} \bar{\bm{L}}^T \bm{E}\hat{\bm{V}}^{-1}\left\lbrace m^{1/2}\bm{X}\left( \tilde{\bm{X}}^T \tilde{\bm{X}}\right)^{-1} \right\rbrace\\
&+ \gamma_K^{-1}\hat{\bm{v}}^T \bm{C}_{\perp}^T \left( \bm{Q}_{X}^T \hat{\bm{V}}\bm{Q}_{X}\right)^{-1}\bm{Q}_{X}^T p^{-1}\bm{E}^T \bm{E}\hat{\bm{V}}^{-1}\bm{X}\left( \tilde{\bm{X}}^T\tilde{\bm{X}}\right)^{-1}\\
&+ \gamma_K^{-1}\hat{\bm{z}}^T \bm{Q}_{\tilde{C}_{\perp}}^T \bm{Q}_{\tilde{X}}^T \hat{\bm{V}}^{-1/2} p^{-1}\bm{E}^T \bm{E}\hat{\bm{V}}^{-1/2}\left\lbrace m^{1/2}\tilde{\bm{X}}\left( \tilde{\bm{X}}^T \tilde{\bm{X}}\right)^{-1}\right\rbrace.
\end{align*}
First,
\begin{align*}
\left( \gamma_K p\right)^{-1/2} \bar{\bm{L}}^T \bm{E}\hat{\bm{V}}^{-1}\left\lbrace m^{1/2}\bm{X}\left( \tilde{\bm{X}}^T \tilde{\bm{X}}\right)^{-1} \right\rbrace &= \left( \gamma_K p\right)^{-1/2} \bar{\bm{L}}^T \bm{E}\bm{V}_*^{-1}\left\lbrace m^{1/2}\bm{X}\left( \tilde{\bm{X}}^T \tilde{\bm{X}}\right)^{-1} \right\rbrace + o_P\left( n^{-1/2}\right)\\
&= o_P\left( n^{-1/2}\right).
\end{align*}
Next,
\begin{align*}
&\gamma_K^{-1}\hat{\bm{v}}^T \bm{C}_{\perp}^T \left( \bm{Q}_{X}^T \hat{\bm{V}}\bm{Q}_{X}\right)^{-1}\bm{Q}_{X}^T \frac{1}{p}\bm{E}^T \bm{E}\hat{\bm{V}}^{-1}\bm{X}\left( \tilde{\bm{X}}^T\tilde{\bm{X}}\right)^{-1} \\
=& \gamma_K^{-1}\hat{\bm{v}}^T \bm{C}_{\perp}^T \bm{W}_*^{-1}\bm{Q}_{X}^T p^{-1}\bm{E}^T \bm{E}\bm{V}_*^{-1}\bm{X}\left( \tilde{\bm{X}}^T\tilde{\bm{X}}\right)^{-1} + O_P\left\lbrace \left( \gamma_K n\right)^{-1}\right\rbrace.
\end{align*}
Let $\bm{A} \in \mathbb{R}^{m \times K}$ be a matrix whose columns form an orthonormal basis for the column space of $\bm{C}_{\perp}$, which is not dependent on $\hat{\bm{V}}$, and define the non-random matrices
\begin{align*}
\bm{f} &= \bm{Q}_{X}\bm{W}_*^{-1}\bm{A} \in \mathbb{R}^{n \times K}\\
\bm{u} &= \bm{X}\left( \bm{X}^T \bm{X}\right)^{-1/2} \in \mathbb{R}^{n \times d},
\end{align*}
both of which have bounded 2-norm and $\bm{f}^T \bm{u} = \bm{0}$. Since $K,d$ are fixed, it suffices to assume that $K=d=1$. We first see that
\begin{align*}
\E \left( \bm{f}^T p^{-1} \bm{E}^T \bm{E} \bm{V}_*^{-1}\bm{u}\right) = \bm{f}^T \left( p^{-1}\sum\limits_{g=1}^p \bm{V}_g\right)\bm{V}_*^{-1}\bm{u} = \bm{f}^T \bm{V}_{*} \bm{V}_*^{-1}\bm{u} = 0
\end{align*}
and
\begin{align*}
\V\left(  \bm{f}^T p^{-1} \bm{E}^T \bm{E} \bm{V}_*^{-1}\bm{u}\right) = p^{-2} \sum\limits_{g=1}^p \text{Var}\left( \tilde{e}_{g,1} \tilde{e}_{g,2}\right) \asymp p^{-1}
\end{align*}
where $\tilde{e}_{g,1} = \bm{f}^T\bm{e}_g \sim N\left( 0,\bm{f}^T \bm{V}_g\bm{f}\right)$ and $\tilde{e}_{g,2} = \bm{u}^T \bm{V}_*^{-1} \bm{e}_g \sim N\left( 0, \bm{u}^T \bm{V}_*^{-1}\bm{V}_g \bm{V}_*^{-1}\bm{u}\right)$.
Therefore,
\begin{align*}
\norm{\gamma_K^{-1}\hat{\bm{v}}^T \bm{C}_{\perp}^T \left( \bm{Q}_{X}^T \hat{\bm{V}}\bm{Q}_{X}\right)^{-1}\bm{Q}_{X}^T p^{-1}\bm{E}^T \bm{E}\hat{\bm{V}}^{-1}\bm{X}\left( \tilde{\bm{X}}^T\tilde{\bm{X}}\right)^{-1}}_2 = O_P\left\lbrace \gamma_K^{-1}p^{-1/2} + \left( \gamma_K n\right)^{-1}\right\rbrace = o_P\left( n^{-1/2}\right). 
\end{align*}
Finally,
\begin{align*}
&\gamma_K^{-1}\hat{\bm{z}}^T \bm{Q}_{\tilde{C}_{\perp}}^T \bm{Q}_{\tilde{X}}^T \hat{\bm{V}}^{-1/2} p^{-1}\bm{E}^T \bm{E}\hat{\bm{V}}^{-1/2}\left\lbrace m^{1/2}\tilde{\bm{X}}\left( \tilde{\bm{X}}^T \tilde{\bm{X}}\right)^{-1}\right\rbrace\\
=& \gamma_K^{-1}\hat{\bm{z}}^T \bm{Q}_{\tilde{C}_{\perp}}^T \bm{Q}_{\tilde{X}}^T \bm{V}_*^{-1/2} p^{-1}\bm{E}^T \bm{E}\bm{V}_*^{-1/2}\left\lbrace m^{1/2}\tilde{\bm{X}}\left( \tilde{\bm{X}}^T \tilde{\bm{X}}\right)^{-1}\right\rbrace + o_P\left\lbrace \left( n\gamma_K\right)^{-1}\right\rbrace\\
=& \gamma_K^{-1}\hat{\bm{z}}^T \bm{Q}_{\tilde{C}_{\perp}}^T \bm{Q}_{\tilde{X}}^T \bm{V}_*^{-1/2} \left(p^{-1}\bm{E}^T \bm{E} - \bm{V}_*\right)\bm{V}_*^{-1/2}\left\lbrace m^{1/2}\tilde{\bm{X}}\left( \tilde{\bm{X}}^T \tilde{\bm{X}}\right)^{-1}\right\rbrace + o_P\left\lbrace \left( n\gamma_K\right)^{-1}\right\rbrace
\end{align*}
where
\begin{align*}
\norm{\gamma_K^{-1}\hat{\bm{z}}^T \bm{Q}_{\tilde{C}_{\perp}}^T \bm{Q}_{\tilde{X}}^T \bm{V}_*^{-1/2} \left(p^{-1}\bm{E}^T \bm{E} - \bm{V}_*\right)\bm{V}_*^{-1/2}\left\lbrace m^{1/2}\tilde{\bm{X}}\left( \tilde{\bm{X}}^T \tilde{\bm{X}}\right)^{-1}\right\rbrace}_2 = O_P\left( n\gamma_K^{-3/2}p^{-1}\right) = o_P\left( n^{-1/2}\right).
\end{align*}
\end{enumerate}
This proves \eqref{equation:Lemma1ToShow} and completes the proof.

\end{proof}

\begin{center}
\line(1,0){425}
\end{center}

\begin{proof}[of Theorem \ref{theorem:CGLS}]
Fix a $g \in [p]$. We will first show \eqref{equation:VgAsym}. Item \ref{item:Vg:Technical} in Assumption \ref{assumption:ICaSE} ensures that the likelihood
\begin{align*}
l_{n,g}\left( \bm{\theta}\right) = n^{-1}\log\abs{\bm{V}\left( \bm{\theta}\right)} - n^{-1}\bm{e}_g^T \bm{V}\left( \bm{\theta}\right)^{-1}\bm{e}_g
\end{align*}
satisfies the weak uniform law of large numbers
\begin{align*}
\sup_{\bm{\theta} \in \Theta_*} \abs{l_{n,g}\left( \bm{\theta}\right) - f_{g}\left( \bm{\theta}\right)} = o_P(1)
\end{align*}
with $f_g$ being uniquely maximized at $\bm{\theta} = \bm{v}_g$. The restricted maximum likelihood problem we are interested in solving is
\begin{align}
\label{equation:Argmax.lnghat}
\argmax_{\bm{\theta} \in \Theta_*} \hat{l}_{n,g}\left( \bm{\theta}\right), \quad \hat{l}_{n,g}\left( \bm{\theta}\right) = -n^{-1}\log\abs{\bm{Q}_{\hat{C}_{\perp}}^T \bm{W}\left( \bm{\theta}\right)\bm{Q}_{\hat{C}_{\perp}}} - n^{-1}\bm{y}_{g_2}^T \bm{Q}_{\hat{C}_{\perp}} \left\lbrace\bm{Q}_{\hat{C}_{\perp}}^T \bm{W}\left( \bm{\theta}\right) \bm{Q}_{\hat{C}_{\perp}}\right\rbrace^{-1}\bm{Q}_{\hat{C}_{\perp}}^T \bm{y}_{g_2}.
\end{align}
We use the same technique used in the proof of Lemma \ref{lemma:Vhat} to show that
\begin{align}
\label{equation:lnglng.hat}
\sup_{\bm{\theta} \in \Theta_*}\abs{l_{n,g}\left( \bm{\theta}\right) - \hat{l}_{n,g}\left( \bm{\theta}\right)} = O_P\left\lbrace n^{1/2}\left( \gamma_K p\right)^{-1/2} + n^{-1/2}\right\rbrace,
\end{align}
which can also be used to show that $\abs{\nabla l_{n,g} - \nabla \hat{l}_{n,g}}, \abs{\nabla^2 l_{n,g} - \nabla^2 \hat{l}_{n,g}} = O_P\left\lbrace n^{1/2}\left( \gamma_K p\right)^{-1/2} + n^{-1/2}\right\rbrace$. First, from \eqref{equation:corrPCA:EigsResults} we get that
\begin{align*}
n^{-1/2}\norm{\bm{Q}_{\hat{C}_{\perp}}^T \bm{C}_{\perp}}_2 = O\left\lbrace \norm{\bm{Q}_{\hat{\bar{C}}_{\perp}}^T \left(\bar{\bm{C}}_{\perp} - \hat{\bar{\bm{C}}}_{\perp} \right)}_2\right\rbrace = O\left( \norm{\bar{\bm{C}}_{\perp} - \hat{\bar{\bm{C}}}_{\perp}}_2\right) = O_P\left\lbrace n^{1/2}\left( \gamma_K p\right)^{-1/2} + n^{-1} \right\rbrace
\end{align*}
where $\bar{\bm{C}}_{\perp} = n^{-1/2}\hat{\bm{W}}^{-1/2}\bm{C}_{\perp}$ and $\hat{\bar{\bm{C}}}_{\perp} = n^{-1/2}\hat{\bm{W}}^{-1/2}\hat{\bm{C}}_{\perp}$. Define $\eta = n^{1/2}\left( \gamma_K p\right)^{-1/2} + n^{-1}$. Then
\begin{align*}
n^{-1}\bm{y}_{g_2}^T \bm{Q}_{\hat{C}_{\perp}} \left\lbrace\bm{Q}_{\hat{C}_{\perp}}^T \bm{W}\left( \bm{\theta}\right) \bm{Q}_{\hat{C}_{\perp}}\right\rbrace^{-1}\bm{Q}_{\hat{C}_{\perp}}^T \bm{y}_{g_2} =& \underbrace{n^{-1}\bm{\ell}_g^T \bm{C}_{\perp}^T \bm{Q}_{\hat{C}_{\perp}} \left\lbrace\bm{Q}_{\hat{C}_{\perp}}^T \bm{W}\left( \bm{\theta}\right) \bm{Q}_{\hat{C}_{\perp}}\right\rbrace^{-1}\bm{Q}_{\hat{C}_{\perp}}^T \bm{C}_{\perp} \bm{\ell}_g}_{o_P\left( \eta\right)}\\
&+ \underbrace{n^{-1}\bm{\ell}_g^T \bm{C}_{\perp}^T \bm{Q}_{\hat{C}_{\perp}} \left\lbrace\bm{Q}_{\hat{C}_{\perp}}^T \bm{W}\left( \bm{\theta}\right) \bm{Q}_{\hat{C}_{\perp}}\right\rbrace^{-1}\bm{Q}_{\hat{C}_{\perp}}^T\bm{e}_{g_2}}_{O_P\left(\eta \right)}\\
& + \left[ n^{-1}\bm{\ell}_g^T \bm{C}_{\perp}^T \bm{Q}_{\hat{C}_{\perp}} \left\lbrace\bm{Q}_{\hat{C}_{\perp}}^T \bm{W}\left( \bm{\theta}\right) \bm{Q}_{\hat{C}_{\perp}}\right\rbrace^{-1}\bm{Q}_{\hat{C}_{\perp}}^T\bm{e}_{g_2} \right]^T\\
&+ \underbrace{n^{-1} \bm{e}_{g_2}^T \bm{Q}_{\hat{C}_{\perp}} \left\lbrace\bm{Q}_{\hat{C}_{\perp}}^T \bm{W}\left( \bm{\theta}\right) \bm{Q}_{\hat{C}_{\perp}}\right\rbrace^{-1}\bm{Q}_{\hat{C}_{\perp}}^T\bm{e}_{g_2}}_{n^{-1}e_g^T \bm{V}\left( \bm{\theta}\right)^{-1}e_g + O_P\left( n^{-1}\right)}\\
=& n^{-1}e_g^T \bm{V}\left( \bm{\theta}\right)^{-1}e_g + O_P\left( \eta\right).
\end{align*}
And since
\begin{align*}
\sup_{\bm{\theta} \in \Theta_*} \abs{n^{-1}\log\abs{\bm{Q}_{\hat{C}_{\perp}}^T \bm{W}\left( \bm{\theta}\right)\bm{Q}_{\hat{C}_{\perp}}} - n^{-1}\log\abs{\bm{V}\left( \bm{\theta}\right)}} = O_P\left( n^{-1}\right),
\end{align*}
\eqref{equation:lnglng.hat} follows. Therefore,
\begin{align*}
\sup_{\bm{\theta} \in \Theta_*} \abs{\hat{l}_{n,g}\left( \bm{\theta}\right) - f_{g}\left( \bm{\theta}\right)} = o_P(1),
\end{align*}
meaning $\hat{\bm{v}}_g$, the solution to \eqref{equation:Argmax.lnghat}, satisfies $\norm{\hat{\bm{v}}_g - \bm{v}_g}_2 = o_P(1)$. For $\hat{\bm{\Gamma}}_g$ defined as it was in the proof of Lemma \ref{lemma:Vhat} (except applied to $\hat{\bm{v}}_g$ instead of $\hat{\bm{\tau}}$),
\begin{align*}
\bm{0} &= \hat{\bm{\Gamma}}_g^T\nabla \hat{l}_{n,g}\left( \hat{\bm{v}}_g\right) = \hat{\bm{\Gamma}}_g^T\nabla \hat{l}_{n,g}\left(\bm{v}_g\right) + \hat{\bm{\Gamma}}_g^T\nabla^2 \hat{l}_{n,g}\left(\bm{v}_g\right) \left( \hat{\bm{v}}_g - \bm{v}_g\right) + o_P\left( \norm{\hat{\bm{v}}_g - \bm{v}_g}_2\right)\\
&= \hat{\bm{\Gamma}}_g^T\nabla l_{n,g}\left( \bm{v}_g\right) + \hat{\bm{\Gamma}}_g^T\nabla^2 l_{n,g}\left(\bm{v}_g\right)\hat{\bm{\Gamma}}_g \left( \hat{\bm{\theta}}_g - \bm{\theta}_g\right) + O_P\left( \eta\right) + o_P\left( \norm{\hat{\bm{v}}_g - \bm{v}_g}_2\right)
\end{align*}
where $\hat{\bm{\Gamma}}_g \left( \hat{\bm{\theta}}_g - \bm{\theta}_g\right) = \hat{\bm{v}}_g - \bm{v}_g$ with probability tending to 1, since $\bm{v}_g$ lies in the column space of $\hat{\bm{\Gamma}}_g$ with probability tending to 1. This proves that
\begin{align*}
\norm{\bm{V}_g - \hat{\bm{V}}_g}_2 = O_P\left\lbrace n^{1/2}\left( \gamma_K p\right)^{-1/2} + n^{-1/2}\right\rbrace
\end{align*}
(and therefore \eqref{equation:VgAsym}), since $\norm{\nabla l_{n,g}\left(\bm{v}_g\right)}_2 = O_P\left( n^{-1/2}\right)$ and the eigenvalues of $\nabla^2 l_{n,g}\left(\bm{v}_g\right)$ are bounded above 0 (and below $\infty$) with probability tending to 1. We can then express $\hat{\bm{V}}_g^{-1}$ and $\hat{\bm{W}}_g$ as
\begin{subequations}
\label{equation:VghatInv}
\begin{align}
&\norm{\hat{\bm{V}}_g^{-1} - \left( \bm{V}_g^{-1} + \sum\limits_{j=1}^b \epsilon_{g,j} \bm{V}_g^{-1} \bm{B}_j \bm{V}_g^{-1}\right)}_2 = o_P\left( n^{-1/2}\right)\\
&\norm{\hat{\bm{W}}_g^{-1} - \left( \bm{W}_g^{-1} + \sum\limits_{j=1}^b \epsilon_{g,j} \bm{W}_g^{-1} \tilde{\bm{B}}_j \bm{W}_g^{-1}\right)}_2 = o_P\left( n^{-1/2}\right)\\
&\epsilon_{g,j} = v_{g,j} - \hat{v}_{g,j},
\end{align}
\end{subequations}
which we will use to prove \eqref{equation:BetaAsym}.\par
\indent Define $\hat{\bm{D}} = \begin{bmatrix}
\bm{X} & \hat{\bm{C}}
\end{bmatrix}$. The generalized least squares estimate for $\bm{\beta}_g$ is
\begin{align}
\hat{\bm{\beta}}_g &= \left[ \left(\hat{\bm{D}}^T \hat{\bm{V}}_g^{-1} \hat{\bm{D}}\right)^{-1} \hat{\bm{D}}^T \hat{\bm{V}}_g^{-1} \bm{y}_g\right]_{1:d} = \left( \bm{X}^T \hat{\bm{V}}_g^{-1}\bm{X}\right)^{-1}\bm{X}^T \hat{\bm{V}}_g^{-1}\bm{y}_g - \hat{\bm{\Omega}}_g \hat{\bm{\ell}}_g\nonumber\\
\label{equation:BetaGLSExact}
&= \bm{\beta}_g + \bm{\Omega}_g \bm{\ell}_g - \hat{\bm{\Omega}}_g \hat{\bm{\ell}}_g + \left( \bm{X}^T \hat{\bm{V}}_g^{-1}\bm{X}\right)^{-1}\bm{X}^T \hat{\bm{V}}_g^{-1}\bm{e}_g
\end{align}
where here
\begin{align*}
\hat{\bm{\ell}}_g &= \left( \hat{\bm{C}}_{\perp}^T \hat{\bm{W}}_g^{-1}\hat{\bm{C}}_{\perp}\right)^{-1}\hat{\bm{C}}_{\perp}^T\hat{\bm{W}}_g^{-1} \bm{Q}_X^T \bm{y}_g\\
\hat{\bm{\Omega}}_g &= \left( \bm{X}^T \hat{\bm{V}}_g^{-1}\bm{X}\right)^{-1}\bm{X}^T \hat{\bm{V}}_g^{-1} \hat{\bm{C}}\\
\bm{\Omega}_g &= \left( \bm{X}^T \hat{\bm{V}}_g^{-1}\bm{X}\right)^{-1}\bm{X}^T \hat{\bm{V}}_g^{-1} \bm{C}.
\end{align*}
To prove the theorem, we will prove two relations in lemmas \ref{lemma:theorem:CGLS_L} and \ref{lemma:theorem:CGLS_Omega}:
\begin{align}
\label{equation:GLS:lDist}
&n^{1/2}\left\lbrace \bm{\Omega}_g\left( \bm{\ell}_g - \hat{\bm{\ell}}_g\right) + \left( \bm{X}^T \hat{\bm{V}}_g^{-1}\bm{X}\right)^{-1}\bm{X}^T \hat{\bm{V}}_g^{-1}\bm{e}_g\right\rbrace \edist \bm{F} + o_P(1)\\
\label{equation:GLS:OmegaAsym}
&n^{1/2}\norm{\hat{\bm{\Omega}}_g - \bm{\Omega}_g}_2 = o_P(1)
\end{align}
where $\bm{F} \sim N\left\lbrace \bm{0}, \left( n^{-1}\bm{X}^T \bm{V}_g^{-1} \bm{X}\right)^{-1} + \bm{\Omega}_g \left(n^{-1}\bm{C}_{\perp}^T \bm{W}_g^{-1}\bm{C}_{\perp} \right)^{-1}\bm{\Omega}_g^T\right\rbrace$.

\begin{lemma}
\label{lemma:theorem:CGLS_L}
\it{Under the assumptions of Theorem \ref{theorem:CGLS}, relation \eqref{equation:GLS:lDist} holds.}
\end{lemma}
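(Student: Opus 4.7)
I would begin by substituting $\bm{y}_{g_2} = \bm{C}_\perp \bm{\ell}_g + \bm{e}_{g_2}$ into the definition of $\hat{\bm{\ell}}_g$ and using the first-order expansion \eqref{equation:VghatInv} to linearize $\hat{\bm{V}}_g^{-1}$ around $\bm{V}_g^{-1}$. The expression inside \eqref{equation:GLS:lDist} then decomposes as $\bm{T}_1 + \bm{T}_2 + \bm{R}_{\text{bias}} + \bm{R}_{\text{stoch}}$, where
\begin{align*}
\bm{T}_1 := n^{1/2}(\bm{X}^T \bm{V}_g^{-1}\bm{X})^{-1}\bm{X}^T \bm{V}_g^{-1}\bm{e}_g, \qquad \bm{T}_2 := n^{1/2}\tilde{\bm{\Omega}}_g(\bm{C}_\perp^T \bm{W}_g^{-1}\bm{C}_\perp)^{-1}\bm{C}_\perp^T \bm{W}_g^{-1}\bm{e}_{g_2},
\end{align*}
$\tilde{\bm{\Omega}}_g$ is the counterpart of $\bm{\Omega}_g$ built with the true $\bm{V}_g$, $\bm{R}_{\text{bias}} := n^{1/2}\bm{\Omega}_g(I_K - \bm{M})\bm{\ell}_g$ with $\bm{M} := (\hat{\bm{C}}_\perp^T \hat{\bm{W}}_g^{-1}\hat{\bm{C}}_\perp)^{-1}\hat{\bm{C}}_\perp^T \hat{\bm{W}}_g^{-1}\bm{C}_\perp$, and $\bm{R}_{\text{stoch}}$ collects the errors from replacing $\hat{\bm{C}}_\perp$ and $\hat{\bm{W}}_g$ by their population targets in the $\bm{e}_g$-linear piece. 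The goal is to show $\bm{R}_{\text{bias}}, \bm{R}_{\text{stoch}} = o_P(1)$ and compute the limit law of $\bm{T}_1 + \bm{T}_2$.

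Next I would bound $\bm{R}_{\text{bias}}$. Because $\hat{\bm{\Omega}}_g\hat{\bm{\ell}}_g$ is invariant under right-multiplication of $\hat{\bm{C}}$ by any invertible $K \times K$ matrix, I normalize without loss of generality so that $(n-d)^{-1}\hat{\bm{C}}_\perp^T \hat{\bm{W}}_g^{-1}\hat{\bm{C}}_\perp = I_K$ and $(n-d)^{-1}\bm{C}_\perp^T \hat{\bm{W}}_g^{-1}\bm{C}_\perp = I_K$, so that $\bm{M} = (n-d)^{-1}\hat{\bm{C}}_\perp^T \hat{\bm{W}}_g^{-1}\bm{C}_\perp$. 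Using \eqref{equation:VgAsym} to swap $\hat{\bm{W}}_g$ for $\hat{\bm{W}}$ at cost $o_P(n^{-1/2})$, Theorem \ref{theorem:ICaSE} combined with the proof of Corollary \ref{corollary:ConfSub:CtC} yields $\|\bm{M} - I_K\|_2 = O_P\{n(\gamma_K p)^{-1} + (\gamma_K p)^{-1/2} + (n\gamma_K)^{-1}\}$, which is $o_P(n^{-1/2})$ by Assumption \ref{assumption:Basics}\ref{item:assumptionBasics:np}. Together with $\|\bm{\Omega}_g\|_2, \|\bm{\ell}_g\|_2 = O(1)$, this forces $\bm{R}_{\text{bias}} = o_P(1)$.

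Controlling $\bm{R}_{\text{stoch}}$ is the harder step because $\hat{\bm{C}}_\perp$ depends on $\bm{e}_g$ through the $g$-th row of $\bm{E}_2$ and $\hat{\bm{W}}_g = \bm{Q}_X^T \hat{\bm{V}}_g \bm{Q}_X$ depends on $\bm{e}_g$ explicitly. I would introduce a leave-one-out estimate $\hat{\bm{C}}_\perp^{(-g)}$ computed from the other $p-1$ genes, which is independent of $\bm{e}_g$, and show $\|\hat{\bm{C}}_\perp - \hat{\bm{C}}_\perp^{(-g)}\|_2 = O_P(p^{-1/2})$ by adapting the eigenvector perturbation analysis in the proof of Theorem \ref{theorem:CorrPCA}; this is the main obstacle, since ICaSE lacks an explicit variational characterization and the perturbation bound must respect the weighted orthogonality used in Algorithm \ref{algorithm:ICaSE}. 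Applying \eqref{equation:VghatInv} a second time to linearize $\hat{\bm{W}}_g^{-1}$ around $\bm{W}_g^{-1}$, the remaining discrepancies are bilinear forms in $\bm{e}_g$ scaled by $n^{-1/2}$, and a direct variance calculation shows their norms are $o_P(1)$.

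Finally, $\bm{T}_1 + \bm{T}_2$ is Gaussian as a linear function of $\bm{e}_g \sim N(\bm{0}, \bm{V}_g)$, and using $\bm{e}_{g_2} = \bm{Q}_X^T \bm{e}_g$,
\begin{align*}
\C(n^{-1/2}\bm{T}_1, n^{-1/2}\bm{T}_2) = (\bm{X}^T \bm{V}_g^{-1}\bm{X})^{-1}\bm{X}^T \bm{V}_g^{-1}\bm{V}_g \bm{Q}_X \bm{W}_g^{-1}\bm{C}_\perp(\bm{C}_\perp^T \bm{W}_g^{-1}\bm{C}_\perp)^{-1}\tilde{\bm{\Omega}}_g^T = \bm{0},
\end{align*}
since $\bm{X}^T \bm{Q}_X = \bm{0}$, so the two blocks are independent. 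Adding $\V(\bm{T}_1) = (n^{-1}\bm{X}^T \bm{V}_g^{-1}\bm{X})^{-1}$ and $\V(\bm{T}_2) = \tilde{\bm{\Omega}}_g(n^{-1}\bm{C}_\perp^T \bm{W}_g^{-1}\bm{C}_\perp)^{-1}\tilde{\bm{\Omega}}_g^T$ then reproduces the covariance of $\bm{F}$ in the statement of Theorem \ref{theorem:CGLS}, completing the plan.
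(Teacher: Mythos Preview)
Your overall decomposition and the treatment of the limiting distribution match the paper closely: the paper also expands $\hat{\bm{\ell}}_g$ into $\bm{\ell}_g$ plus a linear-in-$\bm{e}_{g_2}$ piece, uses Corollary~\ref{corollary:ConfSub:CtC} to kill the bias term, and establishes independence of the two Gaussian pieces via $\bm{X}^T\bm{Q}_X=\bm{0}$ (the paper quotes Craig's Theorem; your direct covariance calculation is equivalent). One small slip: you cannot simultaneously normalize both $(n-d)^{-1}\hat{\bm{C}}_\perp^T\hat{\bm{W}}_g^{-1}\hat{\bm{C}}_\perp$ and $(n-d)^{-1}\bm{C}_\perp^T\hat{\bm{W}}_g^{-1}\bm{C}_\perp$ to $I_K$, because the quantity $\bm{\Omega}_g\hat{\bm{\ell}}_g$ in \eqref{equation:GLS:lDist} is \emph{not} invariant under separate rescalings of $\bm{C}$ and $\hat{\bm{C}}$---only $\hat{\bm{\Omega}}_g\hat{\bm{\ell}}_g$ is. The paper keeps the fixed normalization $(n-d)^{-1}\bm{C}_\perp^T\hat{\bm{W}}^{-1}\bm{C}_\perp=I_K$ and instead expands $(n^{-1}\hat{\bm{C}}_\perp^T\hat{\bm{W}}_g^{-1}\hat{\bm{C}}_\perp)^{-1}$ explicitly through \eqref{equation:VghatInv} and Corollary~\ref{corollary:ConfSub:CtC}; once you do that, your $\bm{R}_{\text{bias}}$ bound goes through as written.

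The substantive difference is in how you handle $\bm{R}_{\text{stoch}}$. You propose constructing a leave-one-out estimator $\hat{\bm{C}}_\perp^{(-g)}$ and showing it is close to $\hat{\bm{C}}_\perp$; you correctly flag that proving this stability for the full two-step ICaSE procedure is the hard part. The paper avoids this entirely by exploiting the algebraic output of Theorem~\ref{theorem:CorrPCA}: it writes $\hat{\bm{C}}_\perp=\bm{C}_\perp\hat{\bm{v}}+(n-d)^{1/2}\hat{\bm{W}}^{1/2}\bar{\bm{Q}}_\perp\hat{\bm{z}}$ (see \eqref{equation:CperpHat}), substitutes the closed-form expression \eqref{equation:Eigsz} for $\hat{\bm{z}}$, and then performs the leave-one-out \emph{inside the resulting sums}---splitting $\bar{\bm{L}}^T\tilde{\bm{E}}\bm{w}_g$ and $p^{-1}\tilde{\bm{E}}^T\tilde{\bm{E}}\,\bm{w}_g$ into the $h=g$ row (bounded deterministically) and the $h\neq g$ rows (whose conditional mean given $\bm{w}_g$ vanishes, with conditional variance $O\{n(p\gamma_K)^{-1}\}$). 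This is cleaner because it never needs a perturbation statement about the ICaSE estimator itself; it only needs the eigenvector equations already established in the proof of Theorem~\ref{theorem:CorrPCA}. Your route can be made to work (Davis--Kahan on the rank-one update of $p^{-1}\bm{Y}_2^T\bm{Y}_2$ gives $\|\hat{\bar{\bm{C}}}_\perp-\hat{\bar{\bm{C}}}_\perp^{(-g)}\|_2=O_P\{n(p\gamma_K)^{-1}\}$, not the $O_P(p^{-1/2})$ you stated, but this still suffices), yet it requires tracking the perturbation through both steps of Algorithm~\ref{algorithm:ICaSE}, which is more bookkeeping than the paper's direct calculation.
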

\begin{proof}
To prove this we need to understand how $\hat{\bm{\ell}}_g$ behaves. First,
\begin{align*}
\hat{\bm{\ell}}_g =& \left( \hat{\bm{C}}_{\perp}^T \hat{\bm{W}}_g^{-1}\hat{\bm{C}}_{\perp}\right)^{-1}\hat{\bm{C}}_{\perp}^T\hat{\bm{W}}_g^{-1} \bm{Q}_X^T \bm{y}_g\\
 =& \left( \hat{\bm{C}}_{\perp}^T \hat{\bm{W}}_g^{-1}\hat{\bm{C}}_{\perp}\right)^{-1} \hat{\bm{C}}_{\perp}^T\hat{\bm{W}}_g^{-1} \bm{C}_{\perp}\bm{\ell}_g + \left( \hat{\bm{C}}_{\perp}^T \hat{\bm{W}}_g^{-1}\hat{\bm{C}}_{\perp}\right)^{-1} \hat{\bm{C}}_{\perp}^T\hat{\bm{W}}_g^{-1} \bm{Q}_X^T \bm{e}_g\\
=& \bm{\ell}_g + \left( \bm{C}_{\perp}^T \hat{\bm{W}}_g^{-1} \bm{C}_{\perp}\right)^{-1} \hat{\bm{C}}_{\perp}^T\hat{\bm{W}}_g^{-1} \bm{Q}_X^T \bm{e}_g + o_P\left( n^{-1/2}\right).
\end{align*}
The last equality follows from Corollary \ref{corollary:ConfSub:CtC} and because
\begin{align*}
\left( n^{-1}\hat{\bm{C}}_{\perp}^T \hat{\bm{W}}_g^{-1}\hat{\bm{C}}_{\perp}\right)^{-1} =& \left( n^{-1}\hat{\bm{C}}_{\perp}^T \bm{W}_g^{-1} \hat{\bm{C}}_{\perp} + \sum\limits_{j=1}^b \epsilon_{g,j} n^{-1}\hat{\bm{C}}_{\perp}^T \bm{W}_g^{-1} \tilde{\bm{B}}_j \bm{W}_g^{-1}\hat{\bm{C}}_{\perp} \right)^{-1} + o_P\left( n^{-1/2}\right)\\
=& \left( n^{-1}\hat{\bm{C}}_{\perp}^T \bm{W}_g^{-1} \hat{\bm{C}}_{\perp} \right)^{-1} + \sum\limits_{j=1}^b \epsilon_{g,j}\left\lbrace\left( n^{-1}\hat{\bm{C}}_{\perp}^T \bm{W}_g^{-1} \hat{\bm{C}}_{\perp} \right)^{-1}\left( n^{-1}\hat{\bm{C}}_{\perp}^T \bm{W}_g^{-1} \tilde{\bm{B}}_j \bm{W}_g^{-1}\hat{\bm{C}}_{\perp} \right)\times \cdots\right.\\
&\times \left.\left( n^{-1}\hat{\bm{C}}_{\perp}^T \bm{W}_g^{-1} \hat{\bm{C}}_{\perp} \right)^{-1}\right\rbrace + o_P\left( n^{-1/2}\right)\\
=& \left( n^{-1}\bm{C}_{\perp}^T \bm{W}_g^{-1} \bm{C}_{\perp} \right)^{-1} + \sum\limits_{j=1}^b \epsilon_{g,j}\left\lbrace\left( n^{-1}\bm{C}_{\perp}^T \bm{W}_g^{-1} \bm{C}_{\perp} \right)^{-1}\left( n^{-1}\bm{C}_{\perp}^T \bm{W}_g^{-1} \tilde{\bm{B}}_j \bm{W}_g^{-1}\bm{C}_{\perp} \right)\times \cdots\right.\\
&\times \left.\left( n^{-1}\bm{C}_{\perp}^T \bm{W}_g^{-1} \bm{C}_{\perp} \right)^{-1}\right\rbrace + o_P\left( n^{-1/2}\right) = \left( n^{-1}\bm{C}_{\perp}^T \hat{\bm{W}}_g^{-1}\bm{C}_{\perp}\right)^{-1} + o_P\left( n^{-1/2}\right).
\end{align*}
Next, $\bm{Q}_X^T \bm{e}_g$ is independent of $\bm{X}^T \bm{V}_g^{-1}\bm{e}_g$ by Craig's Theorem and
\begin{align*}
n^{-1}\bm{X}^T \hat{\bm{V}}_g^{-1}\bm{e}_g = n^{-1}\bm{X}^T\bm{V}_g^{-1}\bm{e}_g + \sum\limits_{j=0}^b \epsilon_{g,j} n^{-1}\bm{X}^T\bm{V}_g^{-1} \bm{B}_j \bm{V}_g^{-1}\bm{e}_g + o_P\left( n^{-1/2}\right) = n^{-1}\bm{X}^T\bm{V}_g^{-1}\bm{e}_g + o_P\left( n^{-1/2}\right).
\end{align*}
Therefore, the proof will be complete if we can show
\begin{align*}
\norm{n^{-1}\hat{\bm{C}}_{\perp}^T\hat{\bm{W}}_g^{-1} \bm{Q}_X^T \bm{e}_g - n^{-1}\bm{C}_{\perp}^T \bm{W}_g^{-1} \bm{Q}_X^T \bm{e}_g}_2 = o_P\left( n^{-1/2}\right),
\end{align*}
since a simple application of Slutsky's Theorem would give us the result. To show this, we first note that
\begin{align*}
\norm{n^{-1}\hat{\bm{C}}_{\perp}^T\hat{\bm{W}}_g^{-1} \bm{Q}_X^T \bm{e}_g - n^{-1}\bm{C}_{\perp}^T \bm{W}_g^{-1} \bm{Q}_X^T \bm{e}_g}_2 \leq &\norm{n^{-1}\hat{\bm{C}}_{\perp}^T\bm{W}_g^{-1} \bm{Q}_X^T \bm{e}_g - n^{-1}\bm{C}_{\perp}^T \bm{W}_g^{-1} \bm{Q}_X^T \bm{e}_g}_2\\
&+ \sum\limits_{j=0}^b \epsilon_{g,j}\norm{n^{-1}\hat{\bm{C}}_{\perp}^T\bm{W}_g^{-1} \tilde{\bm{B}}_j \bm{W}_g^{-1}\bm{Q}_X^T \bm{e}_g - n^{-1}\bm{C}_{\perp}^T\bm{W}_g^{-1} \tilde{\bm{B}}_j \bm{W}_g^{-1}\bm{Q}_X^T \bm{e}_g}_2\\
& + o_P\left( n^{-1/2}\right).
\end{align*}
We will first prove
\begin{align}
\label{equation:lemma:ToShow}
\norm{n^{-1}\hat{\bm{C}}_{\perp}^T\bm{W}_g^{-1} \bm{Q}_X^T \bm{e}_g - n^{-1}\bm{C}_{\perp}^T \bm{W}_g^{-1} \bm{Q}_X^T \bm{e}_g}_2 = o_P\left( n^{-1/2}\right)
\end{align}
and then an identical analysis can be used to show
\begin{align*}
\norm{n^{-1}\hat{\bm{C}}_{\perp}^T\bm{W}_g^{-1} \tilde{\bm{B}}_j \bm{W}_g^{-1}\bm{Q}_X^T \bm{e}_g - n^{-1}\bm{C}_{\perp}^T\bm{W}_g^{-1} \tilde{\bm{B}}_j \bm{W}_g^{-1}\bm{Q}_X^T \bm{e}_g}_2 = o_P\left( n^{-1/2}\right).
\end{align*}
To prove \eqref{equation:lemma:ToShow}, we first define
\begin{align*}
\bar{\bm{C}}_{\perp} &= (n-d)^{-1/2}\hat{\bm{W}}^{-1/2}\bm{C}_{\perp}\\
\bar{\bm{Q}}_{\perp} &= \bm{Q}_{\bar{C}_{\perp}}.
\end{align*}
Then using \eqref{equation:Eigsv}, \eqref{equation:Eigsz} and 
\eqref{equation:corrPCA:EigsResults} in the proof of Theorem \ref{theorem:CorrPCA}, we have that
\begin{align}
\label{equation:CperpHat}
\hat{\bm{C}}_{\perp} &= (n-d)^{1/2}\hat{\bm{W}}^{1/2}\hat{\bar{\bm{C}}}_{\perp} = (n-d)^{1/2}\hat{\bm{W}}^{1/2} \bar{\bm{C}}_{\perp}\hat{\bm{v}} + (n-d)^{1/2}\hat{\bm{W}}^{1/2} \bar{\bm{Q}}_{\perp} \hat{\bm{z}}.
\end{align}
We first see that
\begin{align*}
n^{-1}(n-d)^{1/2} \hat{\bm{v}}^T \bar{\bm{C}}_{\perp}^T \hat{\bm{W}}^{1/2}\bm{W}_g^{-1} \bm{Q}_X^T \bm{e}_g &= n^{-1}(n-d)^{1/2} \bar{\bm{C}}_{\perp}^T \hat{\bm{W}}^{1/2}\bm{W}_g^{-1} \bm{Q}_X^T \bm{e}_g + o_P\left( n^{-1/2}\right)\\
&= n^{-1}\bm{C}_{\perp}^T \bm{W}_g^{-1} \bm{Q}_X^T \bm{e}_g + o_P\left( n^{-1/2}\right).
\end{align*}
To then show \eqref{equation:lemma:ToShow} and complete the proof, we need only show that
\begin{align*}
\norm{\hat{\bm{z}}^T \bar{\bm{Q}}_{\perp}^T \hat{\bm{W}}^{1/2}\bm{W}_g^{-1} \bm{Q}_X^T \bm{e}_g}_2 = o_P\left( 1\right).
\end{align*}
$\hat{\bm{z}}$ is such that
\begin{align*}
\norm{\hat{\bm{z}} - \left(p \gamma_K \right)^{-1/2}\bar{\bm{Q}}_{\perp}^T\hat{\bm{W}}^{-1/2}\tilde{\bm{E}}^T \left\lbrace \left(p \gamma_K \right)^{-1/2} \tilde{\bm{E}}\hat{\bm{W}}^{-1/2}\bar{\bm{C}}_{\perp} + \bar{\bm{L}}\right\rbrace}_2 = o_P\left( n^{-1/2}\right)
\end{align*}
where $\tilde{\bm{E}} = \bm{E}\bm{Q}_X$. Therefore,
\begin{align*}
\hat{\bm{z}}^T \bar{\bm{Q}}_{\perp}^T \hat{\bm{W}}^{1/2}\bm{W}_g^{-1} \bm{Q}_X^T \bm{e}_g =& \underbrace{\left(p \gamma_K \right)^{-1/2} \bar{\bm{L}}^T \tilde{\bm{E}} \bm{Q}_{C_{\perp}} \left(\bm{Q}_{C_{\perp}}^T \hat{\bm{W}}\bm{Q}_{C_{\perp}} \right)^{-1}\bm{Q}_{C_{\perp}}^T \hat{\bm{W}} \bm{W}_g^{-1} \tilde{\bm{e}}_g}_{(1)}\\
&+ \underbrace{\gamma_K^{-1} \left( n-d\right)^{-1/2} \bm{C}_{\perp}^T \hat{\bm{W}}^{-1} \left( p^{-1}\tilde{\bm{E}}^T \tilde{\bm{E}}\right)\bm{Q}_{C_{\perp}} \left(\bm{Q}_{C_{\perp}}^T \hat{\bm{W}}\bm{Q}_{C_{\perp}} \right)^{-1}\bm{Q}_{C_{\perp}}^T \hat{\bm{W}} \bm{W}_g^{-1} \tilde{\bm{e}}_g}_{(2)}\\
&+ o_P(1).
\end{align*}
We now go through each one of these terms.
\begin{enumerate}[label=(\arabic*)]
\item Since this is a continuously differentiable function of $\hat{\bm{W}}$ and $\norm{\hat{\bm{W}} - \bm{W}^*}_2 = O_P\left( n^{-1}\right)$,
\begin{align*}
&\norm{\left(p \gamma_K \right)^{-1/2} \bar{\bm{L}}^T \tilde{\bm{E}} \bm{Q}_{C_{\perp}} \left(\bm{Q}_{C_{\perp}}^T \hat{\bm{W}}\bm{Q}_{C_{\perp}} \right)^{-1}\bm{Q}_{C_{\perp}}^T \hat{\bm{W}} \bm{W}_g^{-1} \tilde{\bm{e}}_g - \left(p \gamma_K \right)^{-1/2} \bar{\bm{L}}^T \tilde{\bm{E}} \bm{Q}_{C_{\perp}} \left(\bm{Q}_{C_{\perp}}^T \bm{W}^*\bm{Q}_{C_{\perp}} \right)^{-1}\bm{Q}_{C_{\perp}}^T \bm{W}^* \bm{W}_g^{-1} \tilde{\bm{e}}_g}_2\\
 = &O_P\left\lbrace n^{1/2}\left(p \gamma_K \right)^{-1/2}\right\rbrace = o_P(1).
\end{align*}
Define the non-random matrix
\begin{align*}
\bm{M}_g = \bm{Q}_{C_{\perp}} \left(\bm{Q}_{C_{\perp}}^T \bm{W}^*\bm{Q}_{C_{\perp}} \right)^{-1}\bm{Q}_{C_{\perp}}^T \bm{W}^* \bm{W}_g^{-1} \in \mathbb{R}^{(n-d) \times (n-d)}.
\end{align*}
and the random vector $\bm{w}_g = \bm{M}_g \tilde{\bm{e}}_g \sim N\left( 0,\bm{M}_g \bm{W}_g \bm{M}_g^T\right)$. Then
\begin{align*}
\left(p \gamma_K \right)^{-1/2} \bar{\bm{L}}^T \tilde{\bm{E}} \bm{w}_g = \left(p \gamma_K \right)^{-1/2} \sum\limits_{h \neq g}\bar{\bm{\ell}}_h \tilde{\bm{e}}_h^T \bm{w}_g + \underbrace{\left(p \gamma_K \right)^{-1/2} \bar{\bm{\ell}}_g \tilde{\bm{e}}_g^T \bm{w}_g}_{O_P\left\lbrace n^{3/2} \left( \gamma_K p\right)^{-1} \right\rbrace = o_P(1)}
\end{align*}
with
\begin{align*}
\E \left\lbrace \left(p \gamma_K \right)^{-1/2} \sum\limits_{h \neq g}\bar{\bm{\ell}}_h \tilde{\bm{e}}_h^T \bm{w}_g\right\rbrace &= \E \left\lbrace \left(p \gamma_K \right)^{-1/2} \sum\limits_{h \neq g}\bar{\bm{\ell}}_h \tilde{\bm{e}}_h^T \bm{w}_g \mid \bm{w}_g\right\rbrace = \bm{0}\\
\V\left\lbrace \left(p \gamma_K \right)^{-1/2} \sum\limits_{h \neq g}\bar{\bm{\ell}}_h \tilde{\bm{e}}_h^T \bm{w}_g \mid \bm{w}_g\right\rbrace &= \left(p \gamma_K \right)^{-1}\sum\limits_{h \neq g} \V\left( \bar{\bm{\ell}}_h \tilde{\bm{e}}_h^T \bm{w}_g \mid \bm{w}_g\right) = \frac{1}{\lambda p} \sum\limits_{h \neq g} \left( \bm{w}_g^T \bm{W}_h \bm{w}_g\right)\bar{\bm{\ell}}_h \bar{\bm{\ell}}_h^T \\
& \preceq c \left(p \gamma_K \right)^{-1} \norm{\bm{w}_g}_2^2\bar{\bm{L}}^T \bar{\bm{L}}
\end{align*}
where $c$ bounds the eigenvalues of $\bm{W}_h$ ($h=1,\ldots,p$) from above. Therefore,
\begin{align*}
\V\left\lbrace \left(p \gamma_K \right)^{-1/2} \sum\limits_{h \neq g}\bar{\bm{\ell}}_h \tilde{\bm{e}}_h^T \bm{w}_g\right\rbrace = O\left\lbrace n \left(p \gamma_K \right)^{-1}\right\rbrace = o(1).
\end{align*}
This shows that $\norm{1.)}_2 = o_P(1)$.

\item For this, we use the same technique to replace $\hat{\bm{W}}$ with $\bm{W}^*$, which only differs from 2.) by $O_P\left(n^{-1/2} \right) = o_P(1)$ in 2-norm. Next, we define non-random the matrix $\bm{A}\in \mathbb{R}^{(n-d) \times K}$ to be a matrix whose columns form an orthonormal basis for the column space of $\bm{C}_{\perp}$. For $\bm{w}_g$ defined above, note that $\bm{A}^T \bm{w}_g = \bm{0}$. Therefore,
\begin{align*}
2.) = \gamma_K^{-1}\bm{A}^T \bm{W}^{*-1}\left( p^{-1}\tilde{\bm{E}}^T \tilde{\bm{E}}\right)\bm{w}_g = \gamma_K^{-1}\bm{A}^T\bm{W}^{*-1}\left( p^{-1}\tilde{\bm{E}}_{-g}^T \tilde{\bm{E}}_{-g}\right)\bm{w}_g + \underbrace{\left( \gamma_K p\right)^{-1}\bm{A}^T \tilde{\bm{e}}_g \tilde{\bm{e}}_g^T \bm{w}_g}_{O_P\left\lbrace n \left( \gamma_K p\right)^{-1}\right\rbrace = o_P(1)}
\end{align*}
where $\tilde{\bm{E}}_{-g}$ is the $\tilde{\bm{E}}$ with the $g^{\text{th}}$ row removed. Note that
\begin{align*}
\E \left\lbrace \bm{A}^T\bm{W}^{*-1}\left( p^{-1}\tilde{\bm{E}}_{-g}^T \tilde{\bm{E}}_{-g}\right)\bm{w}_g \mid \bm{w}_g\right\rbrace &= p^{-1}\bm{A}^T \bm{W}^{*-1} \bm{W}_g \bm{w}_g\\
&\sim p^{-1} N_{K}\left( \bm{0}, \bm{A}^T \bm{W}^{*-1} \bm{W}_g \bm{M}_g \bm{W}_g \bm{M}_g \bm{W}_g \bm{W}^{*-1} \bm{A}\right)
\end{align*}
Since $K$ is fixed and finite, it suffices to assume $K = 1$. If we let $\bm{u} = \bm{W}^{*-1}\bm{A} \in \mathbb{R}^{(n-d) \times K}$ (which has bounded 2-norm), then
\begin{align*}
\V\left\lbrace \bm{A}^T\bm{W}^{*-1}\left( p^{-1}\tilde{\bm{E}}_{-g}^T \tilde{\bm{E}}_{-g}\right)\bm{w}_g \mid \bm{w}_g \right\rbrace &= p^{-2} \sum\limits_{h \neq g} \V\left\lbrace \bm{u}^T \tilde{\bm{e}}_h \tilde{\bm{e}}_h^T \bm{w}_g \mid \bm{w}_g\right\rbrace\\
&= p^{-2} O\left( p \norm{\bm{w}_g}_2^2\right) = O\left( p^{-1}\norm{\bm{w}_g}_2^2 \right).
\end{align*}
This shows that $\norm{2.)}_2 = o_P(1)$, and completes the proof.
\end{enumerate}
\end{proof}

\begin{lemma}
\label{lemma:theorem:CGLS_Omega}
\it{Under the assumptions Theorem \ref{theorem:CGLS}, relation \eqref{equation:GLS:OmegaAsym} holds.}
\end{lemma}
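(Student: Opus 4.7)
The plan is to write $\hat{\bm{\Omega}}_g - \bm{\Omega}_g = (\bm{X}^T\hat{\bm{V}}_g^{-1}\bm{X})^{-1}\bm{X}^T\hat{\bm{V}}_g^{-1}(\hat{\bm{C}} - \bm{C})$ and expand using \eqref{equation:Cestimate} with $\bm{G} = \hat{\bm{V}}$ together with the parallel decomposition $\bm{C} = \bm{X}\bm{\Omega} + \hat{\bm{V}}\bm{Q}_X\hat{\bm{W}}^{-1}\bm{C}_{\perp}$ that follows from \eqref{equation:DataSplitting}. Using $\bm{X}^T\bm{Q}_X = \bm{0}$ and the identity $\hat{\bm{V}}_g^{-1}\hat{\bm{V}} = I + \hat{\bm{V}}_g^{-1}(\hat{\bm{V}} - \hat{\bm{V}}_g)$ collapses the difference to
\begin{align*}
\hat{\bm{\Omega}}_g - \bm{\Omega}_g = (\hat{\bm{\Omega}}_{bc} - \bm{\Omega}) + \bm{M}(\hat{\bm{C}}_{\perp} - \bm{C}_{\perp}), \qquad \bm{M} := (\bm{X}^T\hat{\bm{V}}_g^{-1}\bm{X})^{-1}\bm{X}^T\hat{\bm{V}}_g^{-1}(\hat{\bm{V}} - \hat{\bm{V}}_g)\bm{Q}_X\hat{\bm{W}}^{-1}.
\end{align*}
Lemma \ref{lemma:OmegaWLS} immediately handles the first summand, so the entire task reduces to showing $n^{1/2}\norm{\bm{M}(\hat{\bm{C}}_{\perp}-\bm{C}_{\perp})}_2 = o_P(1)$. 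Two structural facts about $\bm{M}$ will drive everything: submultiplicative bounds give $\norm{\bm{M}}_2 = O_P(n^{-1/2})$, while $\mathrm{rank}(\bm{M}) \leq d$ lets me sharpen $\Tr(\bm{M}\bm{W}_g\bm{M}^T) \leq d\norm{\bm{W}_g}_2\norm{\bm{M}}_2^2 = O_P(n^{-1})$ for each $g$.

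I would then invoke \eqref{equation:Cbarhat} to decompose $\hat{\bm{C}}_{\perp} - \bm{C}_{\perp} = \bm{C}_{\perp}(\hat{\bm{v}} - I_K) + (n-d)^{1/2}\hat{\bm{W}}^{1/2}\bar{\bm{Q}}_{\perp}\hat{\bm{z}}$. The ``rotation'' part is immediate: $\norm{\bm{M}\bm{C}_{\perp}}_2 = O_P(1)$ and \eqref{equation:EigsResults:vhata} combined with item \ref{item:assumptionBasics:np} of Assumption \ref{assumption:Basics} give $\norm{\hat{\bm{v}} - I_K}_2 = o_P(n^{-1/2})$, which yields the required bound. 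For the noise part I would substitute \eqref{equation:Eigsz} and isolate the leading contribution $\hat{\bm{z}} \approx \mu_K^{-1}\bar{\bm{E}}_2^T\bar{\bm{L}}\hat{\bm{v}}$. Unpacking $\bar{\bm{E}}_2 = (\gamma_K p)^{-1/2}\bm{E}_2\hat{\bm{W}}^{-1/2}\bar{\bm{Q}}_{\perp}$ and $\bar{\bm{L}} = n^{1/2}(\gamma_K p)^{-1/2}\bm{L}$ reduces this to bounding $n(\gamma_K p)^{-1}\bm{M}\bm{E}_2^T\bm{L}$; the independence of $\bm{e}_1,\ldots,\bm{e}_p$ combined with the rank-$d$ bound gives
\begin{align*}
\E\norm{\bm{M}\bm{E}_2^T\bm{L}}_F^2 \leq \sum_g \norm{\bm{\ell}_g}_2^2 \Tr(\bm{M}\bm{W}_g\bm{M}^T) = O_P((\gamma_K p/n)\cdot n^{-1}),
\end{align*}
so this piece contributes $O_P(n^{1/2}(\gamma_K p)^{-1/2}) = o_P(1)$ since $\gamma_K \geq c_2^{-1}$ and $n/p\to 0$.

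The hard part is controlling the cross-term $\bar{\bm{E}}_2^T\bar{\bm{E}}_1$ appearing in $\hat{\bm{z}}$. A direct substitution gives
\begin{align*}
(n-d)^{1/2}\hat{\bm{W}}^{1/2}\bar{\bm{Q}}_{\perp}\bar{\bm{E}}_2^T\bar{\bm{E}}_1 = (\gamma_K p)^{-1}\bm{P}_{\perp}(\bm{E}_2^T\bm{E}_2)\hat{\bm{W}}^{-1}\bm{C}_{\perp}, \quad \bm{P}_{\perp} := I - (n-d)^{-1}\bm{C}_{\perp}\bm{C}_{\perp}^T\hat{\bm{W}}^{-1},
\end{align*}
and the naive deterministic mean is $O_P(\gamma_K^{-1})$, which is an order of magnitude too large. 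The key observation is that $\bm{M}' := \bm{M}\bm{P}_{\perp}$ satisfies $\bm{M}'\bm{C}_{\perp} = \bm{0}$ by the normalization $\bm{C}_{\perp}^T\hat{\bm{W}}^{-1}\bm{C}_{\perp} = (n-d)I_K$, so the leading mean $p\delta_*^2\bm{M}'\bm{W}_*\hat{\bm{W}}^{-1}\bm{C}_{\perp}$ collapses to $p\delta_*^2\bm{M}'(\bm{W}_*-\hat{\bm{W}})\hat{\bm{W}}^{-1}\bm{C}_{\perp} = O_P(p/n)$ in view of $\norm{\hat{\bm{W}}-\bm{W}_*}_2 = O_P(n^{-1})$. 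The centered residual $\sum_g\{\bm{M}'\bm{e}_{2g}\bm{e}_{2g}^T - \bm{M}'\bm{W}_g\}\hat{\bm{W}}^{-1}\bm{C}_{\perp}$ has independent Gaussian-product summands; an Isserlis calculation together with $\mathrm{rank}(\bm{M}')\leq d$ and $\norm{\hat{\bm{W}}^{-1}\bm{C}_{\perp}}_F^2 = O(n)$ gives Frobenius-norm variance $O(p\norm{\bm{M}'}_F^2\norm{\hat{\bm{W}}^{-1}\bm{C}_{\perp}}_F^2) = O_P(p)$. Together, the cross-term contributes $O_P((n/(\gamma_K^2 p))^{1/2}) = o_P(1)$ to $n^{1/2}\bm{M}(\hat{\bm{C}}_{\perp}-\bm{C}_{\perp})$ under the stated assumptions, and the residual errors from $\hat{\mu}_k \neq \mu_k$, the Neumann expansion of $(\hat{\mu}_k I - \bar{\bm{E}}_2^T\bar{\bm{E}}_2)^{-1}$, and replacing $\hat{\bm{v}}$ by $I_K$ are strictly smaller order and absorbed by the same bounds.
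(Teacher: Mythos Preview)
Your decomposition is correct and mirrors the paper's exactly: both reduce to controlling $n^{1/2}$ times a bounded matrix applied to $\hat{\bm{C}}_\perp-\bm{C}_\perp$, split the latter via $\hat{\bm{v}}$ and $\hat{\bm{z}}$, and then expand $\hat{\bm{z}}$ through \eqref{equation:Eigsz} into the $\bar{\bm{E}}_2^T\bar{\bm{L}}$ and $\bar{\bm{E}}_2^T\bar{\bm{E}}_1$ pieces. Your observation that $\bm{M}'\bm{C}_\perp=\bm{0}$ kills the $O_P(\gamma_K^{-1})$ mean of the cross term is exactly the mechanism the paper exploits (via $\bm{Q}_{C_\perp}^T\bm{C}_\perp=\bm{0}$ in its parametrization of the same oblique projector).

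There is, however, a genuine gap in the moment calculations. Your matrix $\bm{M}$ (and hence $\bm{M}'$) depends on $\hat{\bm{V}}_g$, which is a function of $\bm{e}_{g_2}$, and on $\hat{\bm{V}},\hat{\bm{W}}$, which are functions of \emph{all} of $\bm{E}_2$. Consequently the identity $\E\norm{\bm{M}\bm{E}_2^T\bm{L}}_F^2=\sum_h\norm{\bm{\ell}_h}_2^2\Tr(\bm{M}\bm{W}_h\bm{M}^T)$ is not valid, and the summands $\bm{M}'\bm{e}_{2h}\bm{e}_{2h}^T\hat{\bm{W}}^{-1}\bm{C}_\perp$ are neither independent across $h$ nor independent of $\bm{M}'$. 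The ``$O_P$'' you write on the right of these expectations is a symptom of this: you are implicitly conditioning on $\bm{M}$, but $\bm{M}$ is not independent of $\bm{E}_2$, so the conditional moments are not the ones you compute.

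The paper closes this gap before doing any variance work: it first uses the Taylor expansion \eqref{equation:VghatInv} to replace $\hat{\bm{V}}_g^{-1}$ by the non-random $\bm{V}_g^{-1}$ plus a perturbation of size $\epsilon_{g,j}=O_P\{n^{1/2}(\gamma_Kp)^{-1/2}+n^{-1/2}\}$, and then replaces $\hat{\bm{V}},\hat{\bm{W}}$ by $\bm{V}_*,\bm{W}_*$ at cost $O_P(n^{-1})$. After these substitutions the analogue of your $\bm{M}$ is a fixed, non-random matrix $n^{-1/2}\bm{V}_g^{-1}\bm{X}(\cdot)$, and only then are the independence and Gaussian-product variance bounds applied. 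Your argument can be repaired in exactly this way (and the substitution errors are absorbed because $\norm{\hat{\bm{C}}_\perp-\bm{C}_\perp}_2=O_P\{n(\gamma_Kp)^{-1/2}\}$), but as written the moment steps do not stand.
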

\begin{proof}
From the expression for $\bm{C}$ and $\hat{\bm{C}}$ in item \ref{item:EstimationProcedure:Omega} of Algorithm \ref{algorithm:EsimationProcedure} and equation \eqref{equation:Cestimate}, we can write $\bm{\Omega}_g$ and $\hat{\bm{\Omega}}_g$ as
\begin{align*}
\bm{\Omega}_g &= \left( \bm{X}^T \hat{\bm{V}}_g^{-1}\bm{X}\right)^{-1}\bm{X}^T \hat{\bm{V}}_g^{-1}\bm{C} = \bm{\Omega} + \left( \bm{X}^T \hat{\bm{V}}_g^{-1}\bm{X}\right)^{-1}\bm{X}^T \hat{\bm{V}}_g^{-1} \hat{\bm{V}}\bm{Q}_{\bm{X}}\hat{\bm{W}}^{-1} \bm{C}_{\perp}\\
\hat{\bm{\Omega}}_g &= \left( \bm{X}^T \hat{\bm{V}}_g^{-1}\bm{X}\right)^{-1}\bm{X}^T \hat{\bm{V}}_g^{-1}\hat{\bm{C}} = \hat{\bm{\Omega}} + \left( \bm{X}^T \hat{\bm{V}}_g^{-1}\bm{X}\right)^{-1}\bm{X}^T \hat{\bm{V}}_g^{-1} \hat{\bm{V}}\bm{Q}_{\bm{X}}\hat{\bm{W}}^{-1} \hat{\bm{C}}_{\perp}.
\end{align*}
By Lemma \ref{lemma:OmegaWLS}, $\norm{\bm{\Omega} - \hat{\bm{\Omega}}}_2 = o_P\left( n^{-1/2}\right)$. Therefore, to prove the current lemma, we need only show that
\begin{align*}
\norm{n^{-1}\bm{X}^T \hat{\bm{V}}_g^{-1} \hat{\bm{V}}\bm{Q}_{\bm{X}}\hat{\bm{W}}^{-1} \bm{C}_{\perp} - n^{-1}\bm{X}^T \hat{\bm{V}}_g^{-1} \hat{\bm{V}}\bm{Q}_{\bm{X}}\hat{\bm{W}}^{-1} \hat{\bm{C}}_{\perp}}_2 = o_P\left( n^{-1/2}\right).
\end{align*}
Just as we did in the proof of Lemma \ref{lemma:theorem:CGLS_L}, we use \eqref{equation:VghatInv} to expand the above equation:
\begin{align*}
&\norm{n^{-1}\bm{X}^T \hat{\bm{V}}_g^{-1} \hat{\bm{V}}\bm{Q}_{\bm{X}}\hat{\bm{W}}^{-1} \bm{C}_{\perp} - n^{-1}\bm{X}^T \hat{\bm{V}}_g^{-1} \hat{\bm{V}}\bm{Q}_{\bm{X}}\hat{\bm{W}}^{-1} \hat{\bm{C}}_{\perp}}_2 \leq \\
& \norm{n^{-1}\bm{X}^T \bm{V}_g^{-1} \hat{\bm{V}}\bm{Q}_{\bm{X}}\hat{\bm{W}}^{-1} \bm{C}_{\perp} - n^{-1}\bm{X}^T \bm{V}_g^{-1} \hat{\bm{V}}\bm{Q}_{\bm{X}}\hat{\bm{W}}^{-1} \hat{\bm{C}}_{\perp}}_2 \\
+ &\sum\limits_{j=1}^b \epsilon_{g,j} \norm{n^{-1}\bm{X}^T \bm{V}_g^{-1}\bm{B}_j \bm{V}_g^{-1}\hat{\bm{V}}\bm{Q}_{\bm{X}}\hat{\bm{W}}^{-1} \bm{C}_{\perp} - n^{-1}\bm{X}^T \bm{V}_g^{-1}\bm{B}_j \bm{V}_g^{-1} \hat{\bm{V}}\bm{Q}_{\bm{X}}\hat{\bm{W}}^{-1} \hat{\bm{C}}_{\perp}}_2 + o_P\left( n^{-1/2}\right)
\end{align*}
where again we will first show
\begin{align}
\label{equation:lemma:Omega}
\norm{n^{-1}\bm{X}^T \bm{V}_g^{-1} \hat{\bm{V}}\bm{Q}_{\bm{X}}\hat{\bm{W}}^{-1} \bm{C}_{\perp} - n^{-1}\bm{X}^T \bm{V}_g^{-1} \hat{\bm{V}}\bm{Q}_{\bm{X}}\hat{\bm{W}}^{-1} \hat{\bm{C}}_{\perp}}_2 = o_P\left( n^{-1/2}\right).
\end{align}
An identical argument can then be applied to show that
\begin{align*}
\norm{n^{-1}\bm{X}^T \bm{V}_g^{-1}\bm{B}_j \bm{V}_g^{-1}\hat{\bm{V}}\bm{Q}_{\bm{X}}\hat{\bm{W}}^{-1} \bm{C}_{\perp} - n^{-1}\bm{X}^T \bm{V}_g^{-1}\bm{B}_j \bm{V}_g^{-1} \hat{\bm{V}}\bm{Q}_{\bm{X}}\hat{\bm{W}}^{-1} \hat{\bm{C}}_{\perp}}_2 = o_P\left( n^{-1/2}\right).
\end{align*}
\indent First, by the expression for $\hat{\bm{C}}_{\perp}$ in \eqref{equation:CperpHat},
\begin{align*}
n^{-1}\bm{X}^T \bm{V}_g^{-1} \hat{\bm{V}}\bm{Q}_{\bm{X}}\hat{\bm{W}}^{-1} \hat{\bm{C}}_{\perp} =& n^{-1}\left(n-d\right)^{1/2}\bm{X}^T \bm{V}_g^{-1} \hat{\bm{V}}\bm{Q}_{\bm{X}}\hat{\bm{W}}^{-1/2} \bar{\bm{C}}_{\perp} \hat{\bm{v}}\\
&+ n^{-1}\left(n-d\right)^{1/2}\bm{X}^T \bm{V}_g^{-1} \hat{\bm{V}}\bm{Q}_{\bm{X}}\hat{\bm{W}}^{-1/2} \bar{\bm{Q}}_{\perp} \hat{\bm{z}}
\end{align*}
To show \eqref{equation:lemma:Omega},
\begin{align*}
\norm{n^{-1}\bm{X}^T \bm{V}_g^{-1} \hat{\bm{V}}\bm{Q}_{\bm{X}}\hat{\bm{W}}^{-1} \bm{C}_{\perp} - n^{-1}\left(n-d\right)^{1/2}\bm{X}^T \bm{V}_g^{-1} \hat{\bm{V}}\bm{Q}_{\bm{X}}\hat{\bm{W}}^{-1/2} \bar{\bm{C}}_{\perp} \hat{\bm{v}}}_2 = o_P\left( n^{-1/2}\right)
\end{align*}
by \eqref{equation:CperpHat} and because $\norm{\hat{\bm{v}} - I_K}_2 = o_P\left( n^{-1/2} \right)$. Therefore, we need only show that
\begin{align*}
\norm{\bm{u}^T \hat{\bm{V}}\bm{Q}_{\bm{X}}\hat{\bm{W}}^{-1/2} \bar{\bm{Q}}_{\perp} \hat{\bm{z}}}_2 = o_P\left( n^{-1/2}\right)
\end{align*}
where $\bm{u} = n^{-1/2}\bm{V}_g^{-1} \bm{X} \in \mathbb{R}^{n \times d}$ has bounded 2-norm. By the expression for $\hat{\bm{z}}$ in \eqref{equation:Eigsz},
\begin{align*}
\bm{u}^T \hat{\bm{V}}\bm{Q}_{\bm{X}}\hat{\bm{W}}^{-1/2} \bar{\bm{Q}}_{\perp} \hat{\bm{z}} =& \left( \gamma_K p\right)^{-1/2}\bm{u}^T \hat{\bm{V}}\bm{Q}_{\bm{X}}\bm{Q}_{\bm{C}_{\perp}}\left( \bm{Q}_{\bm{C}_{\perp}}^T \hat{\bm{W}} \bm{Q}_{\bm{C}_{\perp}}\right)^{-1} \bm{Q}_{\bm{C}_{\perp}}^T \tilde{\bm{E}}^T \bar{\bm{L}}\\
&+ \gamma_K^{-1}(n-d)^{-1/2}\bm{u}^T \hat{\bm{V}}\bm{Q}_{\bm{X}}\bm{Q}_{\bm{C}_{\perp}}\left( \bm{Q}_{\bm{C}_{\perp}}^T \hat{\bm{W}} \bm{Q}_{\bm{C}_{\perp}}\right)^{-1} \bm{Q}_{\bm{C}_{\perp}}^T \left(p^{-1}\tilde{\bm{E}}^T \tilde{\bm{E}} \right)\hat{\bm{W}}^{-1}\bm{C}_{\perp}\\
&+ o_P\left( n^{-1/2}\right)
\end{align*}
where $\tilde{\bm{E}} = \bm{E}\bm{Q}_{\bm{X}}$. Since $\norm{\hat{\bm{V}} - \bm{V}^*}_2, \norm{\hat{\bm{W}} - \bm{W}^*}_2 = O_P\left( n^{-1}\right)$, we can use identical techniques used in the proof of Lemma \ref{lemma:theorem:CGLS_L} to show that
\begin{align*}
\norm{\bm{u}^T \hat{\bm{V}}\bm{Q}_{\bm{X}}\hat{\bm{W}}^{-1/2} \bar{\bm{Q}}_{\perp} \hat{\bm{z}}}_2 = O_P\left\lbrace n^{-1} + \left( \gamma_K p\right)^{-1/2}\right\rbrace = o_P\left( n^{-1/2}\right)
\end{align*}
which completes the proof.
\end{proof}

\indent Going back to the expression for $\hat{\bm{\beta}}_g$ in \eqref{equation:BetaGLSExact}, Lemmas \ref{lemma:theorem:CGLS_L} and \ref{lemma:theorem:CGLS_Omega} and the fact that $\norm{\hat{\bm{V}}_g - \bm{V}_g}_2 = o_P(1)$ shows that
\begin{align*}
n^{1/2}\left( \hat{\bm{\beta}}_g-\bm{\beta}_g \right) &= n^{1/2}\bm{\Omega}_g \left( \bm{\ell}_g - \hat{\bm{\ell}}_g \right) + n^{1/2}\left( \bm{\Omega}_g - \hat{\bm{\Omega}}_g \right)\hat{\bm{\ell}}_g + n^{1/2}\left( \bm{X}^T \bm{V}_g^{-1}\bm{X}\right)^{-1}\bm{X}^T \bm{V}_g^{-1} \bm{e}_g\\
&= \bm{F} + o_P(1)
\end{align*}
where
\begin{align*}
\bm{F} \sim N\left\lbrace \bm{0}, \left( n^{-1}\bm{X}^T \bm{V}_g^{-1}\bm{X}\right)^{-1} + \left( \bm{X}^T \bm{V}_g^{-1}\bm{X}\right)^{-1}\bm{X}^T \bm{V}_g^{-1} \bm{C} \left( n^{-1}\bm{C}_{\perp}^T \bm{W}_g^{-1}\bm{C}_{\perp} \right)^{-1}\bm{C}^T \bm{V}_g^{-1}\bm{X}\left( \bm{X}^T \bm{V}_g^{-1}\bm{X}\right)^{-1} \right\rbrace.
\end{align*}
This shows \eqref{equation:BetaAsym} and completes the proof.

\end{proof}

\end{document}